\newcommand\prob[3]{
  \begin{description}
    \item[\it Name.] #1
    \vspace*{-2mm}
    \item[\it Instance.] #2
    \vspace*{-2mm}
    \item[\it Output.] #3
  \end{description}
}
\newtheorem{theorem}{Theorem}
\newtheorem{lemma}[theorem]{Lemma}
\newtheorem{corollary}[theorem]{Corollary}
\newtheorem{conjecture}[theorem]{Conjecture}
\theoremstyle{definition}
\newtheorem{definition}[theorem]{Definition}
\newtheorem{observation}[theorem]{Observation}
\title {Counting $4\times 4$ Matrix Partitions of Graphs\thanks{
Final author-prepared manuscript
 of  {\tt http://dx.doi.org/10.1016/j.dam.2016.05.001}.
    The research leading to these results has received funding from
    the European Research Council under the European Union's Seventh
    Framework Programme (FP7/2007--2013) ERC grant agreement
    no.\ 334828. The paper reflects only the authors' views and not
    the views of the ERC or the European Commission. The European
    Union is not liable for any use that may be made of the
    information contained therein.}}
\author{Martin Dyer\thanks{School of Computing, University of Leeds,
    UK.} \and Leslie Ann Goldberg\thanks{Department of Computer
    Science, University of Oxford, UK.} \and David
    Richerby\footnotemark[3]}
 \date {20 June 2016}
\newcommand{\calP}{\mathcal{P}}
\newcommand{\Kkplus}{\Gamma^{1}_k}
\newcommand{\Kkbar}{\Gamma^{0}_k}
\newcommand{\badones}{\mathcal{E}}
\newcommand{\ourinst}{J}
\def\numpairs{\Delta}
\newcommand\FP{\ensuremath{\mathrm{FP}}}
\newcommand\NP{\ensuremath{\mathrm{NP}}}
\newcommand\coNP{\ensuremath{\mathrm{co}\text{-}\NP}}
\newcommand{\numP}{\ensuremath{\mathrm{\#P}}}
\def\IS{\#\text{IS}}
\def\Clique{\#\text{Clique}}
\newcommand\nListPartitions[1]{{\sc \#List-$#1$-partitions}}
\newcommand\nPartitions[1]{{\sc \#$#1$-partitions}}
\newcommand\powerset[1]{{\calP(#1)}}
\newcommand\lists{\ensuremath{\mathcal L}}
\newcommand{\mtwo}[4]{\left(\begin{smallmatrix}{#1} & {#2}\\ {#3} & {#4}\end{smallmatrix}\right)}
\DeclareRobustCommand{\stirnum}{\genfrac\{\}{0pt}{}}
\newcommand{\ffact}[2]{(#1)_{#2}}
\begin{document}
\maketitle{}

\begin{abstract}
    Given a symmetric matrix $M\in \{0,1,*\}^{D\times D}\!$, an
    $M$-partition of a graph~$G$ is a function from $V(G)$ to~$D$ such
    that no edge of~$G$ is mapped to a $0$ of~$M$ and no non-edge to
    a~$1$.  We give a computer-assisted proof that, when $|D|=4$, the
    problem of counting the $M$-partitions of an input graph is either
    in \FP{} or is \numP{}-complete.  Tractability is proved by
    reduction to the related problem of counting list $M$-partitions;
    intractability is shown using a gadget construction and
    interpolation.  We use a computer program to determine which of
    the two cases holds for all but a small number of matrices, which
    we resolve manually to establish the dichotomy.  We conjecture
    that the dichotomy also holds for $|D|>4$.  More
    specifically, we conjecture that, for any symmetric matrix
    $M\in\{0,1,*\}^{D\times D}\!$, the complexity of counting
    $M$-partitions is the same as the related problem of counting list
    $M$-partitions.
\end{abstract}

\section{Introduction}

Let $M$ be a symmetric matrix in $\{0,1,*\}^{D\times D}\!$. An
\emph{$M$-partition} of an undirected graph $G=(V,E)$ is a
partition of $V$ into parts labeled by the elements of~$D$ (some of which may be empty). The partition is represented as a
function
$\sigma\colon V\to D$
where $\sigma(v)$ is the part of vertex~$v$.
It satisfies the following property: For all pairs of distinct vertices $u$ and~$v$,
\begin{itemize}
\item $M_{\sigma(u),\sigma(v)}\in \{1,*\}$ if $(u,v)\in E$ and
\item $M_{\sigma(u),\sigma(v)}\in \{0,*\}$ if $(u,v)\not\in E$.
\end{itemize}
Thus, if $M_{i,j}=0$, no edges are permitted between vertices in parts
$i$ and~$j$ and, if $M_{i,j}=1$, then all   edges must be
present between the two parts.  If $M_{i,j}=*$, there is no
restriction on edges between parts $i$ and~$j$.
Note that self-loops play no role --- the
property  applies only to pairs of distinct vertices~$u$
and~$v$.

$M$-partitions were introduced by Feder, Hell, Klein and
Motwani~\cite{FHKM1999:Partitions, FHKM2003:Partitions} to study graph
partition problems arising in the proof of the strong perfect graph
conjecture, such as recognising skew cutsets, clique-cross partitions,
two-clique cutsets and Winkler partitions.  A skew cutset of a
connected graph~$G=(V,E)$ is a pair of disjoint, non-empty sets
$A,B\subset V$ such that $A\cup B$ is a cutset (deleting the vertices
in $A$ and~$B$ disconnects the graph) and $G$~contains every possible
edge between $A$ and~$B$.  Skew cutsets correspond to $M$-partitions
for
\begin{equation*}
        M\;=\;\bordermatrix{
                      & A & B & C & D \cr
                  A\; & * & 1 & * & * \cr
                  B\; & 1 & * & * & * \cr
                  C\; & * & * & * & 0 \cr
                  D\; & * & * & 0 & *
            }\,.
\end{equation*}
The rows (and columns) correspond to parts $A$, $B$, $C$ and~$D$,
respectively.  Consider an $M$-partition in which every part is
non-empty. $M_{A,B}=1$ so $G$~must contain every   edge between
those two parts.  The rest of the graph must be assigned to parts $C$
and~$D$ but, with no edges allowed between those parts, each of them
must be a non-empty union of components of $G-(A\cup B)$.  Therefore,
the partition corresponds to a skew cutset.  Clique-cross partitions,
two-clique cutsets and Winkler partitions also correspond to
$M$-partition problems for $4\times 4$ matrices~$M$; see
\cite{FHKM2003:Partitions} for both the definition of these problems
and the corresponding matrices.

We study the problem of counting $M$-partitions, which was introduced by Hell,
Hermann and Nevisi~\cite{HHN}.
\prob{\nPartitions{M}.}
 {A graph $G$.}
  {$Z_M(G)$, the number of $M$-partitions of $G$.}
Note that the matrix~$M$ is considered as a parameter and is not part
of the input.  For the decision problem of determining whether an
$M$-partition of some graph exists, it is conventional to require
every part to be non-empty since, otherwise, the problem is trivial
whenever there is a $*$ on the diagonal (as is the case above).
Counting, however, includes all $M$-partitions of the graph, including
those where some parts may be empty.  Hell, Hermann and Nevisi~\cite{HHN} show
that, for any $2\times 2$ or $3\times 3$ matrix $M$, the problem
\nPartitions{M} is either in \FP{} or is \numP{}-complete.  Our main
result is an extension of this dichotomy to $4\times 4$ matrices.

\begin{theorem}
\label{thm:dichotomy}
    Let $M$ be a symmetric matrix in $\{0,1,*\}^{4\times 4}\!$.
    Then
    \nPartitions{M} is either in \FP{} or is \numP{}-complete.
\end{theorem}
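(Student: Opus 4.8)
The plan is the following. Membership of \nPartitions{M} in \numP{} is immediate for every $M$: an $M$-partition is a witness of size linear in the input that can be verified in polynomial time, so the entire content of the theorem is the dichotomy, and the argument naturally splits into a tractable side and a hard side once we have reduced to a finite computation. Two matrices related by a simultaneous permutation of rows and columns define literally the same counting problem, and swapping the roles of $0$ and $1$ throughout $M$ corresponds to replacing the input graph by its complement, so that $Z_M(G)=Z_{M'}(\overline G)$ for the swapped matrix $M'$; hence it suffices to classify one representative from each orbit of the $3^{10}$ symmetric matrices in $\{0,1,*\}^{4\times4}$ under this group of order $48$, a large but enumerable list.

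For a representative placed on the tractable side I would argue through the list version of the problem. In \nListPartitions{M} every vertex additionally carries a list of permitted parts, and \nPartitions{M} is precisely the special case in which every list equals $D$; so any algorithm for \nListPartitions{M} also solves \nPartitions{M}, via the identity reduction. Thus for every $M$ for which \nListPartitions{M} is known to lie in \FP{} --- membership being decidable from $M$ through the known classification of that problem --- we obtain \nPartitions{M}$\in\FP$, and the computer's task here is just to evaluate that criterion on each representative.

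For the remaining representatives the goal is \numP{}-hardness, obtained by a gadget construction together with polynomial interpolation, reducing from a problem already known to be hard: \nListPartitions{M} itself in those cases where its classification makes it \numP{}-complete, or a standard hard counting problem such as $\IS$ or $\Clique$. One attaches small gadget graphs to the vertices of an input graph; a gadget contributes, to each $M$-partition of the enlarged graph, a number of internal extensions that depends only on which part the attachment vertex lies in --- its \emph{signature}. Taking several copies of a gadget makes the count of the enlarged graph a polynomial (in the number of copies) of degree at most the number of vertices, whose coefficients are counts of $M$-partitions stratified according to how the original vertices are distributed among groups of parts. Evaluating \nPartitions{M} on polynomially many such enlarged graphs and interpolating recovers these stratified counts, and if the gadget signatures separate the parts one needs to distinguish, this suffices to simulate the lists --- or the combinatorial structure --- of the source problem. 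The computer searches a bounded family of candidate gadgets, computes their signatures, and checks that the associated linear systems are invertible.

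The main obstacle is the hard side, in two respects. First, there is no a~priori bound guaranteeing that a usable gadget exists, or that a bounded search will find one, so the automated search is necessarily incomplete; a small number of matrices survive it and must be handled by hand, either through a cleverer gadget or through an ad~hoc reduction, and this is where the genuine combinatorial work lies. Second, one must verify that the reductions are honest polynomial-time Turing reductions: the number of interpolation points, and hence of oracle calls, must stay polynomial, and the linear systems must be \emph{provably} invertible rather than merely invertible on the instances the computer happened to examine; the tractable side, moreover, relies on the completeness of the classification of \nListPartitions{M} over $4\times4$ matrices. Carrying this out uniformly over all representatives, and cleanly resolving the handful of leftover matrices, is the crux of the proof.
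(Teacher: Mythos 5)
Your overall architecture matches the paper's: reduce to one representative per orbit under permutation and $0/1$-swap (the paper's $\approx$-equivalence), get tractability by the trivial reduction to \nListPartitions{M} and its known classification, get hardness by gadgets plus interpolation, run a computer search, and finish a handful of survivors by hand. But the central mechanism you propose for the hard side does not work for $M$-partitions. You assume a gadget attached at a vertex contributes ``a number of internal extensions that depends only on which part the attachment vertex lies in.'' That locality is exactly what fails when $M$ is impure: if a gadget vertex is placed in part $i$ and some \emph{distant} graph vertex is in part $j$ with $M_{i,j}=1$, the partition is invalid because the required edge is absent, so the gadget's contribution depends on the global set of parts used, not on the attachment vertex alone. (The paper flags precisely this: the Dyer--Greenhill-style techniques ``do not seem to be applicable.'') Since every matrix that survives the tractable test is impure, your vertex-gadget/signature interpolation collapses on all the cases where it is needed.

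The paper's workaround is to make the gadget interaction uniform: the gadget is a single clique or independent set $\Gamma^\tau_k$ attached to \emph{all} of $G$ (complete join) or to \emph{none} of it (disjoint union). Then the count decomposes as $\sum_{S\subseteq D} Z^S_M(\Gamma^\tau_k)\,Z_{M|_{E^\pi(S)}}(G)$, where $S$ is the set of parts the gadget occupies and $E^\pi(S)$ the parts still available to $G$; one interpolates on the gadget \emph{size} $k$. Making this work requires two lemmas you do not supply and only gesture at as ``must be provably invertible'': an exact formula showing $Z^S_M(\Gamma^\tau_k)$ is either identically $0$ or equals $f_{\ell,s}(k)=\ffact{k}{\ell}(s-\ell)!\stirnum{k-\ell}{s-\ell}$ for $k>|D|$, and a proof that these finitely many functions of $k$ are linearly independent, so the interpolation matrix has full rank. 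Even then one can only recover certain linear combinations of the $Z_{M'}(G)$, which is why the hardness criterion asks for a unique hard submatrix per coefficient (or the special $\IS$/$\Clique$ pair), and why one residual matrix needs an extra equation from a modified instance. A smaller point: deciding the tractability criterion is not the routine evaluation you suggest; the paper needs a separate sufficient condition (its Lemma on doubletons) because exhaustively searching for derectangularising sequences is infeasible.
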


Thus, we completely resolve the complexity of counting $M$-partitions
for $4\times 4$ matrices, including all the examples above.

We explain the criterion that determines whether \nPartitions{M} is in
\FP{} or \numP{}-complete for a given symmetric $4\times 4$ matrix~$M$ in the next section.
Doing this requires the related concept of \emph{list
  $M$-partitions}, also due to Feder et al.~\cite{FHKM2003:Partitions}.
Here, each vertex of the input graph comes with a list of parts in
which it is allowed to be placed.  More formally, the input to the
problem is a graph $G=(V,E)$ and a function $L\colon V\to
\powerset{D}$, where $\powerset{\cdot}$ denotes the powerset.  An
$M$-partition $\sigma$ of~$G$ \emph{respects} the function~$L$ if
$\sigma(v)\in L(v)$ for all vertices $v\in V$.
The counting list $M$-partitions problem is defined as follows.
\prob{\nListPartitions{M}.}
 {A graph $G$~and a function $L\colon V(G)\to\powerset{D}$.}
  {The number of $M$-partitions of~$G$ that respect~$L$.}
The complexity of \nListPartitions{M} for all symmetric, square $\{0,1,*\}$-matrices was recently determined by
G\"obel, Goldberg, McQuillan, Richerby and Yamakami~\cite{GGMRY}:
depending on the structure of~$M$, it is either in \FP{} or
is \numP{}-complete.

The \nPartitions{M} problem without lists is the special case of
\nListPartitions{M} where $L(v)=D$ for every vertex~$v$.
Thus, there is a trivial polynomial-time Turing reduction from
\nPartitions{M} to \nListPartitions{M}. It is not known whether there is a polynomial-time Turing reduction in the other direction. As such, the dichotomy for counting
list $M$-partitions does not necessarily translate into a dichotomy
for counting $M$-partitions without lists.

$M$-partitions are also known as trigraph homomorphisms. Trigraphs are a generalisation of graphs, introduced by Chudnovsky~\cite{Ch}, which allow $*$-edges.
Thus trigraph homomorphisms are
a generalisation of the well-known graph homomorphism problem~\cite{HN2004:Homomorphisms}.
Dyer and Greenhill~\cite{DG} showed that, for any fixed graph~$H$, the problem
of counting homomorphisms from an input graph~$G$ to~$H$ is either
in \FP{} or is \numP{}-complete, depending on the structure
of~$H$. The only polynomial-time cases are those where every component
of~$H$ is either a complete graph with a self-loop on every vertex or
a complete bipartite graph with no self-loops.   The
algorithm for the polynomial-time graph homomorphism cases is easily adapted to respect
lists so, for any graph~$H$, the problems of counting homomorphisms
to~$H$ with and without lists have the same complexity~\cite{HNlist}.

We explain the criterion for the \nListPartitions{M} dichotomy
from~\cite{GGMRY} in the following section.  It is
more complex than the criterion for graph homomorphisms, and so are
the algorithms for the polynomial-time cases.  Nonetheless, for
every symmetric matrix~$M$ of size up to $4\times 4$, it is true
that \nPartitions{M} and \nListPartitions{M} have the same
complexity.  We conjecture that this holds in general.

\begin{conjecture}
\label{conj:dichotomy}
    Let $M$ be a symmetric matrix in $\{0,1,*\}^{D\times D}\!$.
    Then
    \nPartitions{M} and \nListPartitions{M} have the same complexity.
\end{conjecture}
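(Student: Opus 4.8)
The conjecture is open for $|D|>4$, so I describe the strategy I would use, which in particular establishes Theorem~\ref{thm:dichotomy}. One direction is immediate: \nPartitions{M} is exactly the restriction of \nListPartitions{M} to instances in which every list equals $D$, so there is a trivial polynomial-time Turing reduction from \nPartitions{M} to \nListPartitions{M}. Hence, whenever the dichotomy of \cite{GGMRY} places \nListPartitions{M} in \FP{}, the problem \nPartitions{M} is also in \FP{} and the conjecture holds for that $M$. It remains to prove the converse on the hard side: \emph{if \nListPartitions{M} is \numP{}-complete then so is \nPartitions{M}.}

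For this I would give a Turing reduction to \nPartitions{M} from \nListPartitions{M} itself (which is \numP{}-complete by \cite{GGMRY}), simulating the lists with gadgets. For each list $L\subseteq D$ that needs to be simulated, design a gadget $\Gamma_L$ --- a graph with one distinguished vertex $r_L$ --- such that the number of $M$-partitions of $\Gamma_L$ in which $r_L$ receives colour $d$ is a weight $w_L(d)$ that is large for $d\in L$ and small (ideally $0$) for $d\notin L$. Given an instance $(G,L)$ of \nListPartitions{M}, form a simple graph $G_t$ by attaching $t$ fresh disjoint copies of $\Gamma_{L(v)}$ to each vertex $v$, identifying their roots with $v$. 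Then, writing $P_\sigma=\prod_{v}w_{L(v)}(\sigma(v))$,
\[
  Z_M(G_t)\;=\;\sum_{\sigma}\prod_{v\in V(G)}w_{L(v)}\bigl(\sigma(v)\bigr)^{t}\;=\;\sum_{\sigma}P_\sigma^{\,t},
\]
the sum being over all $M$-partitions $\sigma$ of $G$. This is a sum of exponentials in $t$, so evaluating $Z_M(G_t)$ for polynomially many values of $t$ and inverting the resulting Vandermonde system recovers $\lvert\{\sigma:P_\sigma=p\}\rvert$ for every value $p$ that occurs --- provided the occurring values of $P_\sigma$ are pairwise distinct. If the gadgets are chosen so that $\sigma$ respects $L$ exactly when $P_\sigma$ attains a prescribed (e.g.\ maximal) value, then the count of $L$-respecting $M$-partitions is one of the recovered quantities, which establishes \numP{}-hardness of \nPartitions{M}.

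For $|D|=4$ the strategy is made effective by a computer search. For each of the finitely many symmetric matrices $M\in\{0,1,*\}^{4\times 4}$ (up to permuting $D$), the program first uses the criterion of \cite{GGMRY} to decide whether \nListPartitions{M} is in \FP{} or \numP{}-complete; in the latter case it searches for gadgets $\Gamma_L$ and evaluation points making the interpolation above non-degenerate, thereby certifying that \nPartitions{M} is \numP{}-complete. Any small residue of matrices that the program cannot classify is then handled by hand --- by exhibiting a tailored gadget, giving a direct reduction, or reducing the matrix to an already-classified smaller one.

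The principal difficulty, and the reason the statement is only a conjecture for $|D|>4$, is producing the gadgets uniformly: for every \numP{}-hard $M$ one needs gadgets whose weight functions both separate the intended list and yield a non-degenerate Vandermonde system, and no construction valid for all $D$ is known. Several subsidiary points also need care. Colours that behave identically in every $M$-partition cannot be told apart by any gadget, so lists should only be simulated up to this equivalence, and one must check that the \cite{GGMRY} hardness respects it. All constructed graphs must remain simple; only polynomially many evaluation points may be used; and the integers arising must have polynomially bounded bit-length, so that the interpolation step runs in polynomial time. Finally, when every realisable weight $w_L(d)$ is forced to be nonzero outside $L$, the various $P_\sigma$ may collide in ways that must still leave the desired coefficient identifiable.
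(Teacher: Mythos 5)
Your easy direction matches the paper's (and is all the paper itself claims in that direction). The hard direction, however, rests on an identity that is false for impure $M$, and its failure is precisely the obstruction that makes this statement a conjecture rather than a theorem. In an $M$-partition, \emph{non-edges} are constrained as well as edges: if $M_{i,j}=1$, every pair of vertices placed in parts $i$ and $j$ must be adjacent. When you attach $t$ disjoint copies of $\Gamma_{L(v)}$ to each vertex $v$, the vertices of a gadget copy at $v$ are non-adjacent to the gadget copies at every other vertex and to all of $G$ except $v$, so they are constrained, through the $0$ and $1$ entries of $M$, by the colours of \emph{all} of those vertices, not only by $\sigma(v)$. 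Consequently $Z_M(G_t)$ does not factor as $\sum_\sigma\prod_v w_{L(v)}(\sigma(v))^t$; that product formula holds only when $M$ is pure with no $1$s, i.e.\ exactly the homomorphism case where Dyer--Greenhill-style thickening already works. Your proposal is, in essence, the ``routine extension of the methods of Dyer and Greenhill'' that the paper explicitly warns does not apply to $M$-partitions, and it breaks at your first displayed equation.

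The paper's actual route for $4\times 4$ matrices avoids reducing from the list problem altogether. It attaches a \emph{single} gadget --- a clique or independent set on $k$ vertices --- to the whole of $G$, either with no edges to $G$ or with all edges to $G$ (the only two homogeneous attachments, which keep the non-edge constraints analysable). The number of $S$-surjective $M$-partitions of the gadget is an explicit function $f_{\ell,s}(k)$; these functions are linearly independent in $k$, so interpolation over the gadget size recovers, for each pair $(\ell,s)$, a fixed linear combination of $Z_{M'}(G)$ over proper principal submatrices $M'$. Hardness is then obtained by reduction from \nPartitions{M'} for a hard \emph{proper submatrix} $M'$ (known hard from the $2\times 2$ and $3\times 3$ dichotomy of Hell, Hermann and Nevisi), not from \nListPartitions{M}. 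A computer check covers all $4\times 4$ matrices except six equivalence classes, which are settled by bespoke constructions. Neither your gadgets nor the paper's are known to work uniformly for $|D|>4$, which is why the general statement remains open.
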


Proving this conjecture appears considerably more difficult than
routinely extending the methods of Dyer and Greenhill~\cite{DG}, or even those
of Bulatov's far-reaching generalisation~\cite{Bul}. The difficulty arises from
the fact that some of the most powerful techniques used in proving those dichotomies
do not seem to be applicable to the $M$-partitions problem.

\subsection{The \nListPartitions{M} dichotomy}

We now describe the complexity dichotomy for the \nListPartitions{M}
problem, from~\cite{GGMRY}.  The definitions and observation in this
section are taken from that paper.

\begin{definition} For any symmetric $M\in\{0,1,*\}^{D\times D}$ and any
sets $X,Y\in\powerset{D}$, define the binary relation
\begin{equation*}
    H^M_{X,Y}=\{(i,j)\in X\times Y\mid M_{i,j}=*\}\,.
\end{equation*}
\end{definition}

The following  notion
of rectangularity was introduced by Bulatov and Dalmau~\cite{BD}.
\begin{definition}
A relation $R\subseteq D\times D'$ is \emph{rectangular} if, for all $i,j\in D$, and $i'\!,j'\in D'\!$,
\begin{equation*}
(i,i'),(i,j'),(j,i')\in R\implies (j,j')\in R\,.
\end{equation*}
\end{definition}

 \begin{definition}
Given index sets $X$ and~$Y$, a matrix $M\in\{0,1,*\}^{X\times Y}$ is
\emph{pure} if it has no $0$s or has no~$1$s.  $M$~is
\emph{$*$-rectangular} if $H^M_{X,Y}$ is rectangular.
\end{definition}

If $M$ is a pure matrix with no $1$s, then $Z_M(G)$ is the number of
homomorphisms from the graph~$G$ to the graph whose adjacency matrix
is obtained from~$M$ by changing all $*$s to~$1$s.  If $M$~is pure
with no $0$s, $Z_M(G)$ is the number of homomorphisms of the
complement of~$G$ to the graph whose adjacency matrix is obtained
from~$M$ by changing all $1$s to~$0$s and then changing all $*$s
to~$1$s.  Thus, we sometimes refer to pure matrices as
\emph{homomorphism matrices}.

\begin{definition}
For any symmetric matrix
$M\in\{0,1,*\}^{D\times D}$, a
set $\lists \subseteq \powerset{D}$
is \emph{$M$-purifying} if, for all $X,Y\in\lists$,
$M|_{X\times Y}$ is pure, where $M|_{X\times Y}$ is the submatrix
formed by restricting to rows in~$X$ and columns in~$Y$.
\end{definition}

\begin{definition}
\label{def:derect}
An \emph{\lists{}-$M$-derectangularising sequence} of length $k$ is a
sequence $D_1,\dots,D_k$  with each $D_i \in\lists$ such that:
\begin{itemize}
\item  $\{D_1,\ldots,D_k\}$ is $M$-purifying and
\item the relation $H^M_{D_1,D_2} \circ H^M_{D_2, D_3} \circ \dots \circ
H^M_{D_{k-1}, D_k}$ is not rectangular.
\end{itemize}
For brevity,
we refer to a
$\powerset{D}$-$M$-derectangularising sequence
as
an \emph{$M$-derectangularising sequence}
or as a derectangularising sequence of~$M$.
\end{definition}

\begin{observation}
\label{obs:derect-doubletons}
If there is an $i\in \{1,\ldots,k\}$ such that
$D_i=\emptyset$
then the relation $H=H^M_{D_1,D_2} \circ H^M_{D_2, D_3} \circ \dots \circ
H^M_{D_{k-1}, D_k}$
is the empty relation, which is trivially
rectangular.
If there is an~$i$ such that $|D_i|=1$ then
$H$ is a Cartesian product, and is therefore rectangular.
It follows   that $|D_i|\geq 2$ for each~$i$ in a derectangularising sequence.
\end{observation}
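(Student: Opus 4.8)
The plan is to verify the two displayed facts in turn and then read off the stated bound. Both are routine manipulations of composition of binary relations; the only point that needs a little attention is the behaviour at the two ends of the sequence.

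First suppose $D_i=\emptyset$ for some~$i$. Since $H^M_{X,Y}\subseteq X\times Y$ always, one of the factors of $H=H^M_{D_1,D_2}\circ\cdots\circ H^M_{D_{k-1},D_k}$ — namely $H^M_{D_{i-1},D_i}$ when $i>1$, and $H^M_{D_1,D_2}$ when $i=1$ — is contained in a Cartesian product one side of which is empty, and is therefore itself the empty relation. Composing any relation with the empty relation yields the empty relation, so $H=\emptyset$. The empty relation is rectangular: the hypothesis of the defining implication, namely that $(p,p')$, $(p,q')$ and $(q,p')$ all lie in~$R$, can never be satisfied when $R=\emptyset$, so the implication holds vacuously.

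Next suppose $|D_i|=1$, say $D_i=\{a\}$. When $1<i<k$, associativity of composition lets us write $H=A\circ B$ with $A=H^M_{D_1,D_2}\circ\cdots\circ H^M_{D_{i-1},D_i}\subseteq D_1\times\{a\}$ and $B=H^M_{D_i,D_{i+1}}\circ\cdots\circ H^M_{D_{k-1},D_k}\subseteq\{a\}\times D_k$; since $a$ is the only candidate for the intermediate element, $(x,z)\in H$ iff $(x,a)\in A$ and $(a,z)\in B$, and hence $H=S\times T$ with $S=\{x\in D_1 : (x,a)\in A\}$ and $T=\{z\in D_k : (a,z)\in B\}$. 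In the boundary cases $i=1$ and $i=k$ the relation $H$ already has, respectively, a one-element domain or a one-element codomain, so directly $H=\{a\}\times T$ or $H=S\times\{a\}$ with $S,T$ the relevant projections of~$H$. In every case $H$ is a Cartesian product $S\times T$, and any such product is rectangular: if $(p,p'),(p,q'),(q,p')\in S\times T$ then $q\in S$ and $q'\in T$, so $(q,q')\in S\times T$.

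Putting the two cases together: in any $\lists$-$M$-derectangularising sequence $D_1,\dots,D_k$ the composed relation $H$ is by definition not rectangular, so no $D_i$ can be empty and no $D_i$ can be a singleton; that is, $|D_i|\ge 2$ for every~$i$. I expect the argument to be short overall, the one place deserving explicit care being the boundary terms $D_1$ and $D_k$ in the singleton case — and there the conclusion is, if anything, more immediate than in the interior.
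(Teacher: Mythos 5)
Your proof is correct and follows exactly the route sketched in the paper's observation: the empty set forces one factor (hence the whole composition) to be empty, a singleton forces the composition to factor as a Cartesian product through that single element, and both outcomes are rectangular, so a derectangularising sequence needs $|D_i|\geq 2$ throughout. The paper states this only as a brief observation; your write-up supplies the same argument with the routine details (including the boundary cases $i=1$ and $i=k$) filled in correctly.
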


The complexity of \nListPartitions{M} is determined by the presence
or absence of derectangularising sequences.
The following is  \cite[Theorem 9]{GGMRY}.

\begin{theorem}
\label{thm:list}
Let $M$ be a symmetric matrix in
  $\{0,1,*\}^{D\times D}\!$.
If there is an
$M$-derectangular\-ising sequence, then
the problem \nListPartitions{M} is $\numP$-complete.  Otherwise,
 it is in $\FP$.
\end{theorem}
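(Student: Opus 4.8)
\textbf{Proof plan for Theorem~\ref{thm:list}.}
The plan is to prove two implications separately: hardness when a derectangularising sequence exists, and tractability when none does. For hardness, suppose $D_1,\dots,D_k$ is an $M$-derectangularising sequence. By Observation~\ref{obs:derect-doubletons} we may assume $|D_i|\ge 2$ for every~$i$. The idea is to build a reduction from a known $\numP$-complete counting problem—the natural candidates are $\#\mathrm{BIS}$-type problems or, more directly, counting homomorphisms to a fixed non-rectangular target in the sense of Bulatov--Dalmau~\cite{BD}, or counting independent sets $\IS$. First I would use the $M$-purifying property of $\{D_1,\dots,D_k\}$: on each pair $D_i\times D_j$ the matrix $M$ is pure, so the allowed adjacency pattern between a vertex listed $D_i$ and one listed $D_j$ is governed purely by the $*$-pattern $H^M_{D_i,D_j}$ (either the $1$-entries or the $0$-entries are absent, so ``edge'' versus ``non-edge'' corresponds cleanly to membership or non-membership in~$H^M_{D_i,D_j}$). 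Then I would construct a path-like gadget whose internal vertices carry lists $D_1,D_2,\dots,D_k$ in order, wired so that a consistent list $M$-partition of the gadget realises exactly the composed relation $R=H^M_{D_1,D_2}\circ\cdots\circ H^M_{D_{k-1},D_k}$ on its two endpoints. Since $R$ is not rectangular, there exist $i,j,i',j'$ with $(i,i'),(i,j'),(j,i')\in R$ but $(j,j')\notin R$; this ``missing corner'' is exactly the structure that encodes a hard relation, and by pinning endpoints to two-element sublists and composing gadgets one interpolates/reduces from counting homomorphisms to a non-rectangular binary relation, which is $\numP$-complete. I would then invoke the counting-CSP machinery (\cite{BD,Bul}) or a direct interpolation argument to conclude $\numP$-completeness of \nListPartitions{M}.

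\textbf{Tractability when there is no derectangularising sequence.}
For the converse, assume $M$ has no derectangularising sequence. The goal is a polynomial-time algorithm for \nListPartitions{M}. The first step is to reduce to the case where every list used is ``small'' in a controlled way: I would show that if no derectangularising sequence exists then the collection of lists appearing in a hard instance can be refined so that every relevant pair of lists is pure, i.e.\ we are working inside an $M$-purifying family. Concretely, one splits the vertex set by list value and, for any pair of list values on which $M$ is not pure, argues that the absence of a derectangularising sequence forces a strong structural restriction—essentially that the composed $*$-relations along any sequence of lists stay rectangular. The key algorithmic consequence of rectangularity is that rectangular relations behave like products, so the count factorises: a counting problem over a rectangular relation decomposes into independent subproblems on the two coordinates. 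I would exploit this to set up a dynamic-programming / divide-and-conquer scheme. When all pairs of lists are pure, \nListPartitions{M} restricted to those lists is a (list) homomorphism-counting problem to a fixed graph (by the remarks after the definition of pure matrices, pure matrices are homomorphism matrices), and the Dyer--Greenhill dichotomy~\cite{DG}, together with the fact that their tractable algorithm adapts to lists~\cite{HNlist}, would have to apply—so I would need the homomorphism target here to fall in the tractable class, which is precisely what rectangularity of all the composed relations guarantees.

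\textbf{Main obstacle.}
The hard part will be the tractability direction, specifically proving that the \emph{global} absence of a derectangularising sequence yields a \emph{usable local} decomposition of arbitrary instances. It is not enough that each individual pair of lists is pure; one must handle instances where many different list values coexist and the interaction graph among list-classes is complicated. The crucial lemma—that non-rectangularity of some composed $*$-relation is the \emph{only} obstruction to tractability—requires showing that whenever the algorithm would get stuck, one can extract an actual derectangularising sequence from the stuck configuration, contradicting the hypothesis. Making this extraction precise (identifying which lists to pick as $D_1,\dots,D_k$ and verifying both the $M$-purifying condition and the non-rectangularity of the composition) is the technical heart of the argument. A secondary obstacle on the hardness side is ensuring the gadget that realises the composed relation $R$ on its endpoints does so with the right multiplicities, so that the reduction is parsimonious or at least amenable to polynomial interpolation; this is where care with the purity of each $D_i\times D_{i+1}$ block and with self-loops (which, as noted, do not constrain pairs of distinct vertices) is essential.
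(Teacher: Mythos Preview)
The paper does not prove Theorem~\ref{thm:list}; it is quoted verbatim from \cite[Theorem~9]{GGMRY} as a known result (see the sentence immediately preceding the statement: ``The following is \cite[Theorem 9]{GGMRY}''). There is therefore no proof in this paper to compare your proposal against, and any attempt you include here would be out of place --- the theorem is being \emph{used}, not established.

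That said, your high-level plan is broadly in the spirit of how \cite{GGMRY} actually proceeds: the hardness direction does use path-shaped gadgets with lists $D_1,\dots,D_k$ to realise the composed relation, exploiting purity to control edge/non-edge behaviour between consecutive blocks, and then reduces from a known $\numP$-hard problem via the non-rectangular ``missing corner''. The tractability direction in \cite{GGMRY} is substantially more intricate than your sketch suggests; it is not a straightforward reduction to Dyer--Greenhill tractable homomorphism targets but rather a bespoke algorithm that repeatedly refines lists and exploits rectangularity of all composed $*$-relations to decompose the count. Your identification of the main obstacle --- extracting a derectangularising sequence from any configuration on which the algorithm would fail --- is accurate, and this is indeed where most of the work in \cite{GGMRY} lies. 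But for the purposes of the present paper, simply cite the result.
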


Thus, our conjecture that counting $M$-partitions
has the same complexity as counting list $M$-partitions is
the same as the following.

\begin{conjecture}
\label{conjecturetoquote}   \nPartitions{M} is \numP{}-complete if $M$~has a
derectangularising sequence, and is in \FP{}, otherwise.
\end{conjecture}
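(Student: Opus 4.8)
The easy half of the statement is the tractability side. If $M$ has no derectangularising sequence then, by Theorem~\ref{thm:list}, \nListPartitions{M} is in~\FP; since \nPartitions{M} is the special case of \nListPartitions{M} in which every vertex receives the full list~$D$, it follows immediately that \nPartitions{M} is in~\FP. All the difficulty is therefore in the hardness side: assuming $M$ has a derectangularising sequence, show that \nPartitions{M} is \numP-complete. Membership in~\numP{} is routine (an $M$-partition is a polynomial-size witness checkable in polynomial time), so the task reduces to proving \numP-hardness.

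For hardness the plan is to reduce from \nListPartitions{M}, which is \numP-hard by Theorem~\ref{thm:list}, and to remove the lists by means of gadgets and interpolation. An instance of \nListPartitions{M} uses at most $2^{|D|}$ distinct lists, a constant, so it suffices to build a gadget for each possible list. For a list $S\subseteq D$ one wants a \emph{list gadget}: a graph $\Gamma_S$ with a distinguished portal vertex $p_S$ such that, when $p_S$ is identified with a vertex~$v$ of the main graph, the number $n_S(d)$ of extensions over $\Gamma_S$ of a partition with $\sigma(v)=d$ is positive exactly when $d\in S$. Attach $t$ disjoint copies of~$\Gamma_S$ at each vertex carrying list~$S$; the resulting $M$-partition count is then a polynomial $Z(t)=\sum_{\sigma}\prod_v n_{L(v)}(\sigma(v))^t$, where the sum is over $M$-partitions~$\sigma$ of the original graph~$G$. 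Interpolating (Lagrange or Vandermonde) over polynomially many values of~$t$ separates the contributions according to the value of $\prod_v n_{L(v)}(\sigma(v))$; the partitions contributing a non-zero value are exactly those that respect~$L$, so summing the corresponding interpolated coefficients returns the number of $M$-partitions of~$G$ respecting~$L$. Each evaluation of $Z(t)$ is a single oracle call to \nPartitions{M}, so this is a polynomial-time Turing reduction and \nPartitions{M} is \numP-hard. (When the gadget can be chosen so that $n_S(d)\in\{0,1\}$ the interpolation is unnecessary --- one evaluation already gives the answer.)

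The main obstacle --- and the reason this is still only a conjecture --- is constructing the list gadgets $\Gamma_S$ uniformly in~$M$. The construction is constrained: if $M$ has two identical rows $i$ and~$j$ (hence, by symmetry, two identical columns), then permuting parts $i$ and~$j$ sends any $M$-partition of any graph to another one, so no gadget can have $n_S(i)\ne n_S(j)$. Thus the hardness supplied by the derectangularising sequence has to be re-derived using lists that need only be separated up to such interchangeable parts, and even this weakened separation must be realised by a concrete graph built from the data of the sequence $D_1,\dots,D_k$ (recall $|D_i|\ge 2$ for each~$i$, by Observation~\ref{obs:derect-doubletons}). For $4\times 4$ matrices the needed gadgets were exhibited essentially case by case, with computer assistance for most matrices and a few handled by hand. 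A proof for all~$D$ appears to require a structural normal form for derectangularising sequences --- for example, that non-rectangularity of the composed relation can always be witnessed by a short sequence, or even by a single non-rectangular pattern $H^M_{X,Y}$ --- from which one could read off both the gadget and the pattern of which $n_S(d)$ are non-zero, and then verify uniformly that it behaves as required for every admissible~$M$. Establishing such a normal form is the crux.
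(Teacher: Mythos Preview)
The statement you are addressing is a \emph{conjecture}; the paper does not prove it in general and makes no claim to.  What the paper proves is the special case $|D|=4$ (the theorem in Section~\ref{sec:conjecture}), and even that proof is computer-assisted and partly case-by-case.  So there is no ``paper's own proof'' of this statement to compare against, and your write-up is, appropriately, a discussion of a possible line of attack rather than a proof.

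Your tractability direction is exactly right and is the argument the paper uses throughout: no derectangularising sequence $\Rightarrow$ \nListPartitions{M}$\in\FP$ by Theorem~\ref{thm:list} $\Rightarrow$ \nPartitions{M}$\in\FP$.

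For the hardness direction, your plan---simulate lists by gadgets and thereby reduce \nListPartitions{M} to \nPartitions{M}---is natural, and your interpolation bookkeeping is fine (the number of distinct nonzero ``bases'' $\prod_v n_{L(v)}(\sigma(v))$ is polynomial in $|V(G)|$ because each factor lies in a fixed finite set and the product depends only on the multiset of (list, part) counts).  But, as you yourself say, the whole argument rests on producing the list gadgets~$\Gamma_S$, and you do not do this; you correctly flag the twin-row obstruction.  So what you have is a programme, not a proof.

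It is worth noting that the paper's route in the $4\times 4$ case is \emph{not} the list-gadget route you propose.  The paper never tries to simulate arbitrary lists.  Instead it shows, matrix by matrix, that whenever $M$ has a derectangularising sequence one can access a \emph{hard proper principal submatrix}~$M'$ of~$M$: attach a clique or independent set~$\Gamma_k^\tau$ to the input graph, use interpolation over~$k$ (Theorem~\ref{thm:clique-partition} and Lemma~\ref{lemma:lin-ind}) to isolate the contribution of $M$-partitions in which the gadget occupies a prescribed set~$S$ of parts, and observe that this forces the input graph into the submatrix $M|_{E^\pi(S)}$.  The reduction is therefore from \nPartitions{M'} (for a smaller~$M'$), not from \nListPartitions{M}.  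Your list-gadget approach, if it could be made to work, would be more uniform and would give the conjecture outright; the paper's submatrix approach is weaker in principle but is what actually closes the $4\times 4$ case.

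One caution about the ``normal form'' you suggest at the end: the paper recalls \cite[Theorem~10]{GGMRY} that deciding whether a general symmetric $\{0,1,*\}$-matrix has a derectangularising sequence is \coNP-complete.  That does not rule out a useful structural normal form, but it does mean that any such form cannot be both short and easy to find in general, so the hope that non-rectangularity is always witnessed by a single $H^M_{X,Y}$ is likely too optimistic.
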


\subsection{Our contribution}

Our main contribution is a computer-assisted proof of
Theorem~\ref{thm:dichotomy}. This establishes a dichotomy for
\nPartitions{M} for $4\times 4$ matrices that is consistent with
Conjecture~\ref{conj:dichotomy}.  We also show that Hell, Hermann and
Nevisi's dichotomy for
$2\times 2$ and
$3\times 3$ matrices is consistent with our conjecture.

There are sufficiently few $2\times 2$ and $3\times 3$
$\{0,1,*\}$-matrices that Hell, Hermann and Nevisi were able to
determine the complexity of \nPartitions{M} for all such matrices by
case analysis.  However, this approach does not seem feasible for
larger matrices.

Recall that, for any symmetric matrix $M\in\{0,1,*\}^{D\times D}\!$,
\nPartitions{M} is the special case of \nListPartitions{M} in which
every vertex of the input graph is given list~$D$.  So, if
\nListPartitions{M} is in \FP{}, so is \nPartitions{M}.  By
Theorem~\ref{thm:list}, this occurs precisely when there is no
$M$-derectangularising sequence.  In Section~\ref{sec:easy}, we give a
method that can be used to show that some $4\times 4$ matrices do not
have $M$-derectangularising sequences.

In Section~\ref{sec:hard}, we develop gadget-based techniques for
showing \numP{}-completeness of \nPartitions{M} for symmetric $D\times
D$ matrices~$M$.  Given an input graph~$G$, we attach a
gadget~$\Gamma$ to~$G$.  The parts of~$D$ into which the vertices of the
gadget are placed determine the parts into which the vertices of~$G$
can be placed.  If we could restrict to favourable partitions of the
gadget, this would, in many cases, restrict $G$ to be partitioned
according to some proper submatrix~$M'$ for which \nPartitions{M'} is
known to be \numP{}-complete by the work of Hell et al.~\cite{HHN}.

We do not know how to restrict to specific partitions of the gadget.
However, by varying the size of the gadget and using interpolation as
follows, we are able to restrict to certain classes of partitions.
This is enough to prove hardness in all but a few cases, by showing
that we can use an oracle for \nPartitions{M} to compute
\nPartitions{M'} for some hard submatrix $M'$ of~$M$.  In more
detail, let $J(\Gamma,G)$ be the graph that results from attaching the
gadget~$\Gamma$ to the graph~$G$.  (In fact, we have two different
ways of attaching the gadget, which are described in
Section~\ref{sec:hard}; we do not need the details, here.)  For a set
$S\subseteq D$, let $Z_M^S(\Gamma)$ be the number of $M$-partitions of
the gadget~$\Gamma$ where exactly the parts in~$S$ are non-empty.  In
$M$-partitions of $J(\Gamma,G)$, placing $\Gamma$~in the parts in~$S$
restricts the vertices of~$G$ to being placed in some set
$E(S)\subseteq D$ of the parts.  We can write
\begin{equation*}
    Z_M(J(\Gamma,G)) = \sum_{S\subseteq D} Z^S_M(\Gamma)\,Z_{M|_{E(S)}}(G)\,,
\end{equation*}
where $M|_{E(S)}$ is the principal submatrix of~$M$ containing exactly
the rows and columns with indices in~$E(S)$.

The gadget~$\Gamma$ is just a clique or independent set of size~$k$ so
$Z^S_M(\Gamma)$~is a polynomial-time computable function of $M$
and~$k$.  Having computed these values, and also used the oracle to
compute $Z_M(J(\Gamma,G))$, we can view the above equation as a linear
equation in the ``variables'' $Z_{M|_{E(S)}}(G)$.  By varying the size
of the gadget, we can obtain a system of equations of this form, which
we would hope to be able to solve. However, it is usually the case
that there are distinct subsets $S_1, \dots, S_r$ of~$D$ for which the
functions $Z^{S_i}_M(\Gamma)$ for $1\leq i\leq r$ are identical.  In
this case, we cannot solve for the variables $Z_{M|_{E(S_i)}}(G)$ individually but
we can compute a weighted sum of them.  In most cases, it turns out
that only one of these variables is a \numP{}-complete function.  We
can compute the weighted sum in polynomial time from the system of
equations, and then compute all but one of the terms of that sum in
polynomial time (with the assistance of the oracle, if needed), which
allows us to compute a \numP{}-complete function, completing the
reduction from the problem of computing that function to
\nPartitions{M}.

We prove Theorem~\ref{thm:dichotomy}
with the aid of a computer program that, for each symmetric matrix
$M\in\{0,1,*\}^{4\times 4}$ attempts to use the techniques of
Section~\ref{sec:easy} to prove tractability and the interpolation
technique of Section~\ref{sec:hard} to prove intractability.  This is
described in Section~\ref{sec:computer}.  The program resolves nearly
all cases; the six exceptions (up to symmetries of the problem) are
dealt with separately in Section~\ref{sec:handproofs}.
Finally, in Section~\ref{sec:conjecture}, we show that our dichotomy
for $4\times 4$ matrices is consistent with our conjecture for the
general case, Conjecture~\ref{conjecturetoquote}.

A similar computer-assisted proof could, in principle, be applied to
$5\times 5$ matrices, the number of which is not excessive (at most
$3^{15} < 14,400,000$, even before symmetries are considered).  Doing so
requires automating more sophisticated handling of the sets of
simultaneous linear equations and seems likely to result in a larger
number of exceptional matrices than the six $4\times 4$ matrices.

\section{Preliminaries}

\paragraph{Sets.} We write $\powerset{D}$ for the powerset of~$D$ and
$D^{(k)}$ for the set of $k$-element subsets of~$D$.  For convenience,
we often list the elements of small sets as tuples (e.g., $ac$ for
$\{a,c\}$).  For any natural number~$k$, $[k]$~denotes the set $\{1,
\dots, k\}$.

\paragraph{Graphs.}
Since self-loops and parallel edges play no role in matrix partitions,
we will assume that input graphs do not have self-loops or parallel edges.
Let $\Kkplus$ be the $k$-vertex complete graph and
let $\Kkbar$ be the $k$-vertex empty graph.\footnote{This nonstandard notation
allows us to talk about a graph $\Gamma^\tau_k$ for $\tau\in\{0,1\}$,
simplifying the description of our gadget construction.}  Let $\IS(G)$ and
$\Clique(G)$ be the problems of determining, respectively, the number
of independent sets and complete subgraphs of~$G$.

\paragraph{Combinatorics.}
We write $\ffact{n}{k}$ for the falling factorial $n(n-1)\cdots(n-k+1)$,
taking $\ffact{n}{0}=1$.
\begin{equation*}
    \stirnum{n}{k} = \frac{1}{k!} \sum_{j=0}^k {(-1)}^{k-j} \binom{k}{j} {j}^{n}
\end{equation*}
denotes a Stirling number of the second kind.
The number of surjective functions from a set of size~$n$
to a set of size~$k$ is $k!\stirnum{n}{k}$.  
We will use the following bounds on $\stirnum{n}{k}$:  
\begin{equation}\label{eq:Stirling}
\mbox{For $n \geq k \ln 2k$}, \quad \tfrac12{k^n}/{k!}\leq \stirnum{n}{k}\leq {k^n}/{k!}.
\end{equation}
To see this,  consider
 \begin{equation*}
\stirnum{n}{k} = \sum_{j=0}^k {(-1)}^{k-j} \frac{j^n}{j!(k-j)!}=\sum_{j=0}^k {(-1)}^{k-j}s_j=S,
\end{equation*}
say. Now, $s_0=0$, $s_1>0$ and, for $j>1$,
\[ \frac{s_j}{s_{j-1}}=\frac{(j-1)!(k-j+1)!j^n}{j!(k-j)!(j-1)^n} =\frac{k-j+1}{j}\left(\frac{j}{j-1}\right)^n\geq\, 2,\]
if
$(1-1/j)^n\leq (k-j+1)/(2j)$. Now $(1-1/j)^n\leq e^{-n/k}$, using $1-x\leq e^{-x}$ and $1<j\leq k$. Also, $(k-j+1)/j\geq 1/k$ for $j\leq k$. Thus $s_j/s_{j-1}\geq 2$ if $e^{-n/k}\leq 1/(2k)$, i.e. $n\geq k\ln 2k$.

Thus, for $n\geq k\ln 2k$, $S$ is an alternating series with strictly increasing terms. It follows that $s_k-s_{k-1}\leq S\leq s_k$. Equation~\eqref{eq:Stirling} now follows, since $s_{k-1}\leq \frac12 s_k$ and $s_k=k^n/k!$.

\paragraph{Matrices.}
Let $M$ be a symmetric $\{0,1,*\}$-matrix with rows and columns
indexed by a finite set~$D$.  For the $4\times 4$ case, we adopt the
convention that $D=\{a,b,c,d\}$ and we index the rows (and columns)
$a$, $b$, $c$ and~$d$ from top to bottom (left to right).

For sets $S,T\subseteq D$, we write $M|_{S\times T}$ for the submatrix
of~$M$ obtained by restricting to the rows in~$S$ and the columns
in~$T$.  $M|_S$~denotes the principal submatrix $M|_{S\times S}$.

Given a symmetric $D\times D$ matrix~$M$
and another symmetric $D'\times D'$ matrix~$M'$
with $|D|=|D'|$ we   write $M\equiv M'$ if there
is a bijection $\rho\colon D\to D'$ such that $M_{i,j} =
M'_{\rho(i), \rho(j)}$ for all $i,j\in D$.  It is clear that, if
$M\equiv M'\!$, then \nPartitions{M} and \nPartitions{M'} have the same
computational complexity.

We write $\overline{M}$ for the matrix obtained from~$M$ by swapping
all $0$s and~$1$s.
Note that the $M$-partitions of any
graph~$G$ correspond directly to $\overline{M}$-partitions of the
complement of~$G$.  Write $M\approx M'$ if $M\equiv M'$ or $M\equiv
\overline{M'}$.  Again, if $M\approx M'\!$, then \nPartitions{M} and
\nPartitions{M'} have the same computational complexity.

We say that a matrix~$M$ is \emph{easy} if the problem \nPartitions{M}
is in \FP{} and \emph{hard} if it is \numP{}-complete.

\section{$2\times 2$ and $3\times 3$ matrices}
\label{sec:small}

Conjecture~\ref{conj:dichotomy} is
already known to hold for pure matrices.
As we noted earlier, in this case $Z_M(G)$ is the number of homomorphisms
from~$G$ (or its complement) to a graph whose edges correspond to the stars in~$M$.
The tractability criterion of Dyer and Greenhill~\cite[Theorem 1.1]{DG}
for graph-homomorphism counting problems
coincides with the tractability criterion for the problem with lists~\cite[Theorem 4]{HNlist}.
The condition stated in these works
concerns the graph~$H$ whose vertices are elements of~$D$ and whose
edges (including self-loops) correspond to the stars in~$M$.
The tractability condition is that each component of~$H$ is either a complete graph in which every vertex
has a self-loop or a complete bipartite graph in which no vertices have self-loops.
Bulatov and Dalmau~\cite[Theorem 12]{BD}
showed that this condition is equivalent
to the condition that the relation $H_{D,D}^M$ is rectangular,
which, in turn, is equivalent to the condition that $M$ does not
have $\mtwo***0$ or $\mtwo***1$ or any permutation of these
as a submatrix.

Conjecture~\ref{conj:dichotomy}
is also known to hold for impure $2\times 2$ matrices.
In particular,
Hell, Hermann and Nevisi~\cite[Theorem 1]{HHN}
showed that for every impure symmetric $2\times 2$ matrix~$M$,
\nListPartitions{M} is in \FP, hence so is \nPartitions{M}.

Hell, Hermann and Nevisi's dichotomy \cite[Theorem 10]{HHN}
shows that if $M$ is a symmetric impure $3\times 3$ matrix
then \nPartitions{M} is $\numP$-hard
if $M$ contains $\mtwo***0$ or $\mtwo***1$ (or any permutation of these)
as a principal submatrix. Otherwise,  \nPartitions{M} is in \FP.
We will now show that this result is consistent with Conjecture~\ref{conjecturetoquote},
which we have already shown to be equivalent to Conjecture~\ref{conj:dichotomy}.
In one direction, if $M$  contains one of these hard principal submatrices
then the rows and columns of this hard principal submatrix
are an $M$-derectangularising sequence, so Conjecture~\ref{conjecturetoquote}
also says that $M$ is hard.
In the other direction, if $M$ does not contain one of these hard principal submatrices
then the following lemma shows
that $M$ has no derectangularising sequence, so Conjecture~\ref{conjecturetoquote}
also says that $M$ is easy.

\begin{lemma}
\label{lemma:HHNrephrased}
Let $M$ be an
    impure $3\times 3$ symmetric $\{0,1,*\}$-matrix $M$ with no
    principal hard $2\times 2$ submatrix. Then $M$ has no derectangularising
    sequence.
\end{lemma}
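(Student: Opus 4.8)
The plan is to assume for a contradiction that $M$ has a derectangularising sequence $D_1,\dots,D_k$ and to deduce from the two hypotheses that all the $D_i$ are the same two-element set, at which point the composed relation is visibly rectangular. Write $D=\{a,b,c\}$. First I would cut down the possible sets: by Observation~\ref{obs:derect-doubletons}, $|D_i|\ge 2$ for every~$i$, and since $\{D_1,\dots,D_k\}$ is $M$-purifying, $M|_{D_i}=M|_{D_i\times D_i}$ is pure for each~$i$; as $M$ is impure this excludes $D_i=D$, so each~$D_i$ is one of the three two-element subsets of~$D$.

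The key step is to rule out two \emph{distinct} two-element subsets occurring among the~$D_i$. Suppose $X,Y\in\{D_1,\dots,D_k\}$ with $X\ne Y$. They meet in exactly one element, so after relabelling we may take $X=\{a,b\}$ and $Y=\{a,c\}$. Since the family is $M$-purifying, $M|_{X\times Y}$, $M|_X$ and $M|_Y$ are all pure. The four entries of $M|_{X\times Y}$ are $M_{aa}$, $M_{ab}$, $M_{ac}$, $M_{bc}$ (using $M_{ba}=M_{ab}$ and $M_{ca}=M_{ac}$), and purity says this set contains at most one of $0$ and~$1$; by the symmetry between $0$ and~$1$ --- which preserves purity, the class of hard $2\times2$ matrices, and all the relations~$H^M_{\cdot,\cdot}$ (these depend only on the positions of~$*$) --- we may assume it omits~$1$, so $M_{aa},M_{ab},M_{ac},M_{bc}\in\{0,*\}$. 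Since $M$ is impure it contains a~$1$, which must therefore sit at $M_{bb}$ or~$M_{cc}$. In either case the relevant principal submatrix --- $M|_X=\mtwo{M_{aa}}{M_{ab}}{M_{ab}}{M_{bb}}$ or $M|_Y=\mtwo{M_{aa}}{M_{ac}}{M_{ac}}{M_{cc}}$ --- is pure and contains a~$1$, hence contains no~$0$; its entries other than that~$1$ lie in $\{0,*\}$ and so are forced to~$*$, making the submatrix $\mtwo{*}{*}{*}{1}$. This is a hard principal $2\times2$ submatrix of~$M$, contradicting the hypothesis.

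Hence all the~$D_i$ coincide, equal to a single two-element set $X=\{x_1,x_2\}$ for which $M|_X$ is pure (by $M$-purifying) and not hard (by hypothesis). I would finish by checking directly --- running through the pure, non-hard symmetric $2\times2$ matrices --- that $H^M_{X,X}$, the set of positions of~$*$ in~$M|_X$, is one of the six relations $\emptyset$, $\{(x_1,x_1)\}$, $\{(x_2,x_2)\}$, $\{(x_1,x_1),(x_2,x_2)\}$, $\{(x_1,x_2),(x_2,x_1)\}$, $X\times X$, and that every power of any of these (interpreting the $0$-th power as the diagonal $\{(x_1,x_1),(x_2,x_2)\}$) is again one of the six, hence rectangular. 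But the relation appearing in Definition~\ref{def:derect} for our sequence is precisely the $(k-1)$-st power of $H^M_{X,X}$, so it is rectangular, contradicting the definition of a derectangularising sequence. Therefore $M$ has no derectangularising sequence.

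The main obstacle is the key step; everything else follows from $|D|=3$ together with Observation~\ref{obs:derect-doubletons}, or from the trivial structure of binary relations on a two-element set. The key step is also the only place where both hypotheses --- that $M$ is impure, and that it has no hard principal $2\times2$ submatrix --- are genuinely used.
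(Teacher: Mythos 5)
Your proof is correct, and it takes a genuinely different route from the paper's. The paper also begins by forcing every $D_i$ to be a doubleton, but then splits on whether $M$ has a \emph{non-principal} hard $2\times 2$ submatrix: if it does, the hypotheses pin $M$ down (up to $\approx$) to $\left(\begin{smallmatrix}1&*&*\\ *&0&*\\ *&*&0\end{smallmatrix}\right)$, whose only pure principal doubleton yields the disequality relation; if it does not, the paper passes to the purification $M'$ (replace every $1$ by a $0$) and invokes the Bulatov--Dalmau/Dyer--Greenhill characterisation of pure matrices with rectangular $H^{M'}_{D,D}$ to conclude that $M'$, and hence $M$, has no derectangularising sequence. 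You instead isolate a clean combinatorial fact: two \emph{distinct} doubletons $X,Y$ cannot both lie in an $M$-purifying family, because purity of $M|_{X\times Y}$ confines $M_{aa},M_{ab},M_{ac},M_{bc}$ to $\{0,*\}$ (after the harmless $0$--$1$ swap), impurity then forces a $1$ onto $M_{bb}$ or $M_{cc}$, and purity of the corresponding principal doubleton forces it to be $\mtwo{*}{*}{*}{1}$, a hard principal submatrix --- contradiction. This reduces everything to powers of a single symmetric relation on a two-element set, which you check exhaustively (including the length-one edge case). Your route is more self-contained --- it needs only Observation~\ref{obs:derect-doubletons} and the list of hard symmetric $2\times 2$ matrices, never the structure theory of rectangular pure matrices --- and it localises the use of both hypotheses to a single step; the paper's route, in exchange, exhibits the one exceptional matrix explicitly and reuses machinery already established for homomorphism matrices.
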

\begin{proof}
    Let $D_1, \dots, D_k$ be a sequence of subsets of~$D = \{a,b,c\}$.
    By Observation~\ref{obs:derect-doubletons}, if $|D_i|<2$ for
    any~$i$, the sequence cannot be derectangularising; if $|D_i|=3$
    for any~$i$, the sequence is not derectangularising, since
    $M|_{D_i\times D_i} = M$ is not pure.  Thus, $|D_i|=2$ for
    all~$i$.

    {\bf Case 1.} First, suppose that $M$~has a non-principal hard $2\times 2$
    submatrix: without loss of generality, we may assume that
    $M|_{ab\times bc}$ contains three $*$s and one~$0$.  Since $M$~is
    impure, at least one of $M_{a,a}$ and~$M_{c,c}$ must be~$1$:
    without loss of generality, assume that $M_{a,a}=1$.  In fact, we
    must have $M|_{ab\times bc} = \mtwo**0*$ as, otherwise, every choice of
    $M_{c,c}$ would leave $M$~containing a hard principal $2\times 2$
    submatrix.  Therefore, $M = \left(\begin{smallmatrix} 1 & * & *
      \\ * & 0 & * \\ * & * & x \end{smallmatrix}\right)$ and
    $x\in\{0,1\}$ since otherwise $M|_{ac}$ would be hard.  The two
    choices for $x$ lead to matrices that are $\approx$-equivalent, so
    we may assume that $x=0$.

    No derectangularising sequence can include $\{a,b\}$ or $\{a,c\}$
    since $M|_{ab}$ and~$M|_{ac}$ are impure.  This leaves only
    $\{b,c\}$, but $H^M_{\{b,c\},\{b,c\}}$ is the disequality relation on
    the set $\{b,c\}$. Composing this with itself any number of times
    results in
either equality or disequality, both of which are $*$-rectangular.     Thus,
    $M$~has no derectangularising sequence.

    {\bf Case 2.} Finally, suppose that $M$~has no non-principal hard $2\times 2$ submatrix.
    Let $M'$ be the pure matrix formed from $M$ by replacing every $1$
    with a~$0$.
$M'$   does not have $\mtwo***0$   or any permutation of  this
as a submatrix. Equivalently, $H^{M'}_{D,D}$ is rectangular
and the graph whose edges correspond to stars in~$M'$
has the property that every component is a complete graph in which every vertex has a self-loop or
a complete bipartite graph in which no vertices have self-loops.
There are only three elements in~$D$
so it is easy to  see that  $M'$ has no derectangularising sequence.
Since any $M$-derectangularising sequence  is also
an $M'$-derectangularising sequence, it follows that
there is no $M$-derectangularising sequence.
\end{proof}

\section{Tractability via \nListPartitions{M}}
\label{sec:easy}

For any symmetric $D\times D$ matrix $M$, recall that \nPartitions{M} is the
special case of \nListPartitions{M} where the list of allowable parts
for every vertex is $D$.  Thus, if there is a polynomial-time
algorithm for \nListPartitions{M}, a polynomial-time algorithm for
\nPartitions{M} is immediate.

By Theorem~\ref{thm:list}, \nListPartitions{M} is in \FP{} if $M$~has
no derectangularising sequence.  Determining that a general symmetric
matrix has no derectangularising sequence is
\coNP{}-complete~\cite[Theorem~10]{GGMRY}.  However, there are only
finitely many $4\times 4$ $\{0,1,*\}$-matrices, so hardness of the
general problem is moot.
By \cite[Lemma~27]{GGMRY}, any matrix in $\{0,1,*\}^{4\times 4}$ that
has a derectangularising sequence has one of length at most 33,280 but
it is not feasible to try all such sequences.
In this section, we show that, in some cases,
it is simple to
determine that a $4\times 4$ matrix has no derectangularising
sequence.

\begin{lemma}
\label{lemma:doubletons}
    Let $M$ be a symmetric matrix in $\{0,1,*\}^{D\times D}$ such
    that, for every $W\subseteq D^{(2)}$, at least one of the
    following holds:
    \begin{enumerate}
    \item There are $S,T\in W$ (not necessarily distinct) such that
        $M|_{S\times T}$ is not pure,\label{enum:prop1}
    \item $W=\{S,T\}$, $S\cap T = \emptyset$
        and $M|_{S\times T}$ is pure and $*$-rectangular, or\label{enum:prop2}
    \item $M|_{\bigcup W}$ is pure and has no derectangularising
        sequence.\label{enum:prop3}
    \end{enumerate}
    Then \nPartitions{M} is in \FP{}.
\end{lemma}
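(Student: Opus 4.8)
The plan is to prove that $M$ has no derectangularising sequence; by Theorem~\ref{thm:list} this places \nListPartitions{M} in $\FP$, and since \nPartitions{M} is the special case of \nListPartitions{M} in which every vertex receives the list~$D$, it follows that \nPartitions{M} is in $\FP$. So suppose, for a contradiction, that $D_1,\dots,D_k$ is an $M$-derectangularising sequence. By Observation~\ref{obs:derect-doubletons}, $|D_i|\ge 2$ for every~$i$, and the purifying property guarantees that $M|_{D_i\times D_j}$ is pure for all $i,j$. Write $U=D_1\cup\dots\cup D_k$, so $|U|\ge 2$.

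First I would dispose of the case in which $M|_U$ is pure, by applying the hypothesis of the lemma with $W=U^{(2)}$. Property~\ref{enum:prop1} fails, since every submatrix of the pure matrix $M|_U$ is pure; and property~\ref{enum:prop2} fails, since $|U^{(2)}|=\binom{|U|}{2}$ is never equal to~$2$. Hence property~\ref{enum:prop3} holds, so $M|_U$ has no derectangularising sequence. But $D_1,\dots,D_k$ is one: each $D_i\subseteq U$, and passing from $M$ to $M|_U$ changes neither the submatrices $M|_{D_i\times D_j}$ nor the composed relation $H^M_{D_1,D_2}\circ\dots\circ H^M_{D_{k-1},D_k}$. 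This is a contradiction.

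Next I would treat the case in which $M|_U$ is impure, by applying the hypothesis with $W=\bigcup_i D_i^{(2)}$. Property~\ref{enum:prop1} fails, because every $S,T\in W$ lies inside some $D_a,D_b$, so $M|_{S\times T}$ is a submatrix of the pure matrix $M|_{D_a\times D_b}$; and property~\ref{enum:prop3} fails, because $\bigcup W=U$ and $M|_U$ is impure. So property~\ref{enum:prop2} must hold: $W=\{S,T\}$ with $S\cap T=\emptyset$ and $M|_{S\times T}$ pure and $*$-rectangular. Since $|W|=2$, no $D_i$ can have size $\ge 3$ (that would already contribute $\binom{3}{2}=3$ distinct $2$-subsets to~$W$), so $|D_i|=2$ and $D_i\in\{S,T\}$ for every~$i$. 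Applying the hypothesis to the singleton families $W=\{S\}$ and $W=\{T\}$ --- where properties~\ref{enum:prop1} and~\ref{enum:prop2} both fail --- forces property~\ref{enum:prop3}, and hence, since $S$ and $T$ are two-element sets, the rectangularity of $H^M_{S,S}$ and $H^M_{T,T}$. Together with the $*$-rectangularity of $M|_{S\times T}$ and the observation that $H^M_{T,S}$ is the transpose of $H^M_{S,T}$ (rectangularity being preserved by transposition), all four of $H^M_{S,S}$, $H^M_{T,T}$, $H^M_{S,T}$, $H^M_{T,S}$ are rectangular.

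The crux --- and the step I expect to be the real work --- is then a composition lemma: since every factor of $H^M_{D_1,D_2}\circ\dots\circ H^M_{D_{k-1},D_k}$ lies in $\{H^M_{S,S},H^M_{T,T},H^M_{S,T},H^M_{T,S}\}$ and is therefore rectangular, the whole composition is rectangular, contradicting $D_1,\dots,D_k$ being derectangularising. I would prove this by induction on the number of factors, repeatedly composing a rectangular relation $R\subseteq D_1\times D_m$ with the next factor through the two-element set $D_m=\{x,y\}$. Either $R$ has the form $A'\times\{x,y\}$, in which case the composite is again a Cartesian product and hence rectangular; or each element of~$D_1$ has at most one $R$-neighbour in~$D_m$, so $R=(A_x\times\{x\})\cup(A_y\times\{y\})$ with $A_x$ and $A_y$ disjoint, and then the composite $(A_x\times N(x))\cup(A_y\times N(y))$, where $N(x),N(y)$ record the neighbourhoods of $x,y$ in the next factor, can fail to be rectangular only if the two-element sets $N(x)$ and $N(y)$ overlap without being equal --- which would force that next factor to be non-rectangular, contradicting the previous paragraph. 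This completes the second case, so $M$ has no derectangularising sequence and \nPartitions{M} is in $\FP$.
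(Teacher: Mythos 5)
Your proposal is correct and follows essentially the same route as the paper's proof: both hinge on applying the hypothesis to $W=\bigcup_i D_i^{(2)}$ in the impure case and on the fact that rectangular relations between two-element sets are Cartesian products or perfect matchings, so that compositions of such relations remain rectangular. The only organisational difference is that you obtain the rectangularity of $H^M_{S,S}$ and $H^M_{T,T}$ by applying the hypothesis to the singleton families $\{S\}$ and $\{T\}$, whereas the paper instead chooses a derectangularising sequence minimising the number of distinct sets and argues that otherwise $S,S$ or $T,T$ would be a shorter witness.
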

\begin{proof}
    If $\FP=\numP$, then \nPartitions{M} is in \FP{} for any
    matrix~$M$.  So we may assume that $\FP\neq\numP$ for the rest of
    the proof.

    We prove the contrapositive.  If \nPartitions{M} is not in \FP{},
    then, by Theorem~\ref{thm:list} and the assumption that $\FP\neq\numP$,
    $M$ has a derectangularising sequence. Choose such a sequence
    $D_1, \dots, D_\ell$ that contains the least possible number of
    distinct sets among the $D_i$ (i.e., a sequence that minimises
    $|\{D_1, \dots, D_\ell\}|$).
    We show that none of the three properties holds
    for $W=\bigcup_i D_i^{(2)}\!$.  By
    Observation~\ref{obs:derect-doubletons}, $|D_i|\geq 2$ for each
    $i\in[\ell]$.

    For property~\ref{enum:prop1}, consider any $S\subseteq D_i$ and
    $T\subseteq D_j$ for any $i,j\in[\ell]$.  $M|_{D_i\times D_j}$ is
    pure because $D_1, \dots, D_\ell$ is $M$-purifying, so
    $M|_{S\times T}$ is pure.  For property~\ref{enum:prop3}, suppose
    that $M|_{\bigcup W}$ is pure, since there is nothing more to
    prove if it is not.  Since $|D_i|\geq 2$ for each~$i$, $\bigcup
    W=\bigcup_i D_i$.  Therefore, $D_1, \dots, D_\ell$ is a
    derectangularising sequence of $M|_{\bigcup W}$.

    It remains to show that $W$ does not have
    property~\ref{enum:prop2}.  Suppose that $W=\{S,T\}$ and $S\cap T
    = \emptyset$.  If there were a $D_i$ with $|D_i|>2$, we would have
    $|W|>2$, contradicting the assumption that $W=\{S,T\}$.  Thus,
    $D_i\in \{S,T\}$ for each $i\in[\ell]$.  By the definition of
    derectangularising sequence, $M|_{S\times S}$, $M|_{S\times T}$
    and~$M|_{T\times T}$ are all pure.

    $M|_{S\times S}$ and~$M|_{T\times T}$ must both be $*$-rectangular
    since, otherwise, $S,S$ or~$T,T$ would be a derectangularising
    sequence, contradicting the choice of $D_1, \dots, D_\ell$.  If
    there is some~$i\in[\ell-1]$ such that $D_i=D_{i+1}=S$, then
    $H^M_{S,S}$ must be either the equality or disequality relation
    on~$S$: any other relation would either not be rectangular or
    would prevent the sequence $D_1, \dots, D_\ell$ from being
    derectangularising.  Similarly, if we have $D_i=D_{i+1}=T$ for
    some~$i$, then $H^M_{T,T}$ must be equality or disequality on~$T$.

    There must be some $i\in[\ell-1]$ such that $D_i\neq D_{i+1}$.
    Without loss of generality, we may assume that $D_i=S$ and
    $D_{i+1}=T$.  Consider $H^M_{S,T}$.  If this were a matching or
    the complete relation $S\times T$, or if the projection onto its
    first and second columns were not $S$ and~$T$, respectively, then
    $D_1, \dots, D_\ell$ would not be derectangularising.  The only
    remaining possibility is that $H^M_{S,T}$ is not rectangular,
    i.e., $M|_{S\times T}$ is not $*$-rectangular.
\end{proof}

Given a $4\times 4$ matrix~$M$, it is easy to check whether, for each
of the $64$ subsets of~$D^{(2)}\!$, at least one of the three
properties of Lemma~\ref{lemma:doubletons} holds.  If this is the
case, we may deduce that $M$~has no derectangularising sequence so is
easy, even with lists.

\section{Identifying hard matrices}
\label{sec:hard}

For matrices $M$ that are impure and, thus, not homomorphism matrices,
we use a gadget construction and interpolation to ``pick out''
principal submatrices $M'$ for which \nPartitions{M'} is
\numP{}-complete.
While we will be concerned with $4$-element domains,
the techniques in this section could potentially also be applied to
arbitrary domains~$D$, perhaps as part of a proof of a complexity
dichotomy for all \nPartitions{M} problems, by induction on the size
of the domain.

Given a Boolean value $\tau\in\{1,0\}$,
a graph~$G$ and a positive integer~$k$,
let $\ourinst^{0,\tau}(k,G)$ be the disjoint union of~$G$ and $\Gamma^\tau_k$.
The ``$0$'' in the notation is to remind us that there are no edges between~$G$
and the ``gadget'' $\Gamma^\tau_k$ (which is a complete graph if $\tau=1$ and
a graph with no edges if instead $\tau=0$).
Also, let
$\ourinst^{1,\tau}(k,G)$ be
the graph with vertex set $V(G)\cup V(\Gamma^\tau_k)$
and edge set $E(G) \cup E(\Gamma^\tau_k) \cup (V(G) \times V(\Gamma^\tau_k))$.
The ``$1$'' in the notation is to remind us that all edges are present between~$G$
and the gadget $\Gamma^\tau_k$.

The set of $M$-partitions of $\ourinst^{\pi,\tau}(k,G)$ can be broken down according to the
set of parts $S\subseteq D$ in which vertices of the gadget~$\Gamma^\tau_k$
are placed.
For example, consider the matrix
\begin{equation}
\label{eq:example-M}
M\;=\;\bordermatrix{
  & a & b & c & d \cr
a\; & 0 & 0 & 1 & * \cr
b\; & 0 & 0 & 1 & 1 \cr
c\; & 1 & 1 & 1 & 1 \cr
d\; & * & 1 & 1 & *}
\end{equation}
and take $\pi=\tau=0$.
In an $M$-partition of~$\ourinst^{0,0}(k,G)$ in which the vertices of the $\Kkbar$ are
all in part $d$, the vertices of~$G$ must be placed in parts $a$
and~$d$.  Thus, the number of $M$-partitions of $\ourinst^{0,0}(k,G)$ in which the
$\Kkbar$ is entirely within part~$d$ is equal to the number of
$M|_{ad}$-partitions of~$G$, which is the number of independent sets
in~$G$.  If we could restrict attention to only the $M$-partitions
of~$\ourinst^{0,0}(k,G)$ in which the $\Kkbar$ is in part~$d$, we could prove
\numP{}-completeness of \nPartitions{M} by reduction from counting
independent sets which, in the guise of monotone 2-SAT, was shown to be
\numP{}-complete by Valiant~\cite{Val1979:Enumeration}.
Unfortunately, we do not know how to restrict
partitions in this way but, in this section, we set up machinery that
nonetheless allows us to develop this idea into a method for proving
hardness.

\begin{definition}
\label{defn:S-surjective}
    Let $M$ be a symmetric matrix in $\{0,1,*\}^{D\times D}$ and let
    $S\subseteq D$.  An $M$-partition~$\sigma$ of a graph~$G$ is
    \emph{$S$-surjective} if the image of~$\sigma$ is~$S$.  We write $Z^S_M(G)$ for the number of $S$-surjective
    $M$-partitions of~$G$.
\end{definition}

Given a set $S\subseteq D$, and a Boolean value $\pi\in\{0,1\}$,
let
$$E^{\pi}(S) =
\{j \in D \mid \forall i \in S, M_{i,j} \in\{\pi,*\}\, \}\,.$$
$E^{1}(S)$ is the set of parts in~$D$ that can be adjacent to every part
in~$S$; $E^{0}(S)$ is the set of parts that can be non-adjacent to every
part in~$S$.
These will be interesting to us because we will proceed as follows in our reductions.
Suppose that $M|_{E^\pi{(S)}}$ is a hard matrix and that we want to show that
$M$ is hard by reducing \nPartitions{M|_{E^\pi(S)}}
to \nPartitions{M}.
Then we can take an instance $G$ of \nPartitions{M|_{E^\pi(S)}}
and form the gadget $\ourinst^{\pi,\tau}(k,G)$ for some value of~$k$.
Then, if we can choose~$\tau$ so that the gadget $\Gamma_k^\tau$
is always partitioned surjectively into parts in~$S$,
we will have reduced \nPartitions{M|_{E^\pi(S)}} to
\nPartitions{M}.
Typically, we cannot do this, but we will be able to do is to
compute the number of $M$-partitions of $\ourinst^{\pi,\tau}(k,G)$ for lots of values of~$k$.
Using polynomial interpolation, we will be able to work out
the number of $M$-partitions of~$G$ which are consistent with an $S$-surjective partition
of $\Gamma_k^\tau$ so this will enable us to
count the $M|_{E^\pi(S)}$-partitions of~$G$ (solving a hard problem) by using
an oracle for counting $M$-partitions.
Thus, we will have proved that $M$ is a hard matrix.

For $\pi \in \{0,1\}$, we say that a principal submatrix
$M'$ of~$M$ is \emph{$(M,\pi)$-accessed by~$S$}
  if
$M'\equiv M|_{E^\pi(S)}$.
Note the equivalence ---
$M'$ only has to be equivalent to
$M|_{E^\pi(S)}$ --- it doesn't have to \emph{be}
$M|_{E^\pi(S)}$. It is useful to define things this way because equivalent matrices
correspond to matrix partition problems
of equivalent difficulty. Also, we will not be able to separate them by interpolation, so we will have to consider them together.

To illustrate these definitions, consider the matrix $M$ in Equation~\eqref{eq:example-M}.
Then $E^1(\{b,d\})=\{c,d\}$
and $E^{0}(\{b,d\})= \{a\}$.
Thus,
$M|_{cd}$ is $(M,1)$-accessed by~$\{b,d\}$
and $M|_{a}$
 is $(M,0)$-accessed by~$\{b,d\}$.
 The matrix $M|_{b}$ is also $(M,0)$-accessed by~$\{b,d\}$ since
 $M|_{b} \equiv M|_{a}$.
 Also, $E^1(\{d\})=\{a,b,c,d\}$.
 Thus, $M$ itself is $(M,1)$-accessed by~$\{d\}$.

We say that
a principal submatrix
$M'$ of~$M$ is \emph{accessible} in the graph~$\ourinst^{\pi,\tau}(k, G)$
if there is a set $S\subseteq D$
such that $Z_M ^S(\Gamma^\tau_k)>0$
and $M'$ is $(M,\pi)$-accessed by~$S$.

Continuing our example with $S=\{b,d\}$ and $M$ as in Equation~\eqref{eq:example-M}, note that
for any  $k>1$,
$Z_M^S(\Kkplus)>0$ since an $S$-surjective $M$-partition of~$\Kkplus$ may
place one vertex in part~$b$ and the remaining vertices in part~$d$.
Thus,
$M|_{cd}$ is  accessible
in $\ourinst^{1,1}(k,G)$
and $M|_{a}$ and $M|_{b}$
 are  accessible in $\ourinst^{0,1}(k,G)$.
Note that accessibility in $\ourinst^{\pi,\tau}(k,G)$
depends on~$M$,  $\pi$, $\tau$  and possibly~$k$  but
it does
not depend on~$G$. Because of
this, we may talk about accessibility in $\ourinst^{\pi,\tau}(k,\cdot)$.
In fact, we will see later
in Theorem~\ref{thm:clique-partition}
that accessibility
will not actually depend on~$k$, provided that $k>|D|$ (this is not obvious at this point but will be important).

We now begin to decompose $Z_M(\ourinst^{\pi,\tau}(k,G))$
into more
manageable units.  The first step
is to break the sum up over the set~$S$ which is used to surjectively
partition the gadget $\Gamma_k^\tau$:
\begin{equation}\label{eq:one}
    Z_M(\ourinst^{\pi,\tau}(k,G)) =
    \sum_{S \subseteq D} Z^S_M(\Gamma^\tau_k)\, Z_{M|_{E^{\pi}(S)}}(G)\,.
\end{equation}
Now let
$$\Psi_\pi = \{ S  \subseteq D \mid
\text{$M$ itself is $(M,\pi)$-accessed by~$S$}\}\,.$$
The set $\Psi_\pi$ may be empty, depending on~$M$.
The reason that we have defined $\Psi_\pi$ is that wish to use Equation~\eqref{eq:one}
to show that $M$ is a hard matrix --- so we will use an oracle for $M$-parititons
to compute the left-hand side and we will hope to discover the solution to some
hard problem on the right-hand side.
For this reason we don't want $M$ itself to be one of the matrices
$M|_{E^\pi(S)}$
appearing on the right-hand side.
To ease the notation, let $\overline{\Psi_\pi}=\powerset{D}\setminus \Psi_\pi$;
$\overline{\Psi_\pi}$
consists of all subsets $S$ of~$D$ apart from those with $M|_{E^\pi(S)}=M$.
From~\eqref{eq:one}, we have
\begin{equation}\label{eq:oneone}
    Z_M(\ourinst^{\pi,\tau}(k,G))\; -    \sum_{S \in \Psi_\pi} Z^S_M(\Gamma^\tau_k)\,
    Z_M(G)\;=
    \sum_{S \in \overline{\Psi_\pi}} Z^S_M(\Gamma^\tau_k)\, Z_{M|_{E^{\pi}(S)}}(G)\,.
\end{equation}

Now we would like to collect the terms on the right-hand side of Equation \eqref{eq:oneone},
gathering all terms with the same matrix~$M|_{E^{\pi}(S)}$, and taking these together.
So, for any principal submatrix
$M'$ of $M$,
let
\begin{equation*}
    C_{M'}^{\pi,\tau}(k) = \sum_S Z_M^S(\Gamma^\tau_k)\,,
\end{equation*}
Where the sum is over sets $S \subseteq D$ such that
$M'$ is $(M,\pi)$-accessed by $S$.
Thus, $M'$~is accessible in $\ourinst^{\pi,\tau}(k,\cdot)$
precisely when $C_{M'}^{\pi,\tau}(k)$ is positive.
The quantity $C_{M'}^{\pi,\tau}(k)$ corresponds roughly to the coefficient
of $Z_{M'}(G)$ in~\eqref{eq:oneone} though we will have to be
careful about over-counting.
As a first step, we can immediately rewrite the left-hand side of \eqref{eq:oneone},
combining the terms for all $S\in \Psi_\pi$, since these terms have
a common factor of $Z_M(G)$.
\begin{equation}\label{eq:oneoneone}
    Z_M(\ourinst^{\pi,\tau}(k,G)) -   C_M^{\pi,\tau}(k)\, Z_M(G)\; =
    \sum_{S \in \overline{\Psi_\pi}} Z^S_M(\Gamma_k)\, Z_{M|_{E^{\pi}(S)}}(G)\,.
\end{equation}
Now, all of the matrices $M|_{E^{\pi}(S)}$ such that
$Z_{M|_{E^{\pi}(S)}}(G)$ arises on the right-hand side of \eqref{eq:oneoneone}
are proper principal sub-matrices of~$M$.
Since a proper principal sub-matrix
$M'$  is {$(M,\pi)$-accessed by~$S$}
  when
$M'\equiv M|_{E^\pi(S)}$,
the coefficient $C_{M'}^{\pi}$ captures the contribution of the entire equivalence class. Thus,
 we have
\begin{equation}\label{eq:two}
Z_M(\ourinst^{\pi,\tau}(k,G))-
C_M^{\pi,\tau}(k)\, Z_M(G)
= \sum_{M'} C_{M'}^{\pi,\tau}(k)\, Z_{M'}(G)\,,
\end{equation}
where the sum is over one element from each $\equiv$-equivalence class
of proper
principal sub-matrices $M'$ of~$M$.

We now explain the point of Equation~\eqref{eq:two}.
Corollary~\ref{cor:used}
will show that all of the coefficients $C_{M'}^{\pi,\tau}(k)$ can be computed in
polynomial time (as a function of~$k$).
Also, the left side of~\eqref{eq:two} can be computed
in polynomial time with an oracle for
computing~$Z_M$ --- we just use the oracle twice to compute
$Z_M(\ourinst^{\pi,\tau}(k,G))$
and $Z_M(G)$.
So if we can show that it is hard to compute the right side of~\eqref{eq:two},
then
we can conclude that computing~$Z_M$ is hard.

Since each $M'$ is a proper principal submatrix of~$M$, the
complexity of computing each $Z_{M'}$ is known from the dichotomy
of Hell, Hermann and Nevisi~\cite{HHN} and is either in \FP{} or is
\numP{}-complete.

We begin with two straightforward cases in Lemmas~\ref{lem:first} and~\ref{lemma:IS-clique}. These
cases do not require interpolation, but
we will handle these cases first and then explain the interpolation.

\begin{lemma}\label{lem:first}
Suppose that $M$ is a symmetric matrix in $\{0,1,*\}^{D\times D}$, that
$\pi$ and $\tau$ are Boolean values in $\{0,1\}$, and that $k$ is some positive integer.
If   there is at least one
proper hard submatrix of~$M$ that is  accessible in
$\ourinst^{\pi,\tau}(k,\cdot)$
and all such proper hard submatrices are  $\equiv$-equivalent,
then $M$ is hard.
\end{lemma}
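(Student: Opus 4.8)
The plan is to use Equation~\eqref{eq:two} directly, together with the dichotomy of Hell, Hermann and Nevisi~\cite{HHN} for proper submatrices and the fact (to be established as Corollary~\ref{cor:used}) that the coefficients $C^{\pi,\tau}_{M'}(k)$ are polynomial-time computable. Fix the positive integer $k$ from the hypothesis. By assumption there is a proper hard submatrix of~$M$ accessible in $\ourinst^{\pi,\tau}(k,\cdot)$, and all such proper hard submatrices are $\equiv$-equivalent; pick one representative $M^\star$ of that single $\equiv$-equivalence class. Since $M^\star$ is accessible in $\ourinst^{\pi,\tau}(k,\cdot)$, the coefficient $C^{\pi,\tau}_{M^\star}(k)$ is strictly positive.

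First I would rearrange Equation~\eqref{eq:two} to isolate the term $C^{\pi,\tau}_{M^\star}(k)\,Z_{M^\star}(G)$:
\begin{equation*}
C^{\pi,\tau}_{M^\star}(k)\, Z_{M^\star}(G)
= Z_M\bigl(\ourinst^{\pi,\tau}(k,G)\bigr) - C^{\pi,\tau}_M(k)\, Z_M(G)
- \sum_{M'\neq M^\star} C^{\pi,\tau}_{M'}(k)\, Z_{M'}(G)\,,
\end{equation*}
where the sum ranges over one representative $M'$ from each $\equiv$-equivalence class of proper principal submatrices of~$M$ other than the class of~$M^\star$. The key observation is that every remaining matrix $M'$ in that sum is \emph{not} hard: if some such $M'$ were hard, then either it is accessible in $\ourinst^{\pi,\tau}(k,\cdot)$ — in which case the hypothesis forces $M'\equiv M^\star$, a contradiction — or $C^{\pi,\tau}_{M'}(k)=0$, so that term simply vanishes and can be dropped. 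Hence, after deleting the zero terms, every $M'$ surviving in the sum satisfies that \nPartitions{M'} is in \FP{} by the $3\times 3$ (or smaller) dichotomy of~\cite{HHN}, so $Z_{M'}(G)$ is computable in polynomial time.

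It follows that the entire right-hand side above is computable in polynomial time given an oracle for $Z_M$: we call the oracle twice to obtain $Z_M(\ourinst^{\pi,\tau}(k,G))$ and $Z_M(G)$; the coefficients $C^{\pi,\tau}_M(k)$ and $C^{\pi,\tau}_{M'}(k)$ are polynomial-time computable by Corollary~\ref{cor:used}; and each $Z_{M'}(G)$ for the easy submatrices $M'$ is polynomial-time computable. Dividing by the positive integer $C^{\pi,\tau}_{M^\star}(k)$ then yields $Z_{M^\star}(G)$ in polynomial time. Since $M^\star$ is hard, \nPartitions{M^\star} is \numP{}-complete, so this gives a polynomial-time Turing reduction from a \numP{}-complete problem to \nPartitions{M}, whence $M$ is hard. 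The main obstacle is purely bookkeeping: making sure that $Z_M$ itself does not reappear on the right-hand side — which is exactly why the term $C^{\pi,\tau}_M(k)\,Z_M(G)$ was pulled out separately in~\eqref{eq:two} via the set $\Psi_\pi$ — and that every proper submatrix appearing with a nonzero coefficient is either the unique hard class $M^\star$ or an easy matrix; the argument above is essentially a case analysis ensuring no third possibility exists. Nothing here requires interpolation, since a single fixed $k$ suffices.
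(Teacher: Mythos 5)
Your proposal is correct and follows essentially the same route as the paper: rearrange Equation~\eqref{eq:two} to isolate the unique (up to $\equiv$) hard accessible submatrix, divide by its positive coefficient, and evaluate everything else in polynomial time with two oracle calls. You are in fact slightly more careful than the paper's own proof in explicitly noting that a hard but \emph{inaccessible} proper submatrix contributes a zero coefficient and so can be dropped, which is a worthwhile clarification but not a different argument.
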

\begin{proof}
    Suppose that, up to $\equiv$-equivalence, $M''$ is the only hard
    proper submatrix that is accessible in $\ourinst^{\pi,\tau}(k,\cdot)$.
    Rearranging \eqref{eq:two}, we obtain, for any graph~$G$,
    \begin{equation*}
        Z_{M''}(G) = \frac{1}{C^{\pi,\tau}_{M''}(k)}
                    \left(
                        Z_M(\ourinst^{\pi,\tau}(k,G)) - C_M^{\pi,\tau}(k)\, Z_M(G)
                         \;\; - \!\!\sum_{M'\not\equiv M''}\!\! C_{M'}^{\pi,\tau}(k)\, Z_{M'}(G)
                    \right).
    \end{equation*}
    Since all the quantities $C_{M'}^{\pi,\tau}(k)$ and $Z_{M'}(G)$
    are computable in \FP\
(which follows since all $M'$ are easy by assumption, and  the coefficients $C_{M'}^{\pi,\tau}(k)$ are constants)
    this gives a polynomial-time Turing
    reduction from \nPartitions{M''} to \nPartitions{M}.
\end{proof}

\begin{lemma}
\label{lemma:IS-clique}
Suppose that $M$ is a symmetric matrix in $\{0,1,*\}^{D\times D}$, that
$\pi$ and $\tau$ are Boolean values in $\{0,1\}$, and that $k$ is some positive integer.
Suppose that there is at least one
proper hard submatrix of~$M$ that is accessible in
$\ourinst^{\pi,\tau}(k,\cdot)$ that is $\equiv$-equivalent
to $M_0 = \mtwo***0$
and that there is at least one
proper hard submatrix of~$M$ that is accessible in
$\ourinst^{\pi,\tau}(k,\cdot)$ that is $\equiv$-equivalent
to $M_1 = \mtwo***1$.
Suppose that every  proper hard submatrix that is accessible is
either $\equiv$-equivalent
    to $\mtwo***0$ or to $\mtwo***1$.
 Then $M$~is hard.
\end{lemma}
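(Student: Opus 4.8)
The plan is to mimic the rearrangement of Equation~\eqref{eq:two} used in the proof of Lemma~\ref{lem:first}, but now with \emph{two} hard matrices $M_0\equiv\mtwo***0$ and $M_1\equiv\mtwo***1$ appearing on the right-hand side instead of one. Collecting all the easy terms and moving them to the left, \eqref{eq:two} becomes, for any graph $G$,
\begin{equation*}
    C^{\pi,\tau}_{M_0}(k)\,Z_{M_0}(G) + C^{\pi,\tau}_{M_1}(k)\,Z_{M_1}(G)
    \;=\;
    Z_M(\ourinst^{\pi,\tau}(k,G)) - C_M^{\pi,\tau}(k)\,Z_M(G)
    - \!\!\sum_{M'\text{ easy}}\!\! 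C_{M'}^{\pi,\tau}(k)\,Z_{M'}(G)\,,
\end{equation*}
where, by the hypotheses, both coefficients $C^{\pi,\tau}_{M_0}(k)$ and $C^{\pi,\tau}_{M_1}(k)$ are strictly positive (each of $M_0$, $M_1$ is accessible in $\ourinst^{\pi,\tau}(k,\cdot)$). As in Lemma~\ref{lem:first}, every quantity on the right-hand side is computable in polynomial time using an oracle for \nPartitions{M}: the easy $Z_{M'}(G)$ by the dichotomy of Hell, Hermann and Nevisi~\cite{HHN}, the coefficients $C_{M'}^{\pi,\tau}(k)$ by Corollary~\ref{cor:used}, and $Z_M(\ourinst^{\pi,\tau}(k,G))$, $Z_M(G)$ by two oracle calls. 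So, with a single oracle call structure, we can compute the left-hand side, i.e.\ the fixed positive linear combination $\alpha\,Z_{M_0}(G)+\beta\,Z_{M_1}(G)$ with $\alpha,\beta>0$ known constants.

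The key observation that finishes the proof is that $Z_{M_0}$ and $Z_{M_1}$ are not independent: for the matrices $M_0=\mtwo***0$ and $M_1=\mtwo***1$, an $M_0$-partition of a graph $G$ on $n$ vertices is exactly a choice of an independent set (the vertices sent to the $0$-diagonal part) together with an arbitrary assignment of the rest, so $Z_{M_0}(G)$ counts, with appropriate multiplicities, the independent sets of $G$; dually, $Z_{M_1}(G)$ counts the cliques of $\overline G$, which are the independent sets of $G$ as well. More precisely, recording that a vertex in the $*$-part is unconstrained, both $Z_{M_0}(G)$ and $Z_{M_1}(\overline G)$ equal $\sum_{I\text{ ind.\ set of }G} 2^{\,|V(G)|-|I|}$ (and symmetrically $Z_{M_1}(G)=Z_{M_0}(\overline G)$ counts independent sets of $\overline G$, i.e.\ cliques of $G$). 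Hence there is a polynomial-time Turing reduction \emph{between} \nPartitions{M_0} and \nPartitions{M_1}: given a $G$, replacing it by $\overline G$ converts one to the other. Therefore, feeding the gadget construction the graph $G$ lets us compute $\alpha\,Z_{M_0}(G)+\beta\,Z_{M_1}(G)$, and feeding it $\overline G$ lets us compute $\alpha\,Z_{M_0}(\overline G)+\beta\,Z_{M_1}(\overline G) = \alpha\,Z_{M_1}(G)+\beta\,Z_{M_0}(G)$. These are two linear equations in the unknowns $Z_{M_0}(G)$ and $Z_{M_1}(G)$ with coefficient matrix $\left(\begin{smallmatrix}\alpha&\beta\\\beta&\alpha\end{smallmatrix}\right)$, which is nonsingular because $\alpha,\beta>0$ forces $\alpha\neq\beta$ only if... — here we must be slightly careful — so the real argument is: we do not need the two equations to be independent. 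Since $Z_{M_0}(G)$ is already a \numP{}-complete function (counting independent sets, shown \numP{}-complete by Valiant~\cite{Val1979:Enumeration} via monotone 2-SAT, and $M_0$ is hard by~\cite{HHN}), and $Z_{M_1}(G)=Z_{M_0}(\overline G)$ is polynomial-time interreducible with it, it suffices to compute \emph{either one}; and from $\alpha\,Z_{M_0}(G)+\beta\,Z_{M_1}(G)$ together with the value $Z_{M_1}(G)$ obtained from a \emph{separate} invocation of the whole scheme on $\overline G$ (which yields $\alpha\,Z_{M_0}(\overline G)+\beta\,Z_{M_1}(\overline G)=\alpha\,Z_{M_1}(G)+\beta\,Z_{M_0}(G)$), we solve the $2\times 2$ system, recovering $Z_{M_0}(G)$.

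The main obstacle is exactly this last point: the $2\times 2$ system is nonsingular only when $\alpha\neq\beta$, and it is conceivable that the gadget yields $C^{\pi,\tau}_{M_0}(k)=C^{\pi,\tau}_{M_1}(k)$ for the chosen $k$. I expect the paper to handle this either by (i) noting that one may choose among the various accessible values or among two values of $k$ to break the symmetry, or more likely (ii) observing that even when $\alpha=\beta$ the combined quantity $\alpha\bigl(Z_{M_0}(G)+Z_{M_1}(G)\bigr)=\alpha\bigl(Z_{M_0}(G)+Z_{M_0}(\overline G)\bigr)$ is \emph{itself} already \numP{}-hard to compute — indeed, evaluating $Z_{M_0}$ on both $G$ and its complement and adding is no easier than evaluating $Z_{M_0}$ on $G$ alone, e.g.\ by padding $G$ with an isolated large clique or independent set to kill one of the two terms up to a known factor. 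Either way the conclusion is a polynomial-time Turing reduction from the \numP{}-complete \nPartitions{M_0} to \nPartitions{M}, so $M$ is hard. I would write the proof using whichever of these two symmetry-breaking tricks is cleanest given the available gadget freedom, and I anticipate the disjoint-union gadget $\ourinst^{0,\tau}$ is what makes the padding trick immediate.
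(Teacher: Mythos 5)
Your setup matches the paper's: rearrange \eqref{eq:two} so that the oracle and the easy submatrices let you compute the fixed linear combination $\alpha\,Z_{M_0}(G)+\beta\,Z_{M_1}(G)$ with $\alpha=C^{\pi,\tau}_{M_0}(k)>0$ and $\beta=C^{\pi,\tau}_{M_1}(k)>0$. Where you diverge is in extracting a hard function from this combination. Your primary route (run the whole reduction on both $G$ and $\overline G$, using $Z_{M_1}(G)=Z_{M_0}(\overline G)$, and solve the resulting $2\times2$ system) fails exactly when $\alpha=\beta$, as you yourself note, and nothing in the hypotheses rules that case out; so as written this branch does not close. Your fallback (ii) is the paper's actual argument, but you only gesture at it. The paper proves directly that $\theta_{\alpha,\beta}(G)=\alpha\,\IS(G)+\beta\,\Clique(G)$ is \numP{}-complete for any fixed $(\alpha,\beta)\neq(0,0)$: since $\IS(G+K_1)=2\,\IS(G)$ while $\Clique(G+K_1)=\Clique(G)+1$, one has $\theta_{\alpha,\beta}(G+K_1)-\theta_{\alpha,\beta}(G)=\alpha\,\IS(G)+\beta$, which is \numP{}-complete when $\alpha\neq0$. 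That one-line computation is precisely the ``pad with an isolated vertex to kill one term'' idea you describe, and it makes the complementation detour unnecessary; you should carry it out rather than leave it conditional.

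Two smaller points. First, your side remark that $Z_{M_0}(G)=\sum_{I}2^{|V(G)|-|I|}$ is wrong: for $M_0=\mtwo***0$ the vertices outside the independent-set part must all go to the single $*$-part, so $Z_{M_0}(G)=\IS(G)$ exactly (this does not affect the structure of your argument, since either function is interreducible with counting independent sets, but it should be corrected). Second, your option (i) of varying $k$ to force $\alpha\neq\beta$ is not available: the lemma's hypotheses are stated for a particular $k$, and at this point in the paper nothing guarantees that accessibility or the coefficients behave usefully as $k$ varies (that is the business of Theorem~\ref{thm:clique-partition} and Lemma~\ref{lemma:lin-ind}, which this lemma deliberately avoids).
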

\begin{proof}
    Recall that $\IS(G)$ and $\Clique(G)$ are, respectively, the
    number of independent sets and complete subgraphs in a
    graph~$G$. Computing each of these is \numP{}-complete~\cite{Val1979:Enumeration} and they
    correspond to \nPartitions{\mtwo***0} and \nPartitions{\mtwo***1},
    respectively.

    We first show that, for any fixed integers $\alpha$ and~$\beta$,
    computing the function $\theta_{\alpha,\beta}(G) = \alpha\IS(G) +
    \beta\Clique(G)$ is also \numP{}-complete unless $\alpha=\beta=0$.
    Assume that $\alpha$ and~$\beta$ are both non-zero as the result is
    trivial, otherwise.  Observe that, for any graph~$G$,
    \begin{gather*}
        \IS(G+K_1) = 2\IS(G) \\
        \Clique(G+K_1) = \Clique(G) + 1\,.
    \end{gather*}
    Therefore,
    \begin{equation*}
        \theta_{\alpha,\beta}(G+K_1) - \theta_{\alpha,\beta}(G)
            = \alpha\IS(G) + \beta\,,
    \end{equation*}
    which is \numP{}-complete to compute since $\alpha\neq 0$.
Thus, we have shown that computing $\theta_{\alpha,\beta}(\cdot)$ is \numP-complete.

Now rearrange \eqref{eq:two} as in the proof of Lemma~\ref{lem:first}.
 \begin{align*}
\hspace{5em}&\hspace{-5em}
 C_{M_0}^{\pi,\tau}(G)\, Z_{M_0}(G)\;+\;
 C_{M_1}^{\pi,\tau}(G)\, Z_{M_1}(G)  \\
 &=\; Z_M(\ourinst^{\pi,\tau}(k,G)) \;-\; C_M^{\pi,\tau}(k)\, Z_M(G)
    \;\;- \hspace{-1.25em}\sum_{M'\not\in \{M_0,M_1\}}\hspace{-1em} C_{M'}^{\pi,\tau}(k)\, Z_{M'}(G)\,,
\end{align*}
where the sum is over one element from each $\equiv$-equivalence class
of proper principal submatrices~$M'$ of~$M$ other than
the equivalence classes of~$M_0$ and~$M_1$.
Writing $\IS(G)$ for $Z_{M_0}(G)$ and
$\Clique(G)$ for $Z_{M_1}(G)$, and taking
$\alpha = C_{M_0}^{\pi,\tau}(k)$
and $\beta = C_{M_1}^{\pi,\tau}(k)$, we get
\begin{equation*}
\theta_{\alpha,\beta}(G)
 \;=\; Z_M(\ourinst^{\pi,\tau}(k,G)) \;-\; C_M^{\pi,\tau}(k)\, Z_M(G)
       \;\;- \hspace{-1.25em}\sum_{M'\not\in \{M_0,M_1\}}\hspace{-1em} C_{M'}^{\pi,\tau}(k)\, Z_{M'}(G)\,.
\end{equation*}

Thus, we have reduced the
\numP-hard problem of computing $\theta_{\alpha,\beta}(\cdot)$ to
the problem of evaluating the right-hand side, which can be done in polynomial time with
an oracle for \nPartitions{M}.
We conclude that \nPartitions{M} is
    \numP{}-complete.
\end{proof}

Lemmas~\ref{lem:first} and~\ref{lemma:IS-clique}
give us a tool for identifying some hard matrices~$M$.
However, neither of these lemmas helps with our example
matrix~\eqref{eq:example-M}.  To make progress, we will use interpolation.
First, in Theorem~\ref{thm:clique-partition}, we will show that the value of $Z_M^S(\Gamma^\tau_k)$ is very constrained
--- there are only a few possible values, depending on~$k$.  Further,
in Lemma~\ref{lemma:lin-ind} we will show that
these values are linearly independent as functions of~$k$.  We will later
use this fact to prove hardness by interpolation.

\begin{definition}
    Let $f_{\ell,s}(k) = \ffact{k}{\ell}\; (s-\ell)!\, \stirnum{k-\ell}{s-\ell}$.
\end{definition}

$f_{\ell,s}(k)$ is the number of ways that a set of size~$k$ can be
partitioned into $s$~parts, the first~$\ell$ of which have size
exactly~$1$ and the remaining $s-\ell$ of which have size at
least~$1$.

\begin{definition}
Let $M$ be any symmetric matrix in $\{0,1,*\}^{D\times D}$.
Let $\tau \in \{0,1\}$ be a Boolean value.
For  $S\subseteq D$,
let $\ell(M,S,\tau) = |\{ i \in S \mid M_{i,i}=\tau \oplus 1 \}|$.
Let $$\badones(M,\tau) = \{S \mid
\text{ $\ell(M,S,\tau)=|S|$ or
there are distinct $i,j\in S$ with $M_{i,j}=\tau \oplus 1$}\}\,.$$
\end{definition}

Intuitively, $\badones(M,\tau)$ is
the set of subsets $S$ of~$D$
that will not be useful for
$S$-surjectively partitioning the gadget $K_k^\tau$
(as long as $k>|D|$).
For example, if there are distinct $i,j\in S$ with $M_{i,j}=\tau \oplus 1$
then we can't simultaneously use parts~$i$ and~$j$,
so an $S$-surjective partition is impossible. We will
see below that an $S$-surjective partition is also impossible
if $\ell(M,S,\tau)=|S|$.
The following theorem shows that as long as $S\not\in\badones(M,\tau)$
the number of $S$-surjective $M$-partitions of
$\Gamma_k^\tau$ is a simple function of~$k$.

\begin{theorem}
\label{thm:clique-partition}
Let $M$ be any symmetric matrix in $\{0,1,*\}^{D\times D}$
and suppose $S\subseteq D$ and $\tau \in \{0,1\}$.
If $S \in \badones(M,\tau)$ then, for all $k>|D|$, $Z^S_M(\Gamma_k^\tau)=0$.
Otherwise,   for all $k>|D|$,
$Z^S_M(\Gamma_k^\tau) = f_{\ell(M,S,\tau),|S|}(k)$.
 \end{theorem}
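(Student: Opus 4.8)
I would first collapse the defining condition of an $M$-partition of $\Gamma^\tau_k$ to a single, simple constraint, then read off which subsets $S$ are unusable and count the rest directly.

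Since $\Gamma^\tau_k$ is a clique when $\tau=1$ and an independent set when $\tau=0$, every pair of distinct vertices is of the same type. Hence a function $\sigma\colon V(\Gamma^\tau_k)\to D$ is an $M$-partition of $\Gamma^\tau_k$ if and only if $M_{\sigma(u),\sigma(v)}\neq\tau\oplus 1$ for all distinct $u,v$: when $\tau=1$ every pair is an edge so the forbidden entry is $0$, and when $\tau=0$ every pair is a non-edge so the forbidden entry is $1$; in both cases the forbidden entry is $\tau\oplus 1$. Rephrasing in terms of parts, $\sigma$ is an $M$-partition of $\Gamma^\tau_k$ exactly when (i) every part $i$ with $M_{i,i}=\tau\oplus 1$ has $|\sigma^{-1}(i)|\le 1$, and (ii) there are no distinct parts $i,j$ with $M_{i,j}=\tau\oplus 1$ such that $\sigma^{-1}(i)$ and $\sigma^{-1}(j)$ are both nonempty. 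The only subtle point here, which I would flag explicitly, is that (i) permits a forbidden-diagonal part to contain one vertex, since self-loops are irrelevant to the $M$-partition property.

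Now specialise to $S$-surjective $\sigma$, so the image of $\sigma$ is exactly $S$. If $S\in\badones(M,\tau)$ there are two cases. If there are distinct $i,j\in S$ with $M_{i,j}=\tau\oplus 1$, then by (ii) no $M$-partition has both parts $i$ and $j$ nonempty, yet $S$-surjectivity forces exactly that; so $Z^S_M(\Gamma^\tau_k)=0$. If instead $\ell(M,S,\tau)=|S|$, then by (i) every part of $S$ holds at most one vertex, so an $S$-surjective partition covers at most $|S|\le|D|$ vertices, impossible once $k>|D|$; so again $Z^S_M(\Gamma^\tau_k)=0$. This is the only place the hypothesis $k>|D|$ is used. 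For $S\notin\badones(M,\tau)$, condition (ii) never fires among parts of $S$ (for all distinct $i,j\in S$, $M_{i,j}\neq\tau\oplus 1$), so an $S$-surjective $M$-partition is precisely a surjection $\sigma\colon V(\Gamma^\tau_k)\to S$ in which each of the $\ell=\ell(M,S,\tau)$ forbidden-diagonal parts is a singleton (nonempty by surjectivity, size $\le1$ by (i)); also $\ell<|S|$ here, since $\ell=|S|$ would put $S$ in $\badones(M,\tau)$. Writing $s=|S|$, I count these surjections by first selecting in order the $\ell$ vertices for the $\ell$ designated singleton parts ($\ffact{k}{\ell}$ ways) and then mapping the remaining $k-\ell$ vertices onto the remaining $s-\ell$ parts surjectively ($(s-\ell)!\,\stirnum{k-\ell}{s-\ell}$ ways), obtaining exactly $f_{\ell,s}(k)$.

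\textbf{Main obstacle.} There is no substantial obstacle: the argument is essentially bookkeeping. The step requiring the most care is establishing that (i) and (ii) together are equivalent to the $M$-partition property for $\Gamma^\tau_k$ — in particular correctly treating singleton forbidden-diagonal parts — and then being careful, when $S\notin\badones(M,\tau)$, that (ii) imposes no further restriction on $\sigma$ beyond the singleton condition coming from (i), so that the count is the full surjection count $f_{\ell,s}(k)$ and nothing less.
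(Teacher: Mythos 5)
Your proposal is correct and follows essentially the same route as the paper's proof: the same case split (distinct $i,j\in S$ with $M_{i,j}=\tau\oplus 1$ forces $Z^S_M=0$; $\ell(M,S,\tau)=|S|$ forces $Z^S_M=0$ once $k>|D|$; otherwise count by choosing the $\ell$ singletons in $\ffact{k}{\ell}$ ways and surjecting the rest in $(s-\ell)!\,\stirnum{k-\ell}{s-\ell}$ ways). Your explicit reformulation of the $M$-partition condition on $\Gamma^\tau_k$ as conditions (i) and (ii), including the remark that self-loops permit a forbidden-diagonal part to hold one vertex, is just a more careful writing of what the paper leaves implicit.
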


\begin{proof}

{\bf Case 1.} Suppose
 there are distinct $i,j\in S$ with $M_{i,j}=\tau \oplus 1$. Then no $M$-partition of
any $\Gamma_k^\tau$
can place elements in
 both parts $i$ and~$j$.
 Thus, for any $k$,
   there are no $S$-surjective
    $M$-partitions of $\Gamma_k^\tau$, so $Z_M^S(\Gamma_k^\tau)=0$.

 {\bf Case 2.} Suppose we are not in Case~1.
 Let $S' = \{i\in S\mid M_{i,i}=\tau \oplus 1\}$
 so $\ell(M,S,\tau) = |S'|$.
In any
$S$-surjective $M$-partition of any~$\Gamma_k^\tau$, every part in $S'$ must
contain exactly one vertex.

{\bf Case 2a.} If $S\in \badones(M,\tau)$ then
 $|S'|=|S|$, so for all $k>|D|\geq|S'|$, we have $Z_M^S(\Gamma_k^\tau)=0$.

{\bf Case 2b.} Otherwise, $S\notin \badones(M,\tau)$. Let
$\ell=\ell(M,S,\tau)<|S|$.
Now, for any $k>|S|$,
$Z^S_M(\Gamma_k^\tau) = f_{\ell,|S|}(k)$. To see this, note that
    there are $\ffact{k}{\ell}$ ways to choose one vertex of~$\Gamma_k^\tau$ to place
    in each part in~$S'$.  This leaves the remaining $k-\ell$
    vertices to be surjectively placed in the $|S|-\ell$ parts in
    $S\setminus S'$.  There are $(|S|-\ell)!\,
    \stirnum{k-\ell}{|S|-\ell}$ ways of doing this. \end{proof}

Since $f_{\ell,s}(k)$ can be evaluated in
polynomial time (as a function of~$k$), we obtain the following corollary.

\begin{corollary}\label{cor:used}
    For any symmetric matrix $M$ in $\{0,1,*\}^{D\times D}$
and any $S\subseteq D$, the
$S$-surjective
    $M$-partitions of complete and empty graphs can be counted in
    polynomial time.
\end{corollary}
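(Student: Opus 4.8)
The plan is to read the corollary off Theorem~\ref{thm:clique-partition} for all sufficiently large~$k$ and to handle the remaining, finitely many, inputs by brute force. Recall that $\Gamma^1_k$ is the $k$-vertex complete graph and $\Gamma^0_k$ is the $k$-vertex empty graph, so counting the $S$-surjective $M$-partitions of a complete or empty graph on~$k$ vertices is exactly the task of computing $Z^S_M(\Gamma^\tau_k)$ for the appropriate $\tau\in\{0,1\}$. Since $M$ and~$S$ are fixed, we measure the running time as a function of~$k$; note that the output itself already has $\Theta(k)$ bits, so ``polynomial in~$k$'' is the natural notion.

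First I would dispose of the case $k\leq|D|$. Here there are only constantly many input graphs, and for each one we can enumerate all $|D|^k\leq|D|^{|D|}$ functions $\sigma\colon V(\Gamma^\tau_k)\to D$, check in constant time whether each is an $M$-partition whose image is exactly~$S$, and count those that are; this takes $O(1)$ time. For $k>|D|$, Theorem~\ref{thm:clique-partition} gives a closed form: if $S\in\badones(M,\tau)$ then $Z^S_M(\Gamma^\tau_k)=0$, and otherwise $Z^S_M(\Gamma^\tau_k)=f_{\ell,|S|}(k)$ with $\ell=\ell(M,S,\tau)$. Deciding whether $S\in\badones(M,\tau)$ only involves inspecting the diagonal entries $M_{i,i}$ for $i\in S$ and the off-diagonal entries $M_{i,j}$ for distinct $i,j\in S$, and computing~$\ell$ is similarly a constant-time count; so it remains to evaluate $f_{\ell,|S|}(k)=\ffact{k}{\ell}\,(|S|-\ell)!\,\stirnum{k-\ell}{|S|-\ell}$ in polynomial time.

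The falling factorial $\ffact{k}{\ell}$ is a product of $\ell\leq|D|$ integers, each at most~$k$, and $(|S|-\ell)!$ is a constant. For the Stirling number I would use the explicit formula
\begin{equation*}
\stirnum{k-\ell}{|S|-\ell}=\frac{1}{(|S|-\ell)!}\sum_{j=0}^{|S|-\ell}{(-1)}^{|S|-\ell-j}\binom{|S|-\ell}{j}j^{\,k-\ell}\,,
\end{equation*}
a sum of at most $|D|+1$ terms, each the product of a binomial coefficient with constant lower argument and a power $j^{\,k-\ell}$ with $j\leq|D|$, computable by repeated squaring. Combining these, $f_{\ell,|S|}(k)$, and hence $Z^S_M(\Gamma^\tau_k)$, is computable in time polynomial in~$k$.

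I do not expect any genuine obstacle: the combinatorial content is already in Theorem~\ref{thm:clique-partition}, and the reduction of the corollary to it is essentially bookkeeping. The one point that deserves a moment's care is the claim that the explicit Stirling-number formula can be evaluated in polynomial time, since it contains the power $j^{\,k-\ell}$, which is exponentially large as a value; this is fine because $j$ is bounded by the constant~$|D|$, so $j^{\,k-\ell}$ has only $O(k\log|D|)$ bits and all the arithmetic stays polynomial in~$k$.
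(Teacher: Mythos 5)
Your proposal is correct and follows the same route as the paper, which simply observes that $f_{\ell,s}(k)$ is polynomial-time evaluable and invokes Theorem~\ref{thm:clique-partition}; you merely spell out the details the paper leaves implicit (the finitely many cases with $k\leq|D|$ and the bit-complexity of evaluating the Stirling-number formula).
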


\begin{lemma}
\label{lemma:lin-ind}
Suppose $|D|\geq 2$.
Then there is a full rank matrix $F$
satisfying the following properties.
\begin{itemize}
\item The columns of $F$
are
indexed by the pairs $(\ell,s)$ with $0 \leq \ell < s \leq |D|$.
\item The rows of $F$
are indexed by $\binom{|D|+1}{2}$
distinct values $k_1 < k_2 < \ldots$, all of which are greater than $|D|$.
\item For each row~$k_i$ and each column~$(\ell,s)$,
the corresponding entry in~$F$
is $f_{\ell,s}(k_i)$.
\end{itemize}
\end{lemma}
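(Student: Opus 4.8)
Since $\sum_{s=1}^{|D|}s=\binom{|D|+1}{2}$, the matrix $F$ in the statement is square, so ``full rank'' means ``nonsingular'', and the plan is to obtain it in two steps. The first step is the analytic heart of the argument: I would prove that \emph{every nonzero rational linear combination $\sum_{0\le\ell<s\le|D|}a_{\ell,s}\,f_{\ell,s}(k)$ is nonzero for all sufficiently large integers~$k$} (call this~(A)); in particular the functions $f_{\ell,s}$ are linearly independent on the set of integers exceeding~$|D|$. The second step is the routine deduction of a nonsingular $F$ from~(A) by induction on the size of the matrix: after fixing an enumeration $(\ell_1,s_1),(\ell_2,s_2),\dots$ of the index pairs, suppose we have already chosen integers $|D|<k_1<\dots<k_m$ for which $\big(f_{\ell_i,s_i}(k_j)\big)_{i,j\le m}$ is nonsingular; let $h(x)$ be the determinant of the $(m{+}1)\times(m{+}1)$ matrix whose first $m$ columns are $\big(f_{\ell_i,s_i}(k_j)\big)_{i\le m+1,\,j\le m}$ and whose last column is $\big(f_{\ell_i,s_i}(x)\big)_{i\le m+1}$. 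Expanding $h$ along its last column writes it as a linear combination of $f_{\ell_1,s_1},\dots,f_{\ell_{m+1},s_{m+1}}$ in which the coefficient of $f_{\ell_{m+1},s_{m+1}}$ is the nonzero minor $\det\big(f_{\ell_i,s_i}(k_j)\big)_{i,j\le m}$, so $h$ is a nonzero combination; by~(A) there is an integer $k_{m+1}>\max(|D|,k_m)$ with $h(k_{m+1})\ne0$, which extends the nonsingular minor by one row and column. Iterating $\binom{|D|+1}{2}$ times produces the required matrix.

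To prove~(A) I would first read off the growth rate of each $f_{\ell,s}$. The falling factorial $\ffact{k}{\ell}$ is a monic polynomial of degree~$\ell$ in~$k$, hence asymptotic to $k^{\ell}$; and $(s-\ell)!\,\stirnum{k-\ell}{s-\ell}$ counts the surjections from a $(k{-}\ell)$-set onto an $(s{-}\ell)$-set, whose inclusion--exclusion expansion is dominated by the term $(s-\ell)^{k-\ell}$ (see also~\eqref{eq:Stirling}), so it is asymptotic to $(s-\ell)^{k-\ell}$. Therefore
\[
    f_{\ell,s}(k)\;\sim\;k^{\ell}\,(s-\ell)^{k-\ell}\;=\;(s-\ell)^{-\ell}\,k^{\ell}\,(s-\ell)^{k}\qquad(k\to\infty)\,.
\]
Given a nonzero family $(a_{\ell,s})$, let $b^{*}=\max\{\,s-\ell : a_{\ell,s}\ne0\,\}$ and, among the index pairs with $s-\ell=b^{*}$, let $\ell^{*}$ be the largest value of~$\ell$ with $a_{\ell,\ell+b^{*}}\ne0$. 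Dividing $\sum a_{\ell,s}f_{\ell,s}(k)$ by $k^{\ell^{*}}(b^{*})^{k}$ and letting $k\to\infty$, every summand with $s-\ell<b^{*}$ carries an exponentially small factor $\big((s-\ell)/b^{*}\big)^{k}$ and so tends to~$0$, every summand with $s-\ell=b^{*}$ and $\ell<\ell^{*}$ carries a factor $k^{\ell-\ell^{*}}\to0$, and the one remaining summand tends to $a_{\ell^{*},\ell^{*}+b^{*}}(b^{*})^{-\ell^{*}}\ne0$. Hence the combination is asymptotic to $a_{\ell^{*},\ell^{*}+b^{*}}(b^{*})^{-\ell^{*}}k^{\ell^{*}}(b^{*})^{k}$ and is therefore nonzero for all large~$k$, which is~(A).

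The argument is essentially bookkeeping, and the main delicacy is the claim that exactly one summand survives the limit in the proof of~(A): this holds because the map $(\ell,s)\mapsto(\ell,\,s-\ell)$ is injective on the index set $\{0\le\ell<s\le|D|\}$, so the pair singled out by the maximisations defining $b^{*}$ and~$\ell^{*}$ is unique. The hypothesis $|D|\ge2$ only serves to ensure the index set is nonempty. Beyond organising the asymptotics of $f_{\ell,s}$ cleanly I do not anticipate a genuine obstacle.
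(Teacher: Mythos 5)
Your proposal is correct and follows essentially the same route as the paper: both rest on the growth estimate $f_{\ell,s}(k)\asymp k^{\ell}(s-\ell)^{k-\ell}$, single out the summand maximising $s-\ell$ and then $\ell$ among those with nonzero coefficient, and then choose the evaluation points $k_1<k_2<\cdots$ greedily so that each new column escapes the span of the previous ones. Your packaging via the independence statement (A) and the cofactor expansion of the bordered determinant is a tidier, limit-based version of the paper's explicit inequalities, but not a genuinely different argument.
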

\begin{proof}

Let $d=|D|$
and let $U = \{(\ell,m) \mid
\mbox{$0\leq \ell< d $ and $1\leq m\leq d-\ell$}\}$.
For $(\ell,m)\in U$,
let $\phi_{\ell,m}(k) = f_{\ell,\ell+m}(k)$.
The stated properties of the matrix $F$
indicate that
the function~$\phi_{\ell,m}$ maps every row index~$k$   to the entry in
row~$k$ and column $(\ell,\ell+m)$ of~$F$.
Let $\Phi = \{ \phi_{\ell,m} \mid  (\ell,m) \in U\}$.

We will show that the functions in $\Phi$ (which correspond to the columns of~$F$)
are linearly independent (as functions of~$k$).
To do this, we define a strict ordering $<$ on functions in~$\Phi$.
Then we will show that for any $\phi\in \Phi$, the function~$\phi$ cannot be expressed
as a linear combination of the functions in $\{ \phi' \in \Phi \mid \phi' < \phi\}$,
because it grows too fast as $k$ increases.
Then we will also be able to conclude that
$\binom{d+1}{2}$ row indices can be chosen so that the matrix $F$ has full rank,
and the other properties in the statement of the lemma are satisfied.

We first define the ordering on the $\binom{d+1}{2}$ functions in~$\Phi$.
We do this by defining a lexicographic ordering on the set $U$ of column indices,
and then ordering the functions in~$\Phi$ accordingly.
For $(\ell',m')$ and $(\ell,m)$ in~$U$,
we say that $(\ell',m') < (\ell,m)$  if one of the following is true:
\begin{itemize}
\item $m'<m$, or
\item $m'=m$ and
$\ell'<\ell$.
\end{itemize}
We use the natural induced order on functions:
$\phi_{\ell',m'} < \phi_{\ell,m}$ if and only if
$(\ell',m') < (\ell,m)$.

For convenience, let
$\Phi_{\ell,m} = \{ \phi \in \Phi \mid \phi < \phi_{\ell,m} \}$.
We will show that $\phi_{\ell,m}$ is not in the
span of $\Phi_{\ell,m}$, for all  $(\ell,m)\in U$.
We start by deriving bounds on $\phi_{\ell,m}(k)$.
If $k$ is an integer that is  at least~$\ell+m\ln 2m$,
then, from Equation~\eqref{eq:Stirling}, we have
 \begin{equation*}
        \phi_{\ell,m}(k)
            =\ffact{k}{\ell}\; m!\, \stirnum{k-\ell}{m}
            =\left\{
               \begin{array}{l@{\ }l}
                 \leq  \ffact{k}{\ell}\; m!\,
            m^{k-\ell}/m! &= \ffact{k}{\ell}\; m^{k-\ell},\\[1ex]
                  \geq  \ffact{k}{\ell}\; m!\,
            \tfrac12 m^{k-\ell}/m!
            &= \tfrac12\ffact{k}{\ell}\; m^{k-\ell}.
               \end{array}
             \right.
    \end{equation*}
Now $k^\ell\geq \ffact{k}{\ell} \geq {(k-\ell)^\ell} = k^\ell {(1-\tfrac{\ell}{k})}^\ell \geq k^\ell(1-\ell^2/k)\geq \tfrac12 k^\ell$ if $k\geq 2\ell^2$. So, if $k\geq 2\ell^2+m\ln 2m$, then
\begin{equation}\label{eq:labelthemall}
        \phi_{\ell,m}(k)
            =\ffact{k}{\ell}\; m!\, \stirnum{k-\ell}{m}
            =\left\{
               \begin{array}{l@{\ }l}
                 \leq  \ffact{k}{\ell}\; m^{k-\ell} &\leq k^{\ell}\; m^{k-\ell},\\[0.5ex]
                  \geq \tfrac12\ffact{k}{\ell}\; m^{k-\ell}
            &\geq \tfrac14 k^{\ell}\; m^{k-\ell}.
               \end{array}
             \right.
    \end{equation}
Now, we wish to show that $\phi_{\ell,m}$ is not in the
span of $\Phi_{\ell,m}$.
The claim is trivial if $\ell=0$ and $m=1$
  since $\Phi_{0,1}=\emptyset$,
so suppose otherwise.
Consider any function~$\psi$ in the linear span
of~$\Phi_{\ell,m}$.
We will show that $\psi$ is not
equal to $\phi_{\ell,m}$.
Clearly, we can assume that $\psi$ is not identically~$0$
since $\phi_{\ell,m}$ is not identically zero.
By the definition of linear span, there are  real numbers $\beta_{\phi}$,
not depending on $k$,
so that
$    \psi(k) =
               \sum_{\phi \in \Phi_{\ell,m}}
                   \beta_{ \phi} \phi(k)$.
First suppose $m'\leq m-1$ for all $\phi_{\ell',m'}\in \Phi$. Plugging in~\eqref{eq:labelthemall}, we 
will show that, if $k$ is sufficiently large, then
\begin{equation}\label{eq:blahblah}
        \psi(k) \leq
               \sum_{\phi \in \Phi_{\ell,m}} \beta_{\phi} k^d (m-1)^k \leq \beta_{\Phi}k^d (m-1)^k
< \tfrac18 k^\ell m^{k-\ell} \leq \tfrac12\phi_{\ell,m}(k),
\end{equation}
where  $ \beta_{\Phi} = \sum_{\phi \in \Phi_{\ell,m}} |\beta_{\phi}|>0$. 
Note that $\beta_{\Phi}$ depends on $\psi$, $\ell$ and $m$ but not on~$k$. Now \eqref{eq:blahblah} holds if 
$k\geq 2\ell^2+m\ln 2m$ (for the final inequality) and
$8\beta_{\Phi} m^\ell k^d (1-1/m)^k < 1$ (for the strict inequality).
The latter inequality is true if $k^d e^{-k/m} < 1/(8\beta_{\Phi} m^\ell)$. 
Now $k^d\leq e^{k/2m}$ if 
$k/(\ln k) > 2d m$, which is true if $k > 4 m^2 d^2$ (since $\ln k < \sqrt{k}$ for all $k\geq 1$).
 So, if 
 $k >  \max(2\ell^2+m\ln 2m, 4 m^2 d^2)$,
 the condition becomes  $e^{k/2m}> 8\beta_{\Phi} m^\ell$, i.e. $k>2m\ln(8\beta_{\Phi}m^\ell)$. So, if 
 $k>\max( 2\ell^2+m\ln 2m, 4 m^2 d^2, 2m\ln(8\beta_{\Phi}m^\ell))$,  then $\psi(k) < \frac12\phi_{\ell,m}(k)$, so $\psi \neq \phi_{\ell,m}$.

In the general case, let $\Phi' = \{ \phi \in \Phi_{\ell,m} \mid \phi < \phi_{d,m-1} \}$ and $\Phi'' = \{ \phi_{\ell',m} \in \Phi_{\ell,m} \mid \ell'<\ell \}$. Thus
\[ \psi(k) = \sum_{\phi \in \Phi_{\ell,m}} \beta_{\phi} \phi(k) = \sum_{\phi \in \Phi'} \beta_{\phi} \phi(k) +\sum_{\phi \in \Phi''} \beta_{\phi} \phi(k).\]
Now, using the proof of~\eqref{eq:blahblah} above, 
\[ \sum_{\phi \in \Phi'} \beta_{\phi} \phi(k)\ <\ \tfrac12\phi_{\ell,m}(k),\]
if $k>\max(2\ell^2+m\ln 2m, 4 m^2 d^2, 2m\ln(8\beta_{\Phi'}m^\ell))$, where  $ \beta_{\Phi'} = \sum_{\phi \in \Phi'} |\beta_{\phi}|$. Also, using~\eqref{eq:labelthemall} again,
\[ \sum_{\phi \in \Phi''} \beta_{\phi} \phi(k)< \beta_{\Phi''}k^{\ell-1}m^{k-\ell}
<\tfrac18 k^\ell m^{k-\ell} \leq \tfrac12\phi_{\ell,m}(k),\]
provided that we also have $k>8 \beta_{\Phi''}$, where  $ \beta_{\Phi''} = \sum_{\phi \in \Phi''} |\beta_{\phi}|$. Thus if
\begin{equation}\label{eq:after}
k > k'=\max(2\ell^2+m\ln 2m, 4 m^2 d^2, 2m\ln(8\beta_{\Phi'}m^\ell)), 8 \beta_{\Phi''}),
\end{equation}
we have $\psi(k) < \phi_{\ell,m}(k)$, and so $\psi \neq \phi_{\ell,m}$.

Now we will show how to choose $\binom{d+1}{2}$ row indices $k_1,k_2,\ldots$,
so that $F$ has full rank, and the other properties in the statement of the lemma are satisfied. Order the columns of $F$ according to the ordering $<$ defined above.
We will choose the row-indices $k_1,k_2$ inductively,
using the invariant that $F^i$, which the sub-matrix defined by
the row-indices $k_1,\ldots,k_i$ and the first $i$ columns in~$U$,
has full rank.
The base case, $i=1$, is trivial --- for concreteness, take $k_1=d+1$.
Now consider the inductive step, and the choice of $k_{i+1}$.
Let $(\ell,m)$ denote the $(i+1)$st pair in $U$.
Since $F^i$ has full rank,
there is exactly one linear combination of the first $i$ columns of $F^i$
that agrees with the $(i+1)$st column on the rows with indices $k_1,\ldots,k_i$.
Thus, there is only one possible linear combination
$\psi$ in the linear span of $\Phi_{\ell,m}$ that that agrees with $\phi_{\ell,m}$ on $k_1,\ldots,k_i$.
Now, use \eqref{eq:after} to choose $k'$ so that
$\phi_{\ell,m}(k) > \psi(k)$ for $k> k'$, 
and set $k_{i+1}=\min(k_i,\lceil k'\rceil)+1$. This completes the inductive step, and the proof.
\end{proof}

At this point is helpful to recall our construction
of the graph $\ourinst^{\pi,\tau}(k,G)$ from~$G$. It also
 helps to recall Equation~\eqref{eq:one}.
\begin{equation*}
   Z_M(\ourinst^{\pi,\tau}(k,G)) =
    \sum_{S \subseteq D} Z^S_M(\Gamma^\tau_k)\, Z_{M|_{E^{\pi}(S)}}(G)\,.
\end{equation*}
We know from Theorem~\ref{thm:clique-partition}
that, for any matrix $M|_{E^{\pi}(S)}$ corresponding to an element~$S$ of the sum,
either
$S\in \badones(M,\tau)$ in which case
 the function $Z^S_M(\Gamma^\tau_k) $  is identically zero
(assuming $k>|D|$)
or $S\notin\badones(M,\tau)$ in which case
it is identically the function
 $f_{\ell(M,S,\tau),|S|}(k)$
(as a function of~$k$).
Let
$$ \mathcal{S}(\ell,s,M,\tau) =
\{ S \in \powerset{D} \setminus \badones(M,\tau) \text{ such that }
|S|=s \text{ and } \ell(M,s,\tau)=\ell \}\,.$$
$\mathcal{S}(\ell,s,M,\tau)$ is the set of sets $S\subseteq D$
such that $Z^S_M(\Gamma^\tau_k) = f_{\ell,s}(k)$.
Thus, we can rewrite Equation~\eqref{eq:one} for $k>|D|$
as
\begin{equation}
\label{eq:anotherone}
 Z_M(\ourinst^{\pi,\tau}(k,G)) \ \ =\!\!\!
 \sum_{0 \leq \ell < s \leq |D|}\!\!\!
 f_{\ell,s}(k)
 \sum_{S \in \mathcal{S}(\ell,s,M,\tau)} \!\!\!  Z_{M|_{E^{\pi}(S)}}(G)\,.
\end{equation}

Now the point is that the $f_{\ell,s}(k)$ entries are linearly independent
functions of~$k$ by Lemma~\ref{lemma:lin-ind}.
We will see in the proof of Theorem~\ref{thm:interpolation}
that we will be be able to choose  sufficiently many values of~$k$,
evaluate the left-hand side $Z_M(\ourinst^{\pi,\tau}(k,G))$ for each of these using an oracle for \nPartitions{M}
and then interpolate to compute each ``coefficient'' of $f_{\ell,s}(k)$ on the right-hand side.
That is, we show how to compute each  value
$\sum_{S\in\mathcal{S}(\ell,s,M,\tau)}   Z_{M|_{E^{\pi}(S)}}(G)$.
If computing one of these values (for an input~$G$) is a hard problem, then we will have proved that \nPartitions{M} is
also \numP-complete.

Before we proceed it will help to rewrite~\eqref{eq:anotherone} one last time,
splitting the sum   over principal submatrices of~$M$.
 For $0 \leq \ell < s \leq |D|$,
 let
$$A^{\pi,\tau}_M (\ell,s) =
   \{M|_{E^\pi(S)} \mid S \in \mathcal{S}(\ell,s,M,\tau)\}\,.$$
$A^{\pi,\tau}_M(\ell,s)$ is
just the set of matrices $M'$ such that the coefficient of $f_{\ell,s}(k)$
in \eqref{eq:anotherone} has a $Z_{M'}(G)$ term.
As before, we will need to deal with equivalences between matrices.
Let $A^{\pi,\tau}_M(\ell,s) /{\equiv}$ be
the set containing one matrix from each $\equiv$-equivalence class
of
$A^{\pi,\tau}_M(\ell,s)$.
For each matrix $M'$ in
$A^{\pi,\tau}_M(\ell,s) /{\equiv}$, let
$$n_{M'}(\ell,s) =   |\{ S \in   \mathcal{S}(\ell,s,M,\tau)
\mid M|_{E^{\pi}(S)}  \equiv M'
\}|\,.$$
$n_{M'}(\ell,s)$ is just the number of times that a term $Z_{M''}(G)$ arises
in the coefficient of $f_{\ell,s}(k)$
where $M''\equiv M'$.
 Now,
for $k>|D|$ we can rewrite Equation~\eqref{eq:anotherone} as
\begin{align}
        Z_M(\ourinst^{\pi,\tau}(k, G))\ \ &=
            \!\! \sum_{0\leq \ell < s\leq |D|} \!\!
                f_{\ell,s}(k)\,T^{\pi,\tau}_{M,\ell,s}(G)\,,\label{eq:lineq}
\intertext{where}
        T^{\pi,\tau}_{M,\ell,s}(G) \hspace{1em} &= \hspace{-1em}
                     \sum_{M'\in A_M^{\pi,\tau}(\ell,s)/{\equiv}}
                     \hspace{-1em}
                         n_{M'}(\ell,s)\, Z_{M'}(G)\,.\label{eq:inner}
    \end{align}

\begin{theorem}
\label{thm:interpolation}
Let $M$ be any symmetric matrix in $\{0,1,*\}^{D\times D}$.
Suppose that there are  $\ell$ and $s$
satisfying    $0\leq \ell < s \leq |D|$
and Boolean values $\pi$ and $\tau$ in $\{0,1\}$ such that,
     up to
    $\equiv$-equivalence, $A^{\pi,\tau}_M(\ell,s)$ contains either
    \begin{itemize}
    \item exactly one hard proper principal submatrix of~$M$; or
    \item exactly two hard proper principal submatrices of~$M$ and
        these are $\mtwo***0$ and $\mtwo***1$.
    \end{itemize}
    Then \nPartitions{M} is \numP{}-complete.
\end{theorem}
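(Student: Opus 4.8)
The plan is to regard \eqref{eq:lineq} as a system of linear equations whose unknowns are the quantities $T^{\pi,\tau}_{M,\ell,s}(G)$, one for each pair $(\ell,s)$ with $0\le\ell<s\le|D|$, and to solve it by interpolation, using the linear independence of the functions $f_{\ell,s}$ provided by Lemma~\ref{lemma:lin-ind}. Once $T^{\pi,\tau}_{M,\ell,s}(G)$ has been recovered for the pair $(\ell,s)$ named in the hypothesis, its expansion \eqref{eq:inner} is a known linear combination of the values $Z_{M'}(G)$ over proper principal submatrices $M'$ of~$M$ (possibly together with a term $Z_M(G)$); since the complexity of each $Z_{M'}$ is known from the dichotomy of Hell, Hermann and Nevisi~\cite{HHN}, I would peel off the tractable terms and arrange to compute a \numP{}-hard quantity, giving the reduction.

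In detail, I would first invoke Lemma~\ref{lemma:lin-ind} to obtain $\binom{|D|+1}{2}$ integers $k_1<k_2<\cdots$, all exceeding~$|D|$, for which the square matrix $F$ with entry $f_{\ell,s}(k_i)$ in the row indexed by $k_i$ and the column indexed by $(\ell,s)$ has full rank. These integers, and all the coefficients $n_{M'}(\ell,s)$, are determined by~$|D|$ alone and hence are constants; in particular each $\ourinst^{\pi,\tau}(k_i,G)$ has size polynomial in~$|G|$. Given an instance~$G$, I would query the \nPartitions{M} oracle to compute $Z_M(\ourinst^{\pi,\tau}(k_i,G))$ for each~$i$. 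By \eqref{eq:lineq} (valid since every $k_i>|D|$), the vector of these values equals $F$ applied to the vector of the $T^{\pi,\tau}_{M,\ell,s}(G)$. Since $F$ is square and of full rank, it is invertible over the rationals, so the system can be solved exactly in polynomial time, recovering every $T^{\pi,\tau}_{M,\ell,s}(G)$, and in particular the one for the relevant $(\ell,s)$.

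Next I would expand that value using \eqref{eq:inner} as $\sum_{M'}n_{M'}(\ell,s)\,Z_{M'}(G)$, the sum running over one representative from each $\equiv$-class of $A^{\pi,\tau}_M(\ell,s)$. A term $Z_M(G)$ may appear (when $M|_{E^\pi(S)}=M$ for some contributing~$S$), but it is obtained with one further oracle call; every proper~$M'$ that is easy contributes a term in~\FP{}. In the first case of the hypothesis, up to $\equiv$ there is exactly one hard proper submatrix~$M''$ in $A^{\pi,\tau}_M(\ell,s)$, and $n_{M''}(\ell,s)\ge 1$ since $M''$ actually occurs; rearranging \eqref{eq:inner} gives $Z_{M''}(G)$ as $1/n_{M''}(\ell,s)$ times $T^{\pi,\tau}_{M,\ell,s}(G)$ minus all the remaining terms, each of which is polynomial-time computable. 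This is a polynomial-time Turing reduction from \nPartitions{M''} to \nPartitions{M}, so $M$ is hard. In the second case, up to $\equiv$ the hard proper submatrices are exactly $M_0=\mtwo***0$ and $M_1=\mtwo***1$, with $Z_{M_0}(G)=\IS(G)$ and $Z_{M_1}(G)=\Clique(G)$; subtracting the tractable terms from $T^{\pi,\tau}_{M,\ell,s}(G)$ leaves $\alpha\IS(G)+\beta\Clique(G)$ with $\alpha=n_{M_0}(\ell,s)\ge 1$ and $\beta=n_{M_1}(\ell,s)\ge 1$, and the argument already used in Lemma~\ref{lemma:IS-clique}, via $\IS(G+K_1)=2\IS(G)$ and $\Clique(G+K_1)=\Clique(G)+1$, shows that computing this is \numP{}-complete because $\alpha\ne 0$; so again $M$ is hard.

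I do not expect a genuine obstacle here: the substantive content --- that the $f_{\ell,s}$ are linearly independent, so that a full-rank $F$ exists with interpolation nodes bounded in terms of~$|D|$ --- is exactly Lemma~\ref{lemma:lin-ind}, and everything above is bookkeeping on top of it. The only points requiring a little care are confirming that the nodes $k_i$ and the coefficients $n_{M'}(\ell,s)$ are constants (so the reduction is polynomial-time), that $n_{M''}(\ell,s)$ (respectively~$\alpha$) is strictly positive, and that the oracle must be queried both on the gadgeted graphs and, if needed, on~$G$ itself to eliminate a possible $Z_M(G)$ term; all of these are immediate from Lemma~\ref{lemma:lin-ind} and the hypothesis.
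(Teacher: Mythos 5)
Your proposal is correct and follows essentially the same route as the paper: use Lemma~\ref{lemma:lin-ind} to get a full-rank interpolation matrix $F$ at nodes $k_1,\dots,k_{\binom{|D|+1}{2}}>|D|$, query the oracle on the graphs $\ourinst^{\pi,\tau}(k_i,G)$, invert $F$ to recover $T^{\pi,\tau}_{M,\ell,s}(G)$, and then peel off the tractable terms of \eqref{eq:inner} (plus a possible $Z_M(G)$ term via one more oracle call) to isolate either a single hard $Z_{M''}(G)$ or the combination $\alpha\IS(G)+\beta\Clique(G)$ handled by the argument of Lemma~\ref{lemma:IS-clique}. The bookkeeping points you flag (constancy of the nodes and coefficients, positivity of $n_{M''}(\ell,s)$) are exactly the ones the paper relies on.
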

\begin{proof}
First, let's go back to Equation~\eqref{eq:lineq}.
Let $\numpairs= \binom{|D+1|}{2}$.
Note that $M$, $\pi$ and $\tau$ are all fixed.
Consider a graph~$G$.
In the proof we will consider the quantities
$T_{M,\ell,s}^{\pi,\tau}(G)$
to be a set of $\numpairs$ ``variables''
indexed by the pairs $(\ell,s)$.
We will compute the values of these variables
by making multiple
evaluations
of
$Z_M(J^{\pi,\tau}(k,G))$ for different values of~$k$
(using an oracle for \nPartitions{M}).

It will help to have an enumeration of the $\numpairs$ pairs $(\ell,s)$
with $0\leq \ell < s \leq |D|$, so let
$(\ell_j,s_j)$ be the $j$'th such pair (for $1 \leq j \leq \numpairs$).
Choose $\numpairs$ distinct values $k_1,\ldots,k_\numpairs$,
 which meet the requirements of Lemma~\ref{lemma:lin-ind}.
Let $F$ be the $\numpairs\times \numpairs$ integer matrix whose $(i,j)$'th entry
$F_{i,j}$ is $f_{\ell_j,s_j}(k_i)$.

Using an oracle for \nPartitions{M}, we can compute
the entries of a length-$\numpairs$ column vector $\overline{Z}$
whose $i$'th entry is
$Z_M(J^{\pi,\tau}(k_i,G))$.

Let $\overline{T}$ be a length-$\numpairs$ column vector whose
$j$'th entry is the $j$'th variable $T_{M,\ell_j,s_j}^{\pi,\tau}(G)$.
Then Equation~\eqref{eq:lineq}
gives the system of equations
$ \overline{Z} = F \overline{T}$.

Lemma~\ref{lemma:lin-ind} shows that
$F$ has full rank
so $F$ can be inverted, and we can
compute all of the variables $T_{M,\ell,s}^{\pi,\tau}(G)$
using $F^{-1} \overline{Z} = \overline{T}$
and using the \nPartitions{M} oracle to compute the values of $\overline{Z}$.

By Equation~\eqref{eq:inner}, each
variable
    $T^{\pi,\tau}_{M,\ell,s}(G)$ is a sum of terms, each of which
    is a constant multiple of $Z_M(G)$ or of $Z_{M'}(G)$ for some
    proper principal submatrix $M'$ of~$M$.  If exactly one of these
    submatrices~$M'$ is hard, we can use   the
    polynomial-time algorithms for the other problems $Z_{M''}$
    ($M''\not\equiv M'$) to compute $Z_{M'}(G)$ in polynomial time.
    If exactly two of the submatrices~$M'$ are hard and these are
    $\mtwo***0$ and $\mtwo***1$, we can similarly compute
    $\alpha\IS(G) + \beta\Clique(G)$ in polynomial time for constants
    $\alpha,\beta\geq 1$, which is \numP{}-complete by
    Lemma~\ref{lemma:IS-clique}.  In both cases, we conclude that
    \nPartitions{M} is \numP{}-complete.
\end{proof}

We could, in fact, go further and consider the
equations~\eqref{eq:inner} for different values for $\ell$ and~$s$ as
a system of linear equations in variables $Z_{M'}(G)$ for principal
submatrices $M'$ of~$M$.  This system may be underdetermined so it
might not be possible to solve for all the terms $Z_{M'}(G)$ that
appear; however, we do not necessarily need to.  We can still deduce
\numP{}-completeness for any matrix~$M$ for which we can solve the
equations for at least one variable $Z_{M'}(G)$ where $M'$ is a hard
proper principal submatrix.
Similarly, we can still deduce \numP{}-completeness for any matrix~$M$ for which
we can solve the equations for a linear combination of $Z_{M'}(G)$ and $Z_{M''}(G)$
where $M'$ and $M''$ are equivalent to $\mtwo***0$ and $\mtwo***1$.

 It turns out that this extension of our technique is not necessary for
$4\times 4$ matrices, apart from one exceptional case which we resolve
by hand; but this extension would be required to extend the technique
to larger matrices.

Theorem~\ref{thm:interpolation} allows us to show that our example
matrix is hard.  Recall that the matrix is
\begin{equation*}
    M\;=\;\bordermatrix{
              & a & b & c & d \cr
          a\; & 0 & 0 & 1 & * \cr
          b\; & 0 & 0 & 1 & 1 \cr
          c\; & 1 & 1 & 1 & 1 \cr
          d\; & * & 1 & 1 & *}
\end{equation*}
and consider again the graph $\ourinst^{0,0}(k, G)$ for some $k>4$ and
some~$G$.
For $S\in\{a,b,d,ab,ad\}$ we find that
$S\not\in \badones(M,0)$ so there are $S$-surjective $M$-partitions of $\Kkbar$.
Thus, we have

\begin{center}
\begin{tabular}{c|ccccc}
    $S$             & $a$     & $b$      & $d$      & $ab$     & $ad$ \\
\hline
    $\vphantom{\Big(}Z_M^S(\Kkbar)$ & $f_{0,1}$ & $f_{0,1}$ & $f_{0,1}$ & $f_{0,2}$ & $f_{0,2}$ \\
    $E^0(S)$        & $abd$   & $ab$     & $ad$     & $ab$     & $ad$ \\
    $M|_{E^0(S)}$      & hard    & easy     & hard     & easy     & hard
\end{tabular}
\end{center}

Equation~\eqref{eq:lineq} gives
\begin{align*}
    Z_M(\ourinst^{0,0}(k,G)) &= f_{0,1}(k)\,T^{\pi,\tau}_{M,0,1}(G) + f_{0,2}(k)\,T^{\pi,\tau}_{M,0,2}(G)\,,
\intertext{where}
    T^{\pi,\tau}_{M,0,1}(G) &= Z_{M|_{abd}}(G) + Z_{M|_{ab}}(G) + Z_{M|_{ad}}(G) \\
    T^{\pi,\tau}_{M,0,2}(G) &= Z_{M|_{ab}}(G) + Z_{M|_{ad}}(G)\,.
\end{align*}
$T^{\pi,\tau}_{M,0,1}(G)$ contains two terms that are partition functions of hard
matrices so is not useful to us but $T^{\pi,\tau}_{M,0,2}(G)$ contains only one
($Z_{M|_{ad}}$, which counts independent sets).  Therefore, by
Theorem~\ref{thm:interpolation}, \nPartitions{M} is \numP{}-complete.
Given an oracle for $Z_M$, we could obtain the value of $T^{\pi,\tau}_{M,0,2}(G)$ by
interpolation and, from that, we could compute $Z_{M|_{ad}}$.

\section{The computer-assisted dichotomy}
\label{sec:computer}

So far, we have seen three techniques for determining the
computational complexity of the \nPartitions{M} problem for a given
matrix~$M$.  If $M$~is pure, \nPartitions{M} is a graph homomorphism
problem, so $M$ is hard if, and only if, it has a $2\times 2$ submatrix
containing exactly three~$*$s.  For impure~$M$,
Lemma~\ref{lemma:doubletons} allows us to identify a class of
tractable matrices and the techniques of Section~\ref{sec:hard} allow
us to identify a class of hard matrices.
We were unable to prove that
the last two cases cover all impure $4\times 4$ matrices, so we wrote
a computer program to check all such matrices, as follows.

The number of distinct symmetric $4\times 4$ $\{0,1,*\}$-matrices is
modest: at most $3^{10} = 59,049$.  Thus, from a computational point
of view it is not necessary to do anything to reduce the search space.
However, it turns out that the methods described above are not enough
to determine the complexity of \nPartitions{M} for all symmetric
$4\times 4$ matrices.  Recall that $M_1\approx M_2$ if $M_1\equiv M_2$
or $\overline{M_1}\equiv M_2$ (i.e., $M_1$~can be transformed
into~$M_2$ by permuting~$D$ and possibly exchanging $0$s and~$1$s).
Since \nPartitions{M_1} and \nPartitions{M_2} are computationally
equivalent when $M_1\approx M_2$, it suffices to consider only one
matrix from each $\approx$-equivalence class.  This minimises the set
of matrices that the program fails to resolve.

To do this, we associate each $4\times 4$ symmetric matrix~$M$ with
the string
\begin{equation*}
 w(M) = M_{a,a} M_{b,b} M_{c,c} M_{d,d} M_{a,b} M_{b,c} M_{c,d} M_{a,c} M_{b,d} M_{a,d}
      \in \{0,1,*\}^{10}\,.
\end{equation*}
The program generates $4\times 4$ matrices in the lexicographic order
induced by taking $0<1<*$.  For each matrix~$M$, we check whether
$w(M') < w(M)$ for any matrix $M'\approx M$.  If there is such
an~$M'\!$, we have already considered a matrix equivalent to~$M$ so we
do not need to consider it again.

For each matrix $M$ that survives (i.e., for the lexicographically
first member of every $\approx$-equivalence class), we apply the
following tests. The correctness of these tests will be explained below.

\begin{enumerate}
\item \label{stepone} If $M$ is pure (contains no $0$'s or no $1$'s)
\begin{enumerate}
\item If $M$ contains a $2\times 2$ submatrix with exactly
three $*$s then \nPartitions{M} is \numP-complete.
\item Otherwise, \nPartitions{M} is in \FP.
\end{enumerate}

\item \label{steptwo} Otherwise, if the test of Lemma~\ref{lemma:doubletons} shows
    that $M$~has no derectangularising sequence then
    \nPartitions{M} is  in \FP{}.

\item \label{stepthree} Otherwise, for each proper principal submatrix $M'$ of~$M$, we
    can determine whether $M'$ is easy or hard using the
    characterisations of Hell, Hermann and Nevisi~\cite{HHN} and Dyer
    and Greenhill~\cite{DG}.
    The program now does the following
    for each $\pi,\tau\in\{0,1\}$,
    and each $0\leq \ell < s \leq |D|$,
    using
    the notation of Section~\ref{sec:hard}.
    It computes the elements of $A_M^{\pi,\tau}(\ell,s)$, up to $\equiv$-equivalence
    and makes the following conclusions.

    \begin{enumerate}
    \item
    If this set contains exactly one hard proper principal submatrix of $M$
    then \nPartitions{M} is \numP-complete.
    \item  If this set contains exactly two hard proper submatrices
    of~$M$ and these are $\mtwo***0$ and $\mtwo***1$
    then \nPartitions{M} is \numP-complete.
    \end{enumerate}

\item If none of the above tests resolves the complexity of
    \nPartitions{M}, output the matrix as having unknown complexity.
\end{enumerate}

The program resolves the complexity of \nPartitions{M} for all
but six $\approx$-equivalence classes of matrices.  These six are handled in
the next section; all turn out to be hard.

We conclude this section by justifying the correctness of the program.
  If $M$ is pure  then
    \nPartitions{M} is equivalent to a homomorphism-counting problem
 so the correctness of Step~\ref{stepone} follows from
 the dichotomy theorem of   Dyer
    and Greenhill~\cite{DG}.
 Now consider Step~\ref{steptwo}.    If  $M$~has no derectangularising sequence then
    \nListPartitions{M} is in \FP{} by Theorem~\ref{thm:list}.  Since
    \nPartitions{M} is just the special case where every vertex has
    list~$D$, \nPartitions{M} is also in \FP{}.
    Finally, the correctness of Step~\ref{stepthree} follows from Theorem~\ref{thm:interpolation}.

\section{The last six matrices}
\label{sec:handproofs}

In this section, we despatch the six matrices
that
our program could not resolve.

\subsection{Bipartite problems}

Let $G=(U,V,E)$ be a bipartite graph and let its \emph{bipartite
  complement} be the graph $(U,V,(U\times V)\setminus E)$.
  Note that the bipartite complement of~$G$ depends on the partition $(U,V)$
  and not just on the vertices and edges of~$G$.
  A
\emph{bipartite clique} in~$G$ is a set $S\subseteq U\cup V$ such that
$G$~contains an edge between every vertex of $S\cap U$ and every
vertex of $S\cap V$.  Note the trivial case that $S$~is a bipartite
clique in~$G$ if $S\subseteq U$ or $S\subseteq V$.

Counting
bipartite cliques in a bipartite graph is \numP{}-complete.  This is
because a bipartite clique in~$G$ is an independent set in $G$'s
bipartite complement and counting independent sets in a bipartite
graph is \numP{}-complete~\cite{ProvanBall}.  The problem of
counting bipartite cliques remains \numP{}-complete when the input is
restricted to be a connected bipartite graph.
To see this, note that counting non-trivial bipartite cliques (with at least one
edge) is
inter-reducible with the problem of counting all bipartite cliques
(since the number of trivial ones is easy to compute). But the number of non-trivial bipartite cliques in a graph   is the sum of the numbers of non-trivial bipartite cliques in each
component.
 
\begin{lemma}
    \nPartitions{M} is \numP{}-complete for
    \begin{equation*}
        M\;=\;\bordermatrix{
                & a & b & c & d \cr
            a\; & 0 & 0 & * & * \cr
            b\; & 0 & 0 & 1 & * \cr
            c\; & * & 1 & 0 & 0 \cr
            d\; & * & * & 0 & 0}.
    \end{equation*}
\end{lemma}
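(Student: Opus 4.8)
The plan is to bypass the gadget-and-interpolation machinery entirely and argue directly about the combinatorial structure of $M$-partitions, reducing from the problem of counting bipartite cliques in connected bipartite graphs, which the discussion just above has shown to be \numP{}-complete.

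First I would read off the shape of~$M$: the principal submatrices $M|_{ab}$ and $M|_{cd}$ are all-$0$, while the only non-$*$ entry of the block $M|_{ab\times cd}$ is $M_{b,c}=1$. Hence, in any $M$-partition~$\sigma$ of a graph~$G$, the set $\sigma^{-1}(\{a,b\})$ is independent, the set $\sigma^{-1}(\{c,d\})$ is independent, every vertex of $\sigma^{-1}(b)$ is adjacent to every vertex of $\sigma^{-1}(c)$, and there are no other constraints. In particular $\sigma^{-1}(\{a,b\})$ and $\sigma^{-1}(\{c,d\})$ partition $V(G)$ into two independent sets, so $Z_M(G)=0$ unless $G$ is bipartite. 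It therefore suffices to analyse $Z_M(G)$ for connected bipartite~$G$, since counting bipartite cliques is already \numP{}-complete on such inputs.

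Next I would fix a connected bipartite graph~$G$, with its bipartition $\{P,Q\}$ into independent sets (unique up to exchanging the two sides). An $M$-partition of~$G$ amounts to choosing which of $P,Q$ is $\sigma^{-1}(\{a,b\})$ --- call this side~$A$, the other~$B$ --- together with a set $B_0=\sigma^{-1}(b)\subseteq A$ and a set $C_0=\sigma^{-1}(c)\subseteq B$, subject only to the requirement that every vertex of~$B_0$ is adjacent to every vertex of~$C_0$; that is, $B_0\cup C_0$ is a bipartite clique of~$G$ with respect to $(A,B)$ (with $B_0$ or $C_0$ possibly empty), and conversely every such choice gives a valid $M$-partition. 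Since the bipartite-clique condition is symmetric in the two sides, the number of valid pairs $(B_0,C_0)$ is the same for both orientations and equals the number of bipartite cliques of~$G$. Hence $Z_M(G)=2\cdot(\text{number of bipartite cliques of }G)$.

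This identity yields a polynomial-time Turing reduction: given a connected bipartite instance~$G$ of the bipartite-clique counting problem, query the \nPartitions{M} oracle for $Z_M(G)$ and return $Z_M(G)/2$. As counting bipartite cliques in connected bipartite graphs is \numP{}-complete, \nPartitions{M} is \numP{}-hard; and it lies in \numP{} because an $M$-partition is a polynomially checkable certificate. The only points needing care --- and the nearest thing to an obstacle --- are getting the factor of~$2$ right (it comes precisely from the two orientations of the bipartition of a connected bipartite graph) and noting that confining the reduction to connected bipartite inputs costs nothing, since the source problem is already hard there; no interpolation is required for this matrix.
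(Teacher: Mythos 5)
Your proposal is correct and follows essentially the same route as the paper: both exploit that an $M$-partition forces a bipartition of $G$ into the independent sets $\sigma^{-1}(\{a,b\})$ and $\sigma^{-1}(\{c,d\})$, which for connected bipartite $G$ is unique up to swapping sides, so that $Z_M(G)$ is exactly twice the number of bipartite cliques. The paper's proof is just a terser version of the same direct reduction (no gadgets or interpolation), with the same source problem and the same factor of two.
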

\begin{proof}

The problem of counting bipartite cliques in a connected bipartite graph
reduces immediately to counting $M$-partitions.
Consider a connected bipartite graph~$G$ with
vertex bipartition~$(U,V)$. Since $G$~is connected,
any $M$-partition of~$G$
either
\begin{itemize}
\item assigns vertices in~$U$ to parts~$a$ and~$b$
and assigns vertices in~$V$ to parts~$c$ and~$d$, or
\item assigns vertices in~$U$ to parts~$c$ and~$d$
and assigns vertices in~$V$ to parts~$a$ and~$b$.
\end{itemize}

In each case, the vertices in parts~$b$ and~$c$ form a bipartite
clique because $M_{b,c}=1$ whereas the other
relevant entries of~$M$ are all stars.

So the $M$-partitions in each case are in one-to-one correspondence
with the bipartite cliques of~$G$.
Therefore, $Z_M(G)$ is twice the number of bipartite cliques
in~$G$.
\end{proof}

\begin{lemma}
    \nPartitions{M} is \numP{}-complete for $M\in \{M_1, M_2, M_3\}$, where
    \begin{equation*} 
        M_1\;=\;\bordermatrix{
                & a & b & c & d  \cr
            a\; & 0 & 0 & * & *  \cr
            b\; & 0 & 0 & 0 & *  \cr
            c\; & * & 0 & 1 & 1  \cr
            d\; & * & * & 1 & 1 }
        \qquad 
        M_2 = \bordermatrix{
                & a & b & c & d \cr
            a\; & 0 & 0 & * & * \cr
            b\; & 0 & 0 & 0 & * \cr
            c\; & * & 0 & 1 & * \cr
            d\; & * & * & * & 1 }
        \qquad 
        M_3 = \bordermatrix{
                & a & b & c & d  \cr
            a\; & 0 & * & * & *  \cr
            b\; & * & 0 & 0 & *  \cr
            c\; & * & 0 & 1 & *  \cr
            d\; & * & * & * & 1  }\,.
    \end{equation*}
    \label{lem:threematrices}
\end{lemma}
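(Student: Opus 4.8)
The strategy is to reduce from counting bipartite cliques in a connected bipartite graph (shown above to be $\numP$-complete), exactly as in the previous lemma, but now combining this reduction with the gadget-and-interpolation machinery of Section~\ref{sec:hard}, since none of these three matrices by itself forces the clean two-way split that made the previous lemma immediate. For each $M_i$ I would first inspect the principal submatrices and identify, via the dichotomies of Hell--Hermann--Nevisi~\cite{HHN} and Dyer--Greenhill~\cite{DG}, which ones are hard; the plan is to locate a hard proper principal submatrix $M'$ of $M_i$ that is itself of ``bipartite'' type --- i.e., for which $\nPartitions{M'}$ is equivalent to counting bipartite cliques (or independent sets) in a connected bipartite graph, rather than to the generic $\mtwo{*}{*}{*}{0}$ problem. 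The key observation to exploit is that each $M_i$ has a $2\times 2$ block of $0$s in the top-left (parts $a,b$) and a $2\times 2$ block of $1$s in the bottom-right (parts $c,d$), so that an $M_i$-partition of a connected graph essentially two-colours the graph with ``the $\{a,b\}$ side'' and ``the $\{c,d\}$ side''; the off-diagonal $0$s and $1$s (e.g.\ $M_{b,c}=0$ in $M_1,M_2$) then impose bipartite-clique-type constraints on which vertices can sit in which part.

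The main steps, in order, would be: (1) For each $M_i$, compute $E^\pi(S)$ for all $S\subseteq D$ and both $\pi\in\{0,1\}$, determine $\badones(M_i,\tau)$ for $\tau\in\{0,1\}$, and tabulate which principal submatrices are $(M_i,\pi)$-accessed by which $S$, together with the relevant $\ell(M_i,S,\tau)$ and $|S|$ values, as was done for the worked example after Theorem~\ref{thm:interpolation}. (2) Assemble the coefficient sets $A^{\pi,\tau}_{M_i}(\ell,s)$ and look for a pair $(\ell,s)$ and Booleans $(\pi,\tau)$ isolating, up to $\equiv$, either a single hard proper submatrix, or exactly the pair $\mtwo{*}{*}{*}{0},\mtwo{*}{*}{*}{1}$; if such a pair exists, Theorem~\ref{thm:interpolation} finishes that matrix directly. (3) For any $M_i$ where step~(2) fails --- which is presumably why these ended up among the six exceptions --- fall back to the extended technique sketched after the proof of Theorem~\ref{thm:interpolation}: set up the full system of equations~\eqref{eq:inner} across all $(\ell,s)$ simultaneously, solve (using full rank of $F$ from Lemma~\ref{lemma:lin-ind}) for whatever linear combination of the $Z_{M'}(G)$ is determined, and argue that this combination already computes a $\numP$-hard quantity --- here is where the ``bipartite clique'' identification pays off, because a suitable combination will count bipartite cliques (times a nonzero constant) in the connected bipartite input, plus polynomial-time-computable terms.

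The hard part will be step~(3): showing that, even when interpolation cannot separate the individual $Z_{M'}(G)$ terms, the particular weighted sum that we can extract is still $\numP$-hard. This requires a small ad hoc argument for each of $M_1,M_2,M_3$ --- analogous to the $\theta_{\alpha,\beta}$ argument in Lemma~\ref{lemma:IS-clique}, where adding a vertex ($G\mapsto G+K_1$) let us peel apart $\IS$ and $\Clique$ --- only now the disentangling operation must respect connectivity and bipartiteness of the input, so we would add a single edge or a path of length two joining the two sides, compute how each $Z_{M'}$ changes under that operation, and verify that the induced difference of the extractable combination is a positive multiple of the bipartite-clique count. A secondary technical nuisance is verifying, for each matrix, that the needed coefficient (the multiplicity $n_{M'}(\ell,s)$ or the constant in front of the bipartite-clique term) is genuinely nonzero, so that the division step is legitimate; this is a finite check but must be carried out carefully for all three matrices and, where step~(3) is needed, for the specific $(\pi,\tau)$ chosen. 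I expect $M_1$ to be the most delicate of the three, since its row/column for part $c$ has two $0$s (entries $M_{b,c}$ and $M_{a,c}$ are $*$ but $M_{b,c}=0$), giving the richest collection of accessed submatrices and hence the greatest chance that naive interpolation collapses several hard terms together.
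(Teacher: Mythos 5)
There is a genuine gap, and it sits exactly where your plan leans hardest on the machinery of Section~\ref{sec:hard}. For each of $M_1,M_2,M_3$, \emph{every} proper principal submatrix is easy: all six $2\times 2$ principal submatrices are impure (hence easy by Hell--Hermann--Nevisi), and none of the four $3\times 3$ principal submatrices contains $\mtwo***0$ or $\mtwo***1$ as a principal submatrix. Consequently your steps (1)--(3) cannot succeed: the sets $A^{\pi,\tau}_{M_i}(\ell,s)$ contain no hard matrices at all, and even your fallback of solving the full linear system only ever yields linear combinations of the quantities $Z_{M'}(G)$ for proper principal $M'$, every one of which is polynomial-time computable. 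No amount of interpolation, extra equations, or $\theta_{\alpha,\beta}$-style disentangling can extract a \numP{}-hard function from a sum of \FP{} functions. The hardness of these three matrices is not witnessed by any principal submatrix; it lives in the \emph{off-diagonal} block $M|_{\{a,b\}\times\{c,d\}}$, which contains three $*$s and one $0$ and therefore encodes counting independent sets in a bipartite graph. Your intuition about the off-diagonal entries imposing ``bipartite-clique-type constraints'' is the right one, but your concrete plan never leaves the principal-submatrix framework and so never reaches that block.

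The missing idea is to break the symmetry of the gadget attachment. The paper takes a bipartite input $G$ with bipartition $(U,V)$ and attaches a $k$-clique $W$ that is completely joined to $V$ but not joined to $U$ at all (a hybrid of $\ourinst^{1,1}$ and $\ourinst^{0,1}$). Interpolation over $k$, exactly as in Theorem~\ref{thm:interpolation}, then isolates the coefficient of $f_{1,3}(k)$, which counts the $M$-partitions of $G$ in which $U$ is forced into $E^0(\{a,c,d\})=\{a,b\}$ and $V$ into $E^1(\{a,c,d\})=\{c,d\}$. That restricted counting problem is governed by $M|_{\{a,b\}\times\{c,d\}}$, and since $M_{b,c}=0$ (for $M_1,M_2$; the analogous entry for $M_3$) while the other three entries are $*$, it is in bijection with the independent sets of the bipartite graph $G$, which is \numP{}-hard to count by Provan--Ball. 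If you want to repair your write-up, replace steps (2)--(3) with this asymmetric construction; the table of $Z_M^S(\Kkplus)$ values for $3$-element $S$ (only $S=\{a,c,d\}$, and for $M_3$ also $\{a,b,d\}$ with the distinct function $f_{2,3}$, survive) is the finite check that makes the isolated coefficient both unique and nonzero.
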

\begin{proof}
    In all three cases $M\in \{M_1,M_2,M_3\}$ we will show how to
    reduce from the \numP{}-complete problem of counting independent
    sets in a bipartite graph to counting $M$-partitions.

    Let $G$ be a bipartite graph with vertex bipartition $(U,V)$.  For
    an integer~$k>4$, construct $G_k$ from~$G$ by adding a set~$W$ of
    $k$~new vertices and  adding all edges between distinct vertices $w$ and $v$
    where $w\in W$ and $v\in V \cup W$. Note that $G_k$ is not bipartite
    because the vertices of~$W$ form a complete subgraph.

This complete subgraph is
the same as the gadget $\Kkplus$ that we have already considered, so
it will be useful to
apply
Theorem~\ref{thm:clique-partition} to all $3$-element sets $S\subseteq D$.
The outcomes for $k>4$ are:
\begin{center}
\begin{tabular}{c|cccc}
    $S$             & $abc$ & $abd$ & $acd$ & $bcd$ \\
\hline
    $\vphantom{\Big(}Z_{M_1}^S(\Kkplus)$ & $0$ & $0$ & $f_{1,3}(k)$ & $0$ \\
    $Z_{M_2}^S(\Kkplus)$ & $0$ & $0$ & $f_{1,3}(k)$ & $0$ \\
    $Z_{M_3}^S(\Kkplus)$ & $0$ & $f_{2,3}(k)$ & $f_{1,3}(k)$ & $0$
\end{tabular}
\end{center}

By Theorem~\ref{thm:clique-partition},
no  set $S$ with $|S|\neq 3$ has $Z_M^S(\Kkplus) = f_{1,3}(k)$.
The exact value of $Z_M^S(\Kkplus)$ as a function of~$k$ for such a set~$S$
will not
be important in the following interpolation argument.

 Now
 consider $S\subseteq D$
 so that $Z_M^S(\Kkplus) = f_{\ell,|S|}(k)$.
 Let $Z_M^{W\mapsto S}(G)$ denote the
 number of $M$-partitions of~$G$
 in which every vertex of~$U$ is assigned a part
 in $E^0(S)$ and every vertex of~$V$ is assigned a part in
 $E^1(S)$.
 This is the number of $M$-partitions of~$G$
 that can be combined with an $S$-surjective
 $M$-partition of
the $k$-clique on~$W$  to get a valid $M$-partition of~$G_k$.
  We will use interpolation  as in the proof of
 Theorem~\ref{thm:interpolation}.
  Suppose $k>4$.
 Using the table above and noting the value~$f_{1,3}(k)$ in the
 $acd$~column, we can write
 $$Z_{M}(G_k) =
 f_{1,3}(k)\, Z_{M}^{W\mapsto \{a,c,d\}}(G)\ \  + \!\!\!
 \sum_{\substack{0 \leq \ell < s \leq 4,\\
 (\ell,s) \neq (1,3)}}\
 \sum_{\substack{S \subseteq D,\\ |S|=s}}
 \mathbf{1}_{Z_{M}^S(\Kkplus)=f_{\ell,s}(k)}\,
 f_{\ell,s}(k)\,
 Z_{M}^{W\mapsto S}(G)\,,$$
 where
 $ \mathbf{1}_{Z_{M}^S(\Kkplus)=f_{\ell,s}(k)}$
 is the indicator for the event
 that $Z_{M}^S(\Kkplus)=f_{\ell,s}(k)$ --
 we know from Theorem~\ref{thm:clique-partition} that,
if this event does not hold, then
 $Z_{M}^S(\Kkplus)=0$.

 As in the proof of Theorem~\ref{thm:interpolation},
 Lemma~\ref{lemma:lin-ind} guarantees that the
  $f_{\ell,s}(k)$ values are
    linearly independent.
So, by varying~$k$ and using an oracle for \nPartitions{M} to
compute the left-hand side, we can compute the
coefficient of $f_{1,3}(k)$, which is
$Z_{M}^{W\mapsto \{a,c,d\}}(G)$.
So, to finish the proof, we just
need to show that computing
$Z_{M}^{W\mapsto \{a,c,d\}}(G)$ is \numP{}-hard.

$Z_{M}^{W\mapsto \{a,c,d\}}(G)$
is the number of $M$-partitions of~$G$
in which every vertex of~$U$ is assigned to a part in   $E^0(\{a,c,d\})=\{a,b\}$
and every vertex of~$V$ is assigned to a part in  $E^1(\{a,c,d\}) = \{c,d\}$.
But note that
edges are forbidden between part~$b$ and part~$c$
so there is a one-to-one correspondence between
these partitions of~$G$
  and the independent sets of~$G$.
  (Vertices assigned to these parts are
  in the corresponding independent set.)
The result follows, since computing independent sets
of a bipartite graph~$G$ is \numP{}-hard.
 \end{proof}

The proof of the following lemma is similar in spirit but with more
details to track.
 
\begin{lemma}
\label{lem:hand-iii}
    \nPartitions{M} is \numP{}-complete for
    \begin{equation*}
        M\;=\;\bordermatrix{
                & a & b & c & d  \cr
            a\; & 0 & 0 & * & *  \cr
            b\; & 0 & 0 & 1 & *  \cr
            c\; & * & 1 & 1 & *  \cr
            d\; & * & * & * & 1  }\,.
    \end{equation*}
\end{lemma}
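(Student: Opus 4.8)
The plan is to adapt the interpolation argument of Lemma~\ref{lem:threematrices}, reducing once more from the \numP{}-complete problem of counting independent sets in a bipartite graph (equivalently, counting bipartite cliques in its bipartite complement). A fresh argument is needed because every proper principal submatrix of~$M$ is easy: each $2\times2$ principal submatrix of~$M$ is one of $\mtwo0000$, $\mtwo0**1$, $\mtwo0111$ and $\mtwo1**1$, of which the first and last are pure and are tractable homomorphism matrices while the other two are impure and hence easy by Hell, Hermann and Nevisi; and each $3\times3$ principal submatrix of~$M$ is impure with no hard principal $2\times2$ submatrix, so by Lemma~\ref{lemma:HHNrephrased} it has no derectangularising sequence and, by Theorem~\ref{thm:list}, the problems of counting its list partitions and of counting its partitions are both in~\FP. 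Thus Theorem~\ref{thm:interpolation} does not apply, and, exactly as in Lemma~\ref{lem:threematrices}, the reduction must recover a hard \emph{list-restricted} partition count rather than a hard submatrix.

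Concretely, from a bipartite input $G=(U,V,E)$ I would build graphs $G_k$, for $k>4$, by attaching a clique (or independent-set) gadget $\Gamma^\tau_k$ of size~$k$ with vertex set~$W$, with all edges between~$W$ and one side of~$G$ and no edges between~$W$ and the other side, in the style of Lemma~\ref{lem:threematrices} (possibly applied to the bipartite complement of~$G$). By Theorem~\ref{thm:clique-partition}, for $k>4$ the count $Z^S_M(\Gamma^\tau_k)$ is $0$ when $S\in\badones(M,\tau)$ and equals $f_{\ell(M,S,\tau),|S|}(k)$ otherwise, so after tabulating $\ell(M,S,\tau)$, $E^0(S)$ and $E^1(S)$ for all $S\subseteq\{a,b,c,d\}$ the identity $Z_M(G_k)=\sum_{S}Z^S_M(\Gamma^\tau_k)\,Z_M^{W\mapsto S}(G)$ --- where $Z_M^{W\mapsto S}(G)$ counts the $M$-partitions of~$G$ sending one side of~$G$ into $E^0(S)$ and the other into $E^1(S)$ --- becomes
\[
  Z_M(G_k)\;=\;\sum_{0\le\ell<s\le 4} f_{\ell,s}(k)\,C_{\ell,s}(G),
\]
with each $C_{\ell,s}(G)$ a sum of such counts over the feasible~$S$. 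Since the functions $f_{\ell,s}$ are linearly independent (Lemma~\ref{lemma:lin-ind}), evaluating $Z_M(G_k)$ at $\binom{5}{2}=10$ values of~$k$ chosen as in that lemma, using the \nPartitions{M} oracle, and inverting the resulting full-rank matrix recovers every $C_{\ell,s}(G)$ in polynomial time. It then remains to exhibit one $(\ell,s)$ for which $C_{\ell,s}(G)$ is, up to a nonzero constant factor and polynomial-time computable corrections, the number of bipartite cliques of~$G$ --- or, failing a single hard term, a combination $\alpha\IS(G)+\beta\Clique(G)$ with $\alpha,\beta\ge 1$, to which Lemma~\ref{lemma:IS-clique} applies.

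I expect the last step to be the main obstacle, and the reason for the ``more details to track''. Since $M_{c,c}=M_{d,d}=1$, parts $c$ and~$d$ behave as clique parts, so the obvious attachments force an independent side of~$G$ entirely into $\{c,d\}$ and the matching coefficient collapses to something a bounded amount of information about~$G$ determines. The construction therefore has to be engineered so that the coefficient one uses keeps the ``independent region'' $\{a,b\}$ available to a side of~$G$, with the single off-diagonal forcing entry $M_{b,c}=1$ --- rather than a clique part --- performing the bipartite-clique selection. This couples the choice of which side of~$G$ carries the gadget (the gadget itself wanting parts $c$ and~$d$) to the shape of the extracted coefficient, and the feasible-$S$ bookkeeping then has to be carried out explicitly to confirm both that the chosen coefficient has the desired form and that its remaining terms are all partition functions of easy matrices. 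Once a clean coefficient is isolated, the reduction closes exactly as in Lemma~\ref{lem:threematrices}.
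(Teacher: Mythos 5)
Your overall strategy is the same as the paper's (attach a size-$k$ gadget, interpolate over $k$ using Lemma~\ref{lemma:lin-ind} to isolate one coefficient $Z_M^{W\mapsto S}(G)$, and show that coefficient encodes counting independent sets in a bipartite graph), and your diagnosis of why a bespoke argument is needed --- every proper principal submatrix is easy, so Theorem~\ref{thm:interpolation} cannot apply, and the naive attachment forces one side of~$G$ into the clique-like parts $\{c,d\}$ where the count collapses --- is accurate. But the proposal stops exactly where the proof actually begins: you name the obstacle and assert that the construction ``has to be engineered'' to keep $\{a,b\}$ available and to let $M_{b,c}=1$ do the selection, without exhibiting any construction that achieves this. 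That engineering is the entire content of the lemma, and it is not a routine variation of Lemma~\ref{lem:threematrices}: the paper's $G_k$ adds, besides the independent-set gadget $W$ joined to $V$, two auxiliary vertices $x_c,x_d$ (non-adjacent to each other, both joined to $V\cup W$, with only $x_c$ joined to $U$), turns $V$ into a clique, and replaces the $U$--$V$ edges by their bipartite complement. Extracting the coefficient of $f_{0,2}(k)$ (for $S=\{a,b\}$) then forces $U\to E^0(\{a,b\})=\{a,b,d\}$ and $V\cup\{x_c,x_d\}\to E^1(\{a,b\})=\{c,d\}$; since $x_c,x_d$ are non-adjacent and $M_{c,c}=M_{d,d}=1$, they land in distinct parts, and the two resulting cases are analysed separately.

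That case analysis is also substantive and absent from your proposal. In one case ($x_c\to c$, $x_d\to d$) the vertices of $U$ are pinned to $\{a,b\}$ and, because the $U$--$V$ edges were complemented, the constraint $M_{b,c}=1$ makes these partitions correspond bijectively to independent sets of the original~$G$. In the other case ($x_c\to d$, $x_d\to c$) the vertices of $U$ lie in $\{a,d\}$ with at most one in part~$d$, and this contribution is computed directly in polynomial time rather than being another hard term; your fallback of invoking Lemma~\ref{lemma:IS-clique} with $\alpha\IS(G)+\beta\Clique(G)$ plays no role here and would not obviously be available. So the proposal is a correct plan with the right target but a genuine gap: the gadget $G_k$ and the verification that the isolated coefficient equals (number of independent sets of~$G$) plus a polynomial-time-computable term are missing, and supplying them is where all the difficulty lies.
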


\begin{proof}
    Let $G$ be a bipartite graph with vertex bipartition $(U,V)$.  For
    any integer~$k>4$, construct $G_k$ from $G$ as follows.  The
    vertices of $G_k$ are $U\cup V\cup W\cup \{x_c, x_d\}$, where
    $|W|=k$ and $W$, $x_c$ and $x_d$ are new vertices.  The edges are
    as follows (see Figure~\ref{fig:Gk-construction}):
    \begin{itemize}
    \item every edge $(x,y)$ for $x\in \{x_c,x_d\}$, $y\in V\cup W$;
    \item every edge $(v,w)$ for $v\in V$, $w\in W$;
    \item every edge $(v,v')$ for distinct $v,v'\in V$;
    \item every edge $(x_c,u)$ for $u\in U$;
    \item every edge $(u,v)$ where $u\in U$, $v\in V$ and $(u,v)\notin
        E(G)$.
    \end{itemize}

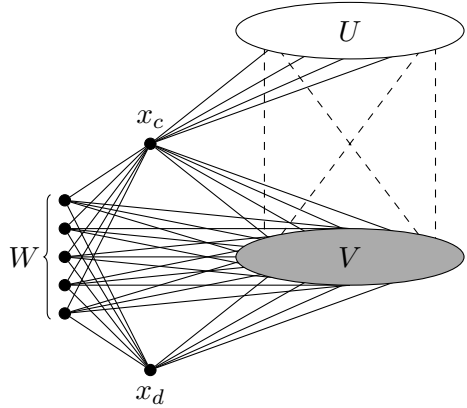
\begin{figure}[t]
\begin{center}
\begin{tikzpicture}[scale=0.75]
    \tikzstyle{vertex}=[fill=black, draw=black, circle, inner sep=1.5pt]

    \node[vertex] (xc) at (1.5,2) [label=90:{$x_c$}] {};
    \node[vertex] (xd) at (1.5,-2) [label=-90:{$x_d$}] {};
    \foreach \x in {1,...,5}
        \node[vertex] (w\x) at (0,-1.5+0.5*\x) {};  

    \draw[decorate,decoration=brace] (-0.25,-1.1) -- (-0.25,1.1);
    \draw (-0.25,0) node[anchor=east] {$W$};

    \foreach \x in {4,...,7} {
        \draw (xc) -- (\x,4);   
        \draw (xc) -- (\x,0);   
        \draw (xd) -- (\x,0);   
    }
    \foreach \x in {1,...,5}
        \draw (xc) -- (w\x) -- (xd);   

    \foreach \x in {1,...,5}
        \foreach \y in {0.5,0,-0.5}
            \draw (w\x) -- (5,\y);

    \draw[dashed] (3.5,4) -- (3.5,0) -- (6.5,4) -- (6.5,0) -- (3.5,4);

    \fill[color=white] (5,4) circle (2cm and 0.5cm); 
    \fill[color=black!33!white] (5,0) circle (2cm and 0.5cm); 

    \draw (5,4) circle (2cm and 0.5cm) node {$U$};
    \draw (5,0) circle (2cm and 0.5cm) node {$V$};
\end{tikzpicture}
\caption{The construction of $G_k$, used in the proof of
    Lemma~\ref{lem:hand-iii}, shown with $k=5$.  The dotted lines
    $U$--$V$ denote the complement of $G$'s edge relation between $U$
    and~$V$; the shading of $V$ indicates a clique on those vertices.}
\label{fig:Gk-construction}
\end{center}
\end{figure}

The subgraph induced on~$W$ is an independent set, so it is the same as $\Kkbar$.
We now apply
Theorem~\ref{thm:clique-partition} to all $2$-element sets $S\subseteq D$.
The outcomes for $k>4$ are:
\begin{center}
\begin{tabular}{c|cccccc}
    $S$             & $ab$ & $ac$ & $ad$ & $bc$ & $bd$ & $cd$ \\
\hline
    $\vphantom{\Big(}Z_{M}^S(\Kkbar)$ & $f_{0,2}(k)$ & $f_{1,2}(k)$ &
    $f_{1,2}(k)$ & $0$ & $f_{1,2}(k)$ & $0$\\
\end{tabular}
\end{center}
 By Theorem~\ref{thm:clique-partition},
no  set $S$ with $|S|\neq 2$ has $Z_M^S(\Kkbar) = f_{0,2}(k)$ as a
function of~$k$.

Now
 consider $S\subseteq D$
 so that $Z_M^S(\Kkbar) = f_{\ell,|S|}(k)$ for some $\ell<|S|$.
 Let $G_k-W$ denote the
 subgraph of~$G_k$ induced by all vertices other than those in~$W$.
 Let $Z_M^{W\mapsto S}(G)$ denote the
 number of $M$-partitions of~$G_k-W$
 in which every vertex of~$U$ is assigned a part
 in $E^0(S)$ and every other vertex is assigned a part in
 $E^1(S)$.
 As in the proof of Lemma~\ref{lem:threematrices}, each such
 $M$-partition of $G_k-W$ extends to $Z_M^S(\Kkbar)$ $M$-partitions
 of~$G_k$, so we can write
$$Z_{M}(G_k) =
 f_{0,2}(k)\, Z_{M}^{W\mapsto \{a,b\}}(G) \ \ + \!\!\!
 \sum_{\substack{0 \leq \ell < s \leq 4,\\
 (\ell,s) \neq (0,2)}}\
 \sum_{\substack{S \subseteq D,\\|S|=s}}
 \mathbf{1}_{Z_{M}^S(\Kkbar)=f_{\ell,s}(k)}\,
 f_{\ell,s}(k)\,
 Z_{M}^{W\mapsto S}(G)\,.$$

 As in the proof of Theorem~\ref{thm:interpolation},
 Lemma~\ref{lemma:lin-ind} guarantees that the
  $f_{\ell,s}(k)$ values are
    linearly independent.
So, by varying~$k$ and using an oracle for \nPartitions{M} to
compute the left-hand side, we can compute the
coefficient of $f_{0,2}(k)$, which is
$Z_{M}^{W\mapsto \{a,b\}}(G)$.
So to finish the proof, we just
need to show that computing
$Z_{M}^{W\mapsto \{a,b\}}(G)$ is \numP{}-hard.

 $Z_M^{W\mapsto \{a,b\}}(G)$ is the
 number of $M$-partitions of~$G_k-W$
 in which every vertex of~$U$ is assigned to a part
 in $E^0(\{a,b\})=\{a,b,d\}$ and every
 vertex in $V\cup \{x_c,x_d\}$
 is assigned to a part in
 $E^1(\{a,b\})=\{c,d\}$.

Note that the edges between vertices in~$V$ add no further restriction
on the parts assigned to vertices in~$V$, since $M|_{cd}$ contains no zeroes.
Vertices $x_c$ and $x_d$  are not adjacent, so one of them is assigned to part~$c$ and the other to
part~$d$.

In the first case, $x_c$ is assigned to part~$c$ and $x_d$~is assigned to part~$d$.
Vertices in~$U$ are adjacent to part~$c$ and not to part~$d$.
Since they are not adjacent to part~$d$, and we already know (from above)
that they are not assigned to part~$c$, each must be assigned to part~$a$ or~$b$.
So we have selected $M$-partitions in which vertices in~$U$ are
assigned to parts~$a$ or~$b$ and vertices in~$V$ are assigned
to parts~$c$ or~$d$.
This counts independent sets in~$G$, with
parts~$b$ and~$c$ corresponding to being in the independent set
(all edges between these parts must exist in~$G_k$,
which corresponds to an independent set in~$G$).

In the second case, $x_c$ is assigned to part~$d$ and $x_d$~is in part~$c$.
Vertices in~$U$ are adjacent to part~$d$ and not to part~$c$
so they can only be in parts $a$ and~$d$.
Since $U$ is an independent set and $M_{d,d}=1$, at most one of its vertices is in part~$d$.
We can count all such $M$-partitions in polynomial time by considering each possible vertex
$u\in U$ that might be assigned to part~$d$ and assigning the rest to part~$a$.
The vertex in part~$d$ restricts its non-neighbours in~$V$ to be assigned part~$c$.
The rest of the vertices in~$V$ can be assigned to either~$c$ or~$d$.

In conclusion, computing $Z_M(\cdot)$ enables us to compute
$Z_M^{W\mapsto \{a,b\}}(G)$. But computing $Z_M^{W\mapsto \{a,b\}}(G)$ enables us
to count independent sets of~$G$. Since
counting independent sets of a bipartite graph is \numP-hard, so is
counting $M$-partitions.
\end{proof}

\subsection{A matrix proved hard by solving simultaneous linear equations}
\label{sec:handproofs:linear}

Recall the definition of $T^{\pi,\tau}_{M,\ell,s}(G)$ from \eqref{eq:inner}.
In Section~\ref{sec:hard}, our gadgets were large cliques and
independent sets and we used interpolation on the number of vertices
in the gadget to compute $Z_{M'}(G)$ for some submatrix $M'$ such that
\nPartitions{M'} is \numP{}-complete.  Our final case is a matrix~$M$
where this technique only allows us to compute the linear combinations
$T^{\pi,\tau}_{M,\ell,s}(G) = \sum_i \alpha_i Z_{M_i}(G)$ where, although each subproblem
\nPartitions{M_i} is hard, we do not have enough independent linear
equations to compute any single term $Z_{M_i}(G)$.  The solution is to
use a similar gadget to generate an extra linear equation that allows
us to solve for a hard~$Z_{M'}$.

\begin{lemma}
\label{lemma:hand-iv}
    \nPartitions{M} is \numP{}-complete for
    \begin{equation*}
        M\;=\;\bordermatrix{
                & a & b & c & d  \cr
            a\; & 0 & * & * & *  \cr
            b\; & * & * & 0 & *  \cr
            c\; & * & 0 & * & 1  \cr
            d\; & * & * & 1 & *  }\,. 
    \end{equation*}
\end{lemma}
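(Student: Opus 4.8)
The plan is to give a polynomial-time Turing reduction from \nPartitions{M|_{abd}} to \nPartitions{M}. Since $M|_{abd}$ is a pure matrix having $\mtwo***0$ as a submatrix (after reindexing), it is hard by the dichotomy of Dyer and Greenhill~\cite{DG}, so this suffices. First I would classify the proper principal submatrices of~$M$ using \cite{DG,HHN}: the hard ones are $M|_{ab}$, $M|_{ac}$ and $M|_{ad}$ (all $\equiv\mtwo***0$) together with the three pairwise-inequivalent $3\times3$ submatrices $M|_{abc}$, $M|_{abd}$ and $M|_{acd}$, while $M|_{bc}$, $M|_{bd}$, $M|_{cd}$ and $M|_{bcd}$ are easy.

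Next I would use the disjoint-clique construction with $\pi=0$ and $\tau=1$, so that $\ourinst^{0,1}(k,G)$ is $G$ together with a disjoint $k$-clique. Here $E^0(\{a\})=E^0(\{b\})=D$, $E^0(\{c\})=\{a,b,c\}$ and $E^0(\{d\})=\{a,b,d\}$, and $\{a\}\in\badones(M,1)$ because $M_{a,a}=0$, so $\mathcal{S}(0,1,M,1)=\{\{b\},\{c\},\{d\}\}$ and $A^{0,1}_M(0,1)=\{M,M|_{abc},M|_{abd}\}$, each with multiplicity~$1$. By~\eqref{eq:inner},
\[ T^{0,1}_{M,0,1}(G)=Z_M(G)+Z_{M|_{abc}}(G)+Z_{M|_{abd}}(G). \]
One checks that no set $A^{\pi,\tau}_M(\ell,s)$ meets the hypothesis of Theorem~\ref{thm:interpolation}, so that theorem does not apply; running the interpolation of Section~\ref{sec:hard} on~$G$ lets us compute $T^{0,1}_{M,0,1}(G)$ and hence, with one oracle call for $Z_M(G)$, the value $p+q$, where $p=Z_{M|_{abc}}(G)$ and $q=Z_{M|_{abd}}(G)$. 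Both $M|_{abc}$ and $M|_{abd}$ are hard, so this is not yet enough.

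The crucial step is to obtain a second, independent linear equation in $p$ and~$q$ by running the same interpolation on the graph $G+K_2$, where $K_2$ is a single edge. As in the proof of Theorem~\ref{thm:interpolation}, an oracle for \nPartitions{M} lets us compute $T^{0,1}_{M,0,1}(H)$ for any graph~$H$; taking $H=G+K_2$, using $Z_{M'}(G+K_2)=Z_{M'}(G)\,Z_{M'}(K_2)$ together with the easily verified values $Z_M(K_2)=13$, $Z_{M|_{abc}}(K_2)=6$ and $Z_{M|_{abd}}(K_2)=8$, and using one more oracle call for $Z_M(G)$, we recover $6p+8q$. Since $\det\mtwo{1}{1}{6}{8}=2\neq0$, we solve to get $q=\tfrac12\bigl((6p+8q)-6(p+q)\bigr)=Z_{M|_{abd}}(G)$ in polynomial time with the oracle, which completes the reduction; hence \nPartitions{M} is \numP{}-complete.

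The main obstacle is recognising that the machinery of Section~\ref{sec:hard}, applied to~$G$ alone, invariably yields $Z_{M|_{abc}}(G)$ and $Z_{M|_{abd}}(G)$ with equal coefficients, so that no single hard term can be isolated; the remedy is to append a fixed gadget whose $M|_{abc}$- and $M|_{abd}$-partition counts differ. A disjoint edge works because $6\neq8$, whereas a disjoint vertex does not, since $M|_{abc}$ and $M|_{abd}$ both have three parts (giving a singular system). The remaining work — the submatrix classification, checking that Theorem~\ref{thm:interpolation} fails, and verifying the three small counts $Z_{M'}(K_2)$ — is routine.
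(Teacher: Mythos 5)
Your strategy is sound and is, at bottom, the same as the paper's for this matrix: the interpolation of Section~\ref{sec:hard} only yields a linear combination of two hard partition functions, so one manufactures a second, independent linear equation by re-running the interpolation on a modified input graph and then solves the $2\times 2$ system. The concrete choices differ throughout, and both sets of choices work. The paper uses $J^{1,0}(k,\cdot)$ and the term $T^{1,0}_{M,0,2}$, whose hard contributions are $Z_{M|_{abd}}$ and $Z_{M|_{ad}}$, and gets its second equation by adding an isolated vertex $x$ (which, because $\pi=1$, is joined to the gadget and whose placement then constrains $G$), producing coefficients $3$ and $2$. You use $J^{0,1}(k,\cdot)$ and $T^{0,1}_{M,0,1}$, whose hard contributions are $Z_{M|_{abc}}$ and $Z_{M|_{abd}}$, and get the second equation from a disjoint $K_2$, exploiting $Z_{M|_{abc}}(K_2)=6\neq 8=Z_{M|_{abd}}(K_2)$. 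Your computations of $E^0(\{b\})$, $E^0(\{c\})$, $E^0(\{d\})$, of $\badones(M,1)$, of the hard/easy classification of the proper principal submatrices, and of the three counts on $K_2$ are all correct, and the determinant $2\neq 0$ does let you isolate $Z_{M|_{abd}}(G)$.

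There is one genuine error, though it is easily repaired. The identity $Z_{M'}(G+K_2)=Z_{M'}(G)\,Z_{M'}(K_2)$ holds only when $M'$ contains no $1$s: for a disjoint union, every cross-component pair is a non-edge and so must be mapped to an entry in $\{0,*\}$, which is automatic precisely for $1$-free matrices. This is fine for $M|_{abc}$ and $M|_{abd}$, but not for $M$ itself, which has $M_{c,d}=1$; a vertex of $G$ in part $c$ and a vertex of the added $K_2$ in part $d$ are forbidden, so $Z_M(G+K_2)$ is in general strictly less than $13\,Z_M(G)$, and your proposed use of ``$Z_M(K_2)=13$ together with one more oracle call for $Z_M(G)$'' computes the wrong quantity. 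The fix is immediate: obtain $Z_M(G+K_2)$ by a single direct oracle call on the graph $G+K_2$ and subtract it from $T^{0,1}_{M,0,1}(G+K_2)$ to recover $6p+8q$. With that change the reduction from \nPartitions{M|_{abd}} to \nPartitions{M} is complete.
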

\begin{proof}
    We show how to reduce \nPartitions{M|_{abd}} to \nPartitions{M}.
    The matrix~$M|_{abd}$ is hard by Hell, Hermann and Nevisi's
    characterisation of the hard $3\times 3$ matrices~\cite{HHN}: the
    principal submatrix $M|_{ad} = \mtwo0***$ is hard.

    First, consider $J^{1,0}(k,G)$
    and the set of $M$-partitions in which
    the vertices of the $\Kkbar$ appear in exactly two parts,
    corresponding to the term~$T^{1,0}_{M,0,2}(G)$.  Note that every pair of parts
    is possible, except for $\{c,d\}$:
    \begin{center}
    \begin{tabular}{l|ccccc}
        $S$         & $ab$ & $ac$ & $ad$  & $bc$ & $bd$  \\
        \hline
        $\vphantom{\Big(}E^1(S)$    & $bd$ & $cd$ & $bcd$ & $ad$ & $abd$ \\
        $M|_{E^1(S)}$ & easy & easy & easy  & hard & hard
    \end{tabular}
    \end{center}
    Thus, noting that $M|_{ab} = M|_{ad} = \mtwo0***$, there is a
    polynomial-time-computable function $p(G)$ (corresponding to the
    ``easy'' entries of the above table) such that
    \begin{equation}
    \label{eq:normalgadget}
        T^{1,0}_{M,0,2}(G) = p(G) + Z_{M|_{abd}}(G) + Z_{M|_{ad}}(G)\,.
    \end{equation}

    Second, consider $J^{1,0}(k, G+x)$ where $G+x$ denotes the union
    of $G$ and a new isolated vertex~$x$.  We again consider
    $M$-partitions of this graph in which vertices of the~$\Kkbar$
    appear in exactly two parts and we divide up these partitions
    according to the part in which the new vertex~$x$ appears.  If the
    vertices of the~$\Kkbar$ are in parts $S\subset D$ and $x$~is in
    part~$i$, then the vertices of~$G$ must be in some subset of the
    parts $P(i,S) = E^0(\{i\})\cap E^1(S)$.  The possible
    combinations are as follows.

    \begin{center}
    \small
    \begin{tabular}{l|cc|ccc|cc|ccccc}
        $i$ & \multicolumn{2}{|c|}{$a$} & \multicolumn{3}{|c|}{$b$} & \multicolumn{2}{|c|}{$c$} & \multicolumn{5}{|c}{$d$} \\
        $S$ &       $bc$   & $bd$  & $ab$ & $ad$  & $bd$  & $ac$ & $ad$ & $ab$ & $ac$ & $ad$ & $bc$ & $bd$  \\
        \hline
        $P(i,S)$    & $ad$ & $abd$ & $bd$ & $bcd$ & $abd$ & $c$  & $bc$ & $bd$ & $d$  & $bd$ & $ad$ & $abd$ \\
        $M|_{P(i,S)}$ & hard & hard  & easy & easy  & hard  & easy & easy & easy & easy & easy & hard & hard
    \end{tabular}
    \end{center}

    This gives a second equation,
    \begin{equation}
    \label{eq:specialgadget}
        T^{1,0}_{M,0,2}(G+x) = p'(G) + 3Z_{M|_{abd}}(G) + 2Z_{M|_{ad}}(G)\,,
    \end{equation}
    where, again, $p'(G)$ is a polynomial-time computable function.

    As in the proof of Theorem~\ref{thm:interpolation}, we can compute
    $T^{1,0}_{M,0,2}(G)$ and~$T^{1,0}_{M,0,2}(G+x)$ by interpolations on~$k$, using an oracle
    for \nPartitions{M}.  Thus, we can solve \eqref{eq:normalgadget}
    and~\eqref{eq:specialgadget} for $Z_{M|_{abd}}(G)$ (and
    $Z_{M|_{ad}}(G)$, which is also \numP{}-hard), completing the
    reduction.
\end{proof}

\section{The dichotomy for $4\times 4$ matrices}
\label{sec:conjecture}

Finally, we establish Theorem~\ref{thm:dichotomy} and show that
Conjecture~\ref{conjecturetoquote} holds for $4\times 4$ matrices.

\begin{theorem}
    Let $M$ be a symmetric matrix in $\{0,1,*\}^{4\times 4}\!$.  Then
    \nPartitions{M} is \numP{}-complete if $M$~has a
    derectangularising sequence, and is in \FP{}, otherwise.
\end{theorem}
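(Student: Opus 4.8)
The plan is to prove the two implications of the statement separately and then read off Theorem~\ref{thm:dichotomy}. First I would dispose of the degenerate case $\FP=\numP$: then \nPartitions{M} is in \FP{} for every~$M$, and is also \numP{}-complete since every problem in \numP{} is trivially \numP{}-hard when $\FP=\numP$, so the statement holds vacuously; for the remainder I would assume $\FP\neq\numP$. For the ``in \FP{} otherwise'' direction, if $M$ has no derectangularising sequence then \nListPartitions{M} is in \FP{} by Theorem~\ref{thm:list}; since \nPartitions{M} is the special case of \nListPartitions{M} in which every vertex of the input carries the list~$D$, \nPartitions{M} is in \FP{} as well.

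For the ``\numP{}-complete if $M$ has a derectangularising sequence'' direction, membership in \numP{} is immediate (guess a function $\sigma\colon V(G)\to D$ and check in polynomial time that it is an $M$-partition), so the real work is the hardness proof. It suffices to treat one matrix from each $\approx$-equivalence class, since $M\approx M'$ implies that \nPartitions{M} and \nPartitions{M'} have the same complexity. The engine of the argument is the classification procedure of Section~\ref{sec:computer}: applied to the lexicographically least representative of each $\approx$-class, it partitions all symmetric $4\times4$ matrices into three groups --- those it certifies easy, those it certifies hard, and the six classes it leaves unresolved. The crucial observation I would make is that ``certified easy'' is \emph{equivalent} to ``has no derectangularising sequence'', not merely sufficient for tractability: the procedure declares $M$ easy only when $M$ is pure with no $2\times2$ submatrix containing exactly three~$*$s --- equivalently, by the Bulatov--Dalmau characterisation recalled in Section~\ref{sec:small}, when $H^M_{D,D}$ is $*$-rectangular, hence when $M$ has no derectangularising sequence --- or when the test of Lemma~\ref{lemma:doubletons} succeeds, which directly certifies the absence of such a sequence. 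Consequently a matrix possessing a derectangularising sequence is never certified easy, so it falls into one of the other two groups: if it is certified hard then \nPartitions{M} is \numP{}-complete by the correctness of Step~\ref{stepthree}, which rests on Theorem~\ref{thm:interpolation}; and the six unresolved classes are each shown \numP{}-complete by the lemmas of Section~\ref{sec:handproofs}. In every case, \nPartitions{M} is \numP{}-complete.

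Having proved both directions, Theorem~\ref{thm:dichotomy} follows at once, since every symmetric matrix in $\{0,1,*\}^{4\times4}$ either has a derectangularising sequence or does not. As a consistency check I would also note that, by the direction just established together with Theorem~\ref{thm:list} and the assumption $\FP\neq\numP$, each of the six hand-resolved matrices must in fact possess a derectangularising sequence --- otherwise it would be in \FP{}, contradicting its \numP{}-completeness --- so the easy/hard split and the ``no sequence''/``has sequence'' split coincide exactly, which is what Conjecture~\ref{conjecturetoquote} predicts in this case.

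The step I expect to be the main obstacle is not any single computation but the bookkeeping that makes the reduction watertight: verifying that the procedure's ``easy'' verdict is logically \emph{equivalent} to the absence of a derectangularising sequence rather than only implying membership in \FP{}; verifying that the restriction to one representative per $\approx$-class is sound; and confirming that the exceptional set handled in Section~\ref{sec:handproofs} is precisely the set the program leaves open. All of this machinery has been assembled in the preceding sections, so the proof proper is short; the care lies in the verification rather than in any new argument.
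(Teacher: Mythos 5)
Your proposal is correct and rests on the same machinery as the paper (Theorem~\ref{thm:list}, the program's classification, Theorem~\ref{thm:interpolation}, and the hand proofs of Section~\ref{sec:handproofs}), but the logical organisation of the hardness direction is genuinely different. The paper verifies the biconditional class by class and, in particular, \emph{exhibits} a derectangularising sequence for every matrix it proves hard: for the matrices resolved by Theorem~\ref{thm:interpolation} it extracts one from the hard $2\times2$ or $3\times3$ principal submatrix found by the program (invoking Lemma~\ref{lemma:HHNrephrased} for the $3\times3$ case), and for the six exceptional matrices it checks directly that $\{a,b\},\{c,d\}$ is derectangularising. You instead argue by elimination: a matrix with a derectangularising sequence cannot be certified easy (since both easy certificates imply the absence of such a sequence), so it lands in a class already proved \numP{}-complete. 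This buys you a shorter argument --- you never need Lemma~\ref{lemma:HHNrephrased} or the explicit sequence check for the six exceptions, and the \FP{} direction comes straight from Theorem~\ref{thm:list} without reference to the program at all. Two small points to tighten. First, you claim the ``easy'' verdict is \emph{equivalent} to the absence of a derectangularising sequence, but you only justify (and only need) the forward implication; the claimed equivalence is harmless but unproved as stated. Second, in the pure case your step ``$H^M_{D,D}$ is $*$-rectangular, hence $M$ has no derectangularising sequence'' is not immediate: rectangularity of the single relation $H^M_{D,D}$ does not by itself rule out a non-rectangular composition $H^M_{D_1,D_2}\circ\cdots\circ H^M_{D_{k-1},D_k}$ over proper subsets (every sequence is purifying when $M$ is pure). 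It does follow, under your standing assumption $\FP\neq\numP$, from the tractability of the corresponding list-homomorphism problem together with Theorem~\ref{thm:list}, or from a direct structural argument about disjoint unions of reflexive cliques and irreflexive complete bipartite graphs; one of these should be said explicitly, much as the paper appeals to the known validity of the conjecture for pure matrices in Section~\ref{sec:small}.
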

\begin{proof}
The conjecture is already know to hold for pure matrices (see Section~\ref{sec:small}).

    The impure matrices covered by Lemma~\ref{lemma:doubletons} have
    no derectangularising sequence, so are easy by
    Theorem~\ref{thm:list}.

    For the matrices proved hard via Theorem~\ref{thm:interpolation},
    the computer program finds a hard principal submatrix of size
    either $2\times 2$ or $3\times 3$.  If the $2\times 2$ submatrix
    $M|_S$ is hard, then $S,S$ is a derectangularising sequence; if
    the $3\times 3$ submatrix $M|_T$ is hard then, by
    Lemma~\ref{lemma:HHNrephrased}, $M|_T$ has a derectangularising
    sequence, and this is also derectangularising for~$M$.

    For each of the six matrices proved hard in
    Section~\ref{sec:handproofs},
    it is easy to check that $\{a,b\},\{c,d\}$  is a
    derectangularising sequence.
\end{proof}

\section{Acknowledgements}

We thank the referees for useful suggestions.
We thank John Lapinskas for spotting
an error in an earlier proof of
 Lemma~\ref{lemma:lin-ind} and the journal staff, who were willing to make the changes
 at a late stage.

\bibliographystyle{plain}

\bibliography{\jobname}{}

\end{document}